\newtheorem{theorem}{Theorem}
\newtheorem{corollary}[theorem]{Corollary}
\newtheorem{lemma}[theorem]{Lemma}
\newtheorem{myclaim}{Claim}
\theoremstyle{definition}
\newtheorem{definition}[theorem]{Definition}
\title{Obstructions to faster diameter computation: Asteroidal sets
\footnote{This work was supported by a grant of the Romanian Ministry of Research, Innovation and Digitalization, CCCDI - UEFISCDI, project number PN-III-P2-2.1-PED-2021-2142, within PNCDI III.
Results of this paper were partially presented at the IPEC’22 conference~\cite{Duc22}.}}
\author[1,2]{Guillaume Ducoffe}
\affil[1]{\small National Institute for Research and Development in Informatics, Romania}
\affil[2]{\small University of Bucharest, Romania}
\date{}
\begin{document}

\maketitle

\begin{abstract}
{
We propose a novel algorithm for computing the diameter, and approximating all the vertex eccentricities, of an arbitrary graph.
Its runtime outperforms the textbook ${\cal O}(nm)$-time algorithm on $n$-vertex $m$-edge graphs for several superclasses of interval graphs, and even AT-free graphs.
Doing so, we bring interesting new additions to the ``zoo'' of easy graph classes for diameter computation, while we further expand the parameterized complexity of this problem beyond clique-width, tree-width and their relatives.

More specifically, an {\em extremity} is a vertex such that the removal of its closed neighbourhood does not increase the number of connected components.
Let $Ext_{\alpha}$ be the class of all connected graphs whose {\em quotient graph} obtained from modular decomposition contains no more than $\alpha$ pairwise nonadjacent extremities.
The first levels of this increasing hierarchy of graphs already contain well studied classes such as AT-free graphs and {\em dominating pair} graphs of diameter at least six.
Some geometric intersection graphs such as polygon graphs (generalizing the permutation graphs) and chordal graphs of bounded leafage (generalizing the interval graphs) also belong to $Ext_{\alpha}$ for some bounded $\alpha$.
To the best of our knowledge, before our work there was no known deterministic subquadratic-time algorithm for computing the diameter within all these graphs.

Our main contributions are as follows. First, we prove that the diameter of every $m$-edge graph in $Ext_{\alpha}$ can be computed in deterministic ${\cal O}(\alpha^3 m^{3/2})$ time. We then improve our runtime analysis to ${\cal O}(\alpha^3m)$ for bipartite graphs, to ${\cal O}(\alpha^5m)$ for triangle-free graphs, ${\cal O}(\alpha^3\Delta m)$ for graphs with maximum degree $\Delta$, and more generally to linear for all graphs with bounded clique-number. Furthermore, we can compute an additive $+1$-approximation of all vertex eccentricities in deterministic ${\cal O}(\alpha^2 m)$ time.
This is in sharp contrast with general $m$-edge graphs for which, under the Strong Exponential Time Hypothesis (SETH), one cannot compute the diameter in ${\cal O}(m^{2-\epsilon})$ time for any $\epsilon > 0$.

As an important special case of our main result, we derive an ${\cal O}(k^3m^{3/2})$-time algorithm for exact diameter computation within graphs of {\em asteroidal number} at most $k$. Doing so, we considerably extend prior works on exact and approximate diameter computation within AT-free graphs. 
Our techniques are more involved than in these previous works.
Furthermore, our approach is purely combinatorial, that differs from most prior recent works in this area which have relied on geometric primitives such as Voronoi diagrams or range queries.
On our way, we uncover interesting connections between the diameter problem, LexBFS, graph extremities and asteroidal sets. 
We end up presenting an improved algorithm for chordal graphs of bounded asteroidal number, and a partial extension of our results to the class of all graphs with a {\em dominating target} of bounded cardinality.

Our time upper bounds in the paper are shown to be essentially optimal under plausible complexity assumptions.
}
\end{abstract}

\newpage

\section{Introduction}\label{sec:intro}

For any undefined graph terminology, see~\cite{BoM08}.
All graphs considered in this paper are undirected, simple ({\it i.e.}, without loops nor multiple edges) and connected, unless stated otherwise.
Given a graph $G=(V,E)$, let $n = |V|$ and $m = |E|$.
For every vertices $u$ and $v$, let $d_G(u,v)$ be their distance (minimum number of edges on a $uv$-path) in $G$.
Let $e_G(u)$ denote the eccentricity of vertex $u$ (maximum distance to any other vertex).
We sometimes omit the subscript if the graph $G$ is clear from the context.
Finally, let $diam(G) = \max_{u,v \in V} d(u,v) = \max_{u \in V} e(u)$ be the diameter of $G$.

Computing the diameter is important in a variety of network applications such as in order to estimate the maximum latency in communication systems~\cite{DeR94} or to identify the peripheral nodes in some complex networks with a core/periphery structure~\cite{KLPR+05}.
On $n$-vertex $m$-edge graphs, this can be done in ${\cal O}(nm)$ time by running a BFS from every vertex.
This runtime is quadratic in the number $m$ of edges, even for sparse graphs (with $m \leq c \cdot n$ edges for some $c$), and therefore it is too prohibitive for large graphs with millions of vertices and sometimes billions of edges. 
Using Seidel's algorithm~\cite{Sei95}, the diameter of any $n$-vertex graph can be also computed in ${\cal O}(n^{\omega+o(1)})$ time, with $\omega$ the square matrix multiplication exponent.
If $\omega = 2$, then this is almost linear-time for dense graphs, with $m \geq c \cdot n^2$ edges for some constant $c$ (currently, it is only known that $\omega < 2.37286$~\cite{AlV21}).
However, for sparse graphs, this is still in $\Omega(m^2)$.
In~\cite{RoV13}, Roditty and Vassilevska Williams proved that assuming the Strong Exponential-Time Hypothesis of Impagliazzo, Paturi and Zane~\cite{ImP01}, the diameter of $n$-vertex graphs with $n^{1+o(1)}$ edges {\em cannot} be computed in ${\cal O}(n^{2-\epsilon})$ time, for any $\epsilon > 0$.
Therefore, breaking the quadratic barrier for diameter computation is likely to require additional graph structure, even for sparse graphs.
In this paper, we make progress in this direction.

Let us call an algorithm truly subquadratic if it runs in ${\cal O}(m^{2-\epsilon})$ time on $m$-edge graphs, for some fixed positive $\epsilon$. Over the last decades, the existence of a truly subquadratic (often linear-time) algorithm for the diameter problem was proved for many important graph classes~\cite{AVW16,BCD98,BHM20,Cab18,CDV02,CDDH+01,CDP19,Dra93b,DDG21,DrN00,DNB96,Duc21c,Duc19,Duc21b,Duc21d,Duc21a,DHV20,FaP80,GKHM+18,Ola90}.
This has culminated in some interesting connections between faster diameter computation algorithms and Computational Geometry, {\it e.g.}, see the use of Voronoi diagrams for computing the diameter of planar graphs~\cite{Cab18,GKHM+18}, and of data structures for range queries in order to compute all eccentricities within bounded treewidth graphs~\cite{AVW16,BHM20,CaK09}, bounded clique-width graphs~\cite{Duc21a}, or even proper minor-closed graph classes~\cite{DHV20}.
However, this type of geometric approach usually works only if certain Helly-type properties hold for the graph classes considered~\cite{BDS87,BaP99,BoT15,CEV07,Duc21b,DrD21,DHV20}.
Beyond that, the finer-grained complexity of the diameter problem is much less understood, with only a few graph classes for which truly subquadratic algorithms are known~\cite{BCT17}.
The premise of this paper is that the {\em asteroidal number} could help in finding several new positive cases for diameter computation.

\paragraph*{Related work.}
Recall that an independent set in a graph $G$ is a set of pairwise non-adjacent vertices.
An asteroidal set in a graph $G$ is an independent set $A$ with the additional property that, for every vertex $a \in A$, there exists a path between any two remaining vertices of $A \setminus \{a\}$ that does not contain $a$ nor any of its neighbours in $G$.
Let the asteroidal number of $G$ be the largest cardinality of its asteroidal sets. 
The graphs of asteroidal number at most two are sometimes called {\em AT-free graphs}, and they generalize interval graphs, permutation graphs and co-comparability graphs amongst other subclasses~\cite{COS97}. It is worth mentioning here that all the aforementioned subclasses have unbounded treewidth and clique-width.
The properties of AT-free graphs have been thoroughly studied in the literature~\cite{Bei18,BeB01,BFLM15,BKKM99,COS99,COS95,COS97,FMPR04,GHKL+09,GPvL,GoH17,HeK02,KrM12,KMT08,Sta10}, and some of these properties were generalized to the graphs of bounded asteroidal number~\cite{CoS15,KKM97,KKM01}.

In particular, as far as we are concerned here, there is a simple linear-time algorithm for computing a vertex in any AT-free graph whose eccentricity is within one of the diameter~\cite{CDDH+01}.
However, it has been only recently that a truly subquadratic algorithm for {\em exact} diameter computation within this class was presented~\cite{Duc21c}.
This algorithm runs in deterministic ${\cal O}(m^{3/2})$ time on $m$-edge AT-free graphs, and it is combinatorial -- that means, roughly, it does not rely on fast matrix multiplication algorithms or other algebraic techniques.
In fact both algorithms from~\cite{CDDH+01} and~\cite{Duc21c} are based on specific properties of \emph{LexBFS orderings} for the AT-free graphs\footnote{It was also shown in~\cite{CDDH+01} that there exist AT-free graphs such that a multi-sweep LexBFS fails in computing their diameter. Therefore, we need to further process the LexBFS orderings of AT-free graphs, resp. of graphs of bounded asteroidal number, to output their diameter.}.
{
Roughly, the algorithm from~\cite{Duc21c} starts computing a dominating shortest path.
In doing so, the search for a diametral vertex can be restricted to the closed neighbourhood of any one end of this path.
However, in general this neighbourhood might be very large.
The key procedure of the algorithm consists in further pruning out the neighbourhood so that it reduces to a clique.
Then, we are done executing a BFS from every vertex in this clique. 
Both the computation of a dominating shortest path and the pruning procedure of the algorithm are taking advantage of the existence of a linear structure for AT-free graphs, that can be efficiently uncovered by using a double-sweep LexBFS~\cite{COS99}.
Unfortunately, this linear structure no more exists for graphs of asteroidal number $\geq 3$.
}

\paragraph*{Contributions.}
The structure of graphs of bounded asteroidal number, and its relation to LexBFS, is much less understood than for AT-free graphs.
Therefore, extending the known results for the diameter problem on AT-free graphs to the more general case of graphs of bounded asteroidal number is quite challenging.
Doing just that is our main contribution in the paper.
{
In fact, we prove even more strongly that only some types of large asteroidal sets need to be excluded in order to obtain a faster diameter computation algorithm.

More specifically, an {\em extremity} is a vertex such that the removal of its closed neighbourhood leaves the graph connected, see~\cite{KKM01,KrS06}.
Note that every subset of pairwise nonadjacent extremities forms an asteroidal set.
A module in a graph $G=(V,E)$ is a subset of vertices $X$ such that every vertex of $V \setminus X$ is either adjacent to every of $X$ or nonadjacent to every of $X$.
It is a strong module if it does not overlap any other module of $G$.
Finally, the {\em quotient graph} of $G$ is the induced subgraph obtained by keeping one vertex in every inclusionwise maximal strict subset of $V$ which is a strong module of $G$ (see also Sec.~\ref{sec:prelim} for a more detailed discussion about the modular decomposition of a graph).
It is known that except in a few degenerate cases, the diameter of $G$ always equals the diameter of its quotient graph~\cite{CDP19}.
{\em\bf We are interested in the maximum number of pairwise nonadjacent extremities in the quotient graph}, that according to the above is always a lower bound for the asteroidal number. See~\cite[Fig. $1$]{KKM01} for an example where it is smaller than the asteroidal number.

Throughout the paper, let $Ext_{\alpha}$ denote the class of all graphs whose quotient graph contains no more than $\alpha$ pairwise nonadjacent extremities.

\begin{theorem}\label{thm:main}
For every graph $G=(V,E)\in Ext_{\alpha}$, we can compute estimates $\bar{e}(u), \ u \in V$, in deterministic ${\cal O}(\alpha^2m)$ time so that $e(u) \geq \bar{e}(u) \geq e(u) -1$ for every vertex $u$.
In particular, we can compute a vertex whose eccentricity is within one of the diameter.
Moreover, the exact diameter of $G$ can be computed in deterministic ${\cal O}(\alpha^3m^{3/2})$ time.
\end{theorem}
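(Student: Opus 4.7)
\medskip
\noindent
The plan is to first reduce to the quotient graph via modular decomposition, invoking the fact cited from \cite{CDP19} that the diameter is preserved in almost all cases, so we may assume from here on that $G$ itself contains at most $\alpha$ pairwise nonadjacent extremities. The next step is to build a small ``anchor set'' $S=\{x_1,\dots,x_t\}$ of pairwise nonadjacent extremities. I would do this iteratively: a LexBFS always ends at an extremity (a classical fact), so its last vertex yields $x_1$; for each subsequent $i$, I would run a LexBFS tailored to look for an extremity nonadjacent to $\{x_1,\dots,x_{i-1}\}$, for instance by starting the search from a neighbour of the last anchor. By the $Ext_{\alpha}$ hypothesis the process must terminate with $t\le\alpha$, and the overall construction should cost $\mathcal{O}(\alpha m)$ or $\mathcal{O}(\alpha^2m)$ once the restriction bookkeeping is included.

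\medskip
\noindent
For the additive $+1$ approximation, I would BFS from every $x_i$ and define $\bar e(v):=\max_{i\le t}d(v,x_i)$, which satisfies $\bar e(v)\le e(v)$ trivially. The nontrivial direction $e(v)\le\bar e(v)+1$ is the heart of the argument: if some $w$ realizes $e(v)=d(v,w)\ge\bar e(v)+2$, I would show that $w$ together with (a subset of) the $x_i$'s form an asteroidal configuration around $v$ violating either the maximality of $S$ or the $Ext_{\alpha}$ property — essentially, the anchors should ``cover'' all directions from $v$ that could host a far vertex, modulo an additive slack of one. Running $t=\mathcal{O}(\alpha)$ BFSes costs $\mathcal{O}(\alpha m)$; together with the $\mathcal{O}(\alpha^2 m)$-time anchor construction this yields the first claim of the theorem.

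\medskip
\noindent
For the exact diameter in $\mathcal{O}(\alpha^3m^{3/2})$ time I would apply a high/low-degree dichotomy. Vertices of degree $\ge\sqrt{m}$ are at most $2\sqrt{m}$ in number, and BFSes from all of them contribute $\mathcal{O}(m^{3/2})$. It remains to identify, among low-degree vertices, all candidates that can realize the diameter. Here I would use the $+1$ estimates above: a diametral endpoint $v$ satisfies $e(v)\in\{\bar e(v),\bar e(v)+1\}$, and its ``partner'' $w$ must lie in (or very near) the closed neighbourhood of some anchor $x_i$. Within each such neighbourhood I would apply a LexBFS-based pruning procedure, in the spirit of the AT-free algorithm of~\cite{Duc21c}, to keep only $\mathcal{O}(\sqrt m)$ candidate endpoints; BFS from each candidate over all $\alpha$ anchors totals $\mathcal{O}(\alpha^2 m^{3/2})$, with one additional $\alpha$ factor absorbed by the pruning.

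\medskip
\noindent
The hard part will be the pruning step inside $N[x_i]$. For AT-free graphs one leverages a \emph{dominating shortest path} and the strong linear structure uncovered by a double-sweep LexBFS, which allows $N[x_i]$ to be pruned all the way down to a clique. That linear structure is absent once $\alpha\ge 3$, so I expect the replacement object to be a \emph{dominating target} of size $\mathcal{O}(\alpha)$, and the pruning to rely on a finer LexBFS ordering that interacts carefully with the extremity definition (removal of $N[x_i]$ leaves the graph connected) and with the bound on pairwise nonadjacent extremities; showing that $\mathcal{O}(\sqrt m)$ candidates suffice in the absence of a clique structure is the main technical obstacle I anticipate.
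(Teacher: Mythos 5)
Your skeleton (quotient-graph reduction, a LexBFS-grown anchor set of pairwise nonadjacent extremities, a $+1$ stage followed by an exact stage) matches the paper's, but both halves have a gap at the decisive step. For the $+1$ approximation, setting $\bar e(v):=\max_i d(v,x_i)$ and running BFS only from the anchors is sound \emph{only if} $\{x_1,\dots,x_t\}$ is a dominating target, i.e.\ a \emph{maximal} independent set of extremities (then Lemma~\ref{lem:ecc-dom-target} gives $F(v)\cap N[\{x_1,\dots,x_t\}]\neq\emptyset$ and hence the $-1$ slack). But the iterative construction you describe cannot certify maximality without, in effect, finding all extremities, which the paper explicitly identifies as not known to be doable in truly subquadratic time. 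The paper circumvents this: it links each new extremity to a near-central vertex $c$ (the midpoint of a double-sweep LexBFS) by a shortest path, stops as soon as the \emph{union of paths} $H$ is a dominating set (so $t\le\alpha$, but the extremity set need not be maximal), and then runs BFS from the $O(\alpha)$ vertices nearest the tip of each path --- $O(\alpha^2)$ sources in all, whence the $O(\alpha^2 m)$ bound. The reason only the tips matter is a hyperbolicity argument (every graph in $Ext_{\alpha}$ is $(3\alpha-1)$-hyperbolic, Lemma~\ref{lem:hyp-ext-alpha}), not the asteroidal-configuration argument you sketch; the correct witness is a neighbour of a farthest vertex lying somewhere on $H$, which need not be an anchor.

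For the exact algorithm, your plan to prune each $N[x_i]$ down to $O(\sqrt m)$ candidate endpoints ``in the spirit of the AT-free algorithm'' is precisely the step the paper reports it could \emph{not} carry out once $\alpha\ge 3$, and the high/low-degree dichotomy plays no role in the actual proof. The paper's substitute is dual to pruning: for each of the $O(\alpha^2)$ path-tip vertices $u$ it computes $\max\{e(x)\mid x\in N[u]\}$ by iterating over extremities at maximum distance \emph{from} $u$ (Lemma~\ref{lem:search-diam}), after each iteration discarding a chunk of $F(u)$ via carefully built $u$-transitive sets. The $m^{3/2}$ factor then comes from the bound $q\le\alpha\cdot\chi(G)=O(\alpha\sqrt m)$ on the total number of extremities of a prime graph in $Ext_{\alpha}$ (Lemma~\ref{lem:num-extrem}), not from vertex degrees. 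Without either the dominating-target maximality in the first half or a working replacement for the pruning in the second, the two correctness claims at the heart of your proposal are unsupported as written.
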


Let us now sketch the main lines of our approach toward proving Theorem~\ref{thm:main}.
First, we replace the input graph by its quotient graph, that can be done in linear time~\cite{TCHP08}.
Then, we compute ${\cal O}(\alpha)$ shortest paths with one common end-vertex $c$, the union of which is a dominating set (to be compared with the dominating shortest path computed in~\cite{Duc21c} for the AT-free graphs).
For that, we prove interesting new relations between graph extremities and LexBFS, but only for graphs that are prime for modular decomposition (this is why we need to consider the quotient graph).
{Roughly, our algorithm computes ${\cal O}(\alpha)$ pairwise nonadjacent extremities, {\it i.e.}, the other end-vertices of the shortest-paths than $c$, by repeatedly executing a modified LexBFS. We stress that our procedure is more complicated than a multi-sweep LexBFS due to the need to avoid getting stuck between two mutually distant extremities.}
In doing so, the search for a diametral vertex can be now restricted to the closed neighbourhoods of only ${\cal O}(\alpha^2)$ vertices (namely, to the ${\cal O}(\alpha)$ furthest vertices from $c$ on every shortest path).
However, unlike what has been done in~\cite{Duc21c} for AT-free graphs, we failed in further pruning out each neighbourhood to a clique.
Instead, we present a new procedure which given a vertex $u$ outputs a vertex in its closed neighbourhood of maximum eccentricity.
This is done by iterating on some extremities at maximum distance from vertex $u$.
Therefore, a key to our analyses in this paper is the number of extremities in a graph.
We provide several bounds on this number.
In doing so, our runtime for exact diameter computation can be improved to ${\cal O}(\alpha^3m)$ time for the bipartite graphs, and more generally to linear time for every graph in $Ext_{\alpha}$ of constant clique number.
We present some more alternative time bounds for our Theorem~\ref{thm:main} in Sec.~\ref{sec:exact}.

\medskip
It is worth noticing that we need {\em not} provide the value $\alpha$ such that the input graph $G$ belongs to $Ext_{\alpha}$.
This is explained in Appendix~\ref{app:compute-alpha}.
In particular, our algorithm is correct for an arbitrary graph $G$.
However, its subquadratic runtime is no more guaranteed.
Our runtime analysis in the paper shows that it mostly depends on two properties, namely: the {\em Gromov hyperbolicity} of the input graph $G$ (see Sec.~\ref{sec:hyp-extrem}), and the number of extremities that are at maximum distance from an arbitrary vertex of $G$.
Many real-life complex networks have a bounded Gromov hyperbolicity~\cite{AbD16}.
It would be interesting to study whether these graphs also have few (pairwise nonadjacent) extremities. 
}

\paragraph*{Matching (Conditional) Lower bounds.}
The algorithm of Theorem~\ref{thm:main} is combinatorial.
In~\cite{Duc21c}, the classic problem of detecting a simplicial vertex within an $n$-vertex graph is reduced in ${\cal O}(n^2)$ time to the diameter problem on ${\cal O}(n)$-vertex AT-free graphs.
The best known combinatorial algorithm for detecting a simplicial vertex in an $n$-vertex graph runs in ${\cal O}(n^3)$ time. 
In the same way, in~\cite{CDDH+01}, Corneil et al. proved an equivalence between the problem of deciding whether an AT-free graph has diameter at most two and a disjoint sets problem which has been recently studied under the name of high-dimensional OV~\cite{DaK21}.
The high-dimensional OV problem can be reduced to Boolean Matrix Multiplication, which is conjectured not to be solvable in ${\cal O}(n^{3-\epsilon})$ time, for any $\epsilon > 0$, using a combinatorial algorithm~\cite{VaW18b}.
It is open whether high-dimensional OV can be solved faster than Boolean Matrix Multiplication.
Therefore, due to both reductions from~\cite{Duc21c} and~\cite{CEV07}, the existence of an ${\cal O}(f(\alpha)m^{3/2-\epsilon})$-time combinatorial algorithm for diameter computation within $Ext_{\alpha}$, for some function $f$ and for some $\epsilon > 0$, would be a significant algorithmic breakthrough. 
 
\paragraph*{Applications to some graph classes.} 
Let us review some interesting subclasses of graphs of $Ext_{\alpha}$, for some constant $\alpha$, for which to the best of our knowledge the best-known deterministic algorithm for diameter computation until this paper has been the brute-force ${\cal O}(nm)$-time algorithm. 

A circle graph is the intersection graph of chords in a cycle. For every $k \geq 2$, a $k$-polygon graph is the intersection graph of chords in a convex $k$-polygon where the ends of each chord lie on two different sides. Note that the $k$-polygon graphs form an increasing hierarchy of all the circle graphs, and that the $2$-polygon graphs are exactly the permutation graphs.
Recently~\cite{Duc21d}, an almost linear-time algorithm was proposed which computes a $+2$-approximation of the diameter of any $k$-polygon graph, for any fixed $k$.
By~\cite{StV18}, every $k$-polygon graph has asteroidal number at most $k$.
Therefore, for the \underline{$k$-polygon graphs}, we obtain an improved $+1$-approximation in linear time, and the first truly subquadratic algorithm for exact diameter computation.

A chordal graph is a graph with no induced cycle of length more than three.
Chordal graphs are exactly the intersection graphs of a collection of subtrees of a host tree~\cite{Gav74}.
We call such a representation a tree model.
The leafage of a chordal graph is the smallest number of leaves amongst its tree models.
In particular, the chordal graphs of leafage at most two are exactly the interval graphs, which are exactly the AT-free chordal graphs.
More generally, every chordal graph of leafage at most $k$ also has asteroidal number at most $k$~\cite{LMW98}.
In~\cite{Duc21c}, a randomized ${\cal O}(km\log^2{n})$-time algorithm was presented in order to compute the diameter of chordal graphs of asteroidal number at most $k$.
Our Theorem~\ref{thm:main} provides a deterministic alternative, but at the price of a higher runtime.
Even more strongly, by combining the ideas of Theorem~\ref{thm:main} with some special properties of chordal graphs, we were able to improve the runtime to ${\cal O}(km)$ -- see our Theorem~\ref{thm:chordal}.
Previously, such as result was only known for interval graphs~\cite{Ola90}.

A {\em moplex} in a graph is a module inducing a clique and whose neighbourhood is a minimal separator (the notions of module and minimal separator are recalled in Sec.~\ref{sec:prelim}). Moplexes are strongly related to LexBFS; indeed, every vertex last visited during a LexBFS is in a moplex~\cite{BeB98}. This has motivated some recent studies on $k$-moplex graphs, {\it a.k.a.}, the graphs with at most $k$ moplexes. In particular, every $k$-moplex graph has asteroidal number at most $k$~\cite{BeB01,DGHK+21+}. Hence, our results in this paper can be applied to the \underline{$k$-moplex graphs}.

{
Finally, a {\em dominating pair} consists of two vertices $x$ and $y$ such that every $xy$-path is a dominating set.
Note that every AT-free graph contains a dominating pair~\cite{COS97}.
A dominating pair graph (for short, DP graph) is one such that every connected induced subgraph contains a dominating pair.
The family $(K_n^+)_{n \geq 4}$ of DP graphs in~\cite[Sec. $4.1$]{PCK04} shows that for every $\alpha \geq 2$, there exists a DP graph which is not in $Ext_{\alpha}$. 
However, we here prove that every DP graph with diameter at least six is in $Ext_2$ -- see our Lemma~\ref{lem:dp}.
In doing so, we obtain a deterministic ${\cal O}(m^{3/2})$-time algorithm which, given an $m$-edge DP graph, either computes its diameter or asserts that its diameter is $\leq 5$.
We left open whether the diameter of DP graphs can be computed in truly subquadratic time.
}

\paragraph*{Organization of the paper.} 
{
We give the necessary graph terminology for this paper in Sec.~\ref{sec:prelim}.
Then, in Sec.~\ref{sec:diam-pair} we present some properties of graph extremities which, to our knowledge, have not been noticed before our work. 
In particular if a graph is prime for modular decomposition, then there always exists a diametral path whose both ends are extremities of the graph. 
We think these results could be helpful in future studies on the diameter problem (for other graph classes), and in order to better understand the relevant graph structure to be considered for fast diameter computation.
We complete Sec.~\ref{sec:diam-pair} with additional properties of extremities for graphs of bounded asteroidal number and for DP graphs.
In Sec.~\ref{sec:framework}, we relate extremities to the properties of Lexicographic Breadth-First Search.
Doing so, we design a general framework in order to compute extremities under various constraints.
We prove Theorem~\ref{thm:main} in Sec.~\ref{sec:alg}, then we discuss some of its extensions in Sec.~\ref{sec:extensions}.
We conclude this paper and propose some open questions in Sec.~\ref{sec:ccl}.

Results of this paper were partially presented at the IPEC'22 conference.
}

\section{Preliminaries}\label{sec:prelim}

We introduce in this section the necessary graph terminology for our proofs.
Let $G=(V,E)$ be a graph.
For any vertex $v \in V$, let $N_G(v) = \{ u \in V \mid uv \in E\}$ be its (open) neighbourhood and let $N_G[v] = N_G(v) \cup \{v\}$ be its closed neighbourhood.
Similarly, for any vertex-subset $S \subseteq V$, let $N_G[S] = \bigcup_{v \in S} N_G[v]$ and let $N_G(S) = N_G[S] \setminus S$.
For any vertices $u$ and $v$, we call a subset $S \subseteq V$ a $uv$-separator if $u$ and $v$ are in separate connected components of $G \setminus S$.
A minimal $uv$-separator is an inclusion-wise minimal $uv$-separator.
We call a subset $S$ a (minimal) separator if it is a (minimal) $uv$-separator for some vertices $u$ and $v$.
Alternatively, a full component for $S$ is a connected component $C$ of $G \setminus S$ such that $N_G(C) = S$.
It is known~\cite{Gol04} that $S$ is a minimal separator if there exist at least two full components for $S$.

\paragraph*{Distances.}
Recall that the distance $d_G(u,v)$ between two vertices $u$ and $v$ equals the minimum number of edges on a $uv$-path.
Let the interval $I_G(u,v) = \{ w \in V \mid d_G(u,v) = d_G(u,w) + d_G(w,v) \}$ contain all vertices on a shortest $uv$-path.
Furthermore, for every $\ell \geq 0$, let $N_G^{\ell}[u] = \{ v \in V \mid d_G(u,v) \leq \ell\}$ be the ball of center $u$ and radius $\ell$ in $G$.
We recall that the eccentricity of a vertex $v \in V$ is defined as $e_G(v) = \max_{u \in V}d_G(u,v)$.
We sometimes omit the subscript if the graph $G$ is clear from the context.
Let $F(v) = \{ u \in V \mid d(u,v) = e(v)\}$ be the set of vertices most distant to vertex $v$.
The diameter and the radius of $G$ are defined as $diam(G) = \max_{v \in V}e(v)$ and $rad(G) = \min_{v \in V}e(v)$, respectively.
We call $(x,y)$ a diametral pair if $d(x,y) = diam(G)$.

\paragraph*{Modular decomposition.}
Two vertices $u,v \in V$ are twins if we have $N(u) \setminus \{v\} = N(v) \setminus \{u\}$.
A twin class is a maximal vertex-subset of pairwise twins.
More generally, a module is a vertex-subset $M \subseteq V$ such that $N(x) \setminus M = N(y) \setminus M$ for any $x,y \in M$.
Note that any twin class is also a module.
We call $G$ {\em prime} if its only modules are: $\emptyset, V \ \text{and} \ \{v\}$ for every $v \in V$ (trivial modules).
A module $M$ is {\em strong} if it does not overlap any other module, {\it i.e.}, for any module $M'$ of $G$, either one of $M$ or $M'$ is contained in the other or $M$ and $M'$ do not intersect. 
We denote by ${\cal M}(G)$ the family of all inclusion wise maximal strong modules of $G$ that do not contain all the vertices of $G$.
Finally, the {\em quotient graph} of $G$ is the graph $G'$ with vertex-set ${\cal M}(G)$ and an edge between every two $M,M' \in {\cal M}(G)$ such that every vertex of $M$ is adjacent to every vertex of $M'$.
The following well-known result is due to Gallai:

\begin{theorem}[\cite{Gal67}]\label{thm:modular-dec}
For an arbitrary graph $G$ exactly one of the following conditions is satisfied.
\begin{enumerate}
\item $G$ is disconnected;
\item its complement $\overline{G}$ is disconnected;
\item or its quotient graph $G'$ is prime for modular decomposition.
\end{enumerate}
\end{theorem}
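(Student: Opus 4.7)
My approach is to break the statement into two parts: (A) at most one of the three conditions can hold, and (B) when neither (1) nor (2) holds, $G'$ is prime. For (A), I would rely on the standard observation that if $G$ has components $C_1, \dots, C_k$ with $k \geq 2$, then $\overline{G}$ contains a spanning complete $k$-partite subgraph and is in particular connected. Each $C_i$ is a module of $G$, and since the $C_i$ are pairwise disjoint they cannot be overlapped by any other module, so they are strong. The laminarity of strong modules then identifies $\mathcal{M}(G)$ with $\{C_1, \dots, C_k\}$, and $G'$ is the edgeless graph on $k \geq 2$ vertices; since every vertex subset of this graph is a module, $G'$ is not prime. A symmetric complement argument disposes of the case where $\overline{G}$ is disconnected, establishing mutual exclusion (modulo the convention that an edgeless or complete graph of order at least two is not prime).

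For (B), suppose both $G$ and $\overline{G}$ are connected, and assume toward contradiction that $G'$ admits a nontrivial module $\mathcal{N} \subsetneq \mathcal{M}(G)$ with $|\mathcal{N}| \geq 2$. Set $M := \bigcup_{M' \in \mathcal{N}} M'$, a proper subset of $V$ strictly containing each $M_i \in \mathcal{N}$. I first verify that $M$ is itself a module of $G$: for $x_i \in M_i$ ($i = 1,2$, $M_i \in \mathcal{N}$) and $y \in V \setminus M$ lying in some $M_y \in \mathcal{M}(G) \setminus \mathcal{N}$, the module property of $M_i$ gives that $x_i \sim y$ iff $M_i M_y \in E(G')$, and the module property of $\mathcal{N}$ in $G'$ matches these two adjacencies; hence every vertex of $M$ has the same adjacency with $y$. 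The plan is then to upgrade $M$ to a \emph{strong} module of $G$, so that it becomes a strong proper module strictly containing $M_1 \in \mathcal{N}$, contradicting the maximality of $M_1$ in $\mathcal{M}(G)$.

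\textbf{The hard part.} The main obstacle is precisely this upgrade. Assume a module $X$ of $G$ overlaps $M$; since each $M_i \in \mathcal{M}(G)$ is strong, $X$ cannot overlap any $M_i$, and the overlap of $X$ with $M$ excludes the case $X \subsetneq M_i$ (which would force $X \subseteq M$). Thus $X$ is a union $\bigcup \pi(X)$ of some $\pi(X) \subseteq \mathcal{M}(G)$, and a direct adjacency check shows $\pi(X)$ is a module of $G'$ which overlaps $\mathcal{N}$. To close the loop, I would iterate the construction: either $\mathcal{N} \cup \pi(X)$ is itself a strong module of $G'$, or it is in turn overlapped by a further module of $G'$, and the finite closure of this process produces either a strong proper module $\widehat{\mathcal{N}}$ of $G'$ --- yielding, via the same overlap correspondence, a strong proper module $\widehat{M}$ of $G$ strictly containing $M_1$ and contradicting maximality --- or the full family $\mathcal{M}(G)$, which is the boundary case. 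In that last situation, I expect the connectivity of both $G$ and $\overline{G}$ to be needed to rule out the degenerate outcomes; tightly handling this boundary is what I anticipate to be the subtlest step in the argument, while the remaining ingredients are routine adjacency bookkeeping.
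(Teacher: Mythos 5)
The paper does not prove this statement; it is quoted verbatim as Gallai's theorem with a citation to \cite{Gal67}, so there is no in-paper argument to compare against. Judged on its own terms, your outline follows the standard route (mutual exclusion via the components/co-components being the maximal strong modules, primality of $G'$ via lifting a nontrivial module of $G'$ to a module $M=\bigcup\mathcal{N}$ of $G$), and the routine parts are essentially right, but it has a genuine gap exactly at the step you yourself flag as unresolved. Your plan is to upgrade $M$ to a strong module by iteratively absorbing overlapping modules; that process does terminate at a strong module $Z\supsetneq M_1$, but the contradiction with the maximality of $M_1$ in $\mathcal{M}(G)$ only works if $Z\neq V$. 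The entire content of the theorem sits in the excluded ``boundary case'' $Z=V$: at the last absorption step you have two overlapping proper modules $X,Y$ of $G$ with $X\cup Y=V$, and you must show this forces $G$ or $\overline{G}$ to be disconnected. The argument (which you do not supply) is: with $A=X\setminus Y$, $B=Y\setminus X$, $C=X\cap Y$ all nonempty, each $a\in A$ is uniform to $Y=B\cup C$ and each $b\in B$ is uniform to $X=A\cup C$, so either there are no edges at all between $C$ and $A\cup B$ (whence $G$ is disconnected) or all such edges together with all $A$--$B$ edges are present (whence $\overline{G}$ is disconnected). Without this lemma your proof of item (3) is incomplete; with it, the cleaner route is to prove directly that when $G$ and $\overline{G}$ are connected every maximal \emph{proper} module is strong (any overlapping module would force the union to be $V$), so that $\mathcal{M}(G)$ consists of the maximal proper modules and $M\supsetneq M_1$ already contradicts maximality with no strongness upgrade needed.

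Two smaller points. First, your justification that the components $C_i$ are strong modules of a disconnected $G$ --- ``since the $C_i$ are pairwise disjoint they cannot be overlapped'' --- is a non sequitur; the correct reason is that a module $X$ overlapping a connected component $C_i$ would contain a vertex $u\in C_i$ adjacent to some $w\in C_i\setminus X$ while also containing a vertex $v$ of another component with $wv\notin E$, violating the module condition. Second, you correctly note that mutual exclusion only holds under the convention that edgeless and complete graphs on at least two vertices are not prime; with the paper's literal definition of prime (only trivial modules) a two-vertex quotient would count as prime, so this convention does need to be made explicit.
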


For general graphs, there is a tree representation of all the modules in a graph, sometimes called the modular decomposition, that can be computed in linear time~\cite{TCHP08}.
Note that since we only consider connected graphs, only the two last items of Theorem~\ref{thm:modular-dec} are relevant to our study.
Moreover, it is easy to prove that if the complement of a graph $G$ is disconnected, then we have $diam(G) \leq 2$.
Therefore, the following result is an easy byproduct of Gallai's theorem for modular decomposition~\cite{Gal67}:

\begin{lemma}[cf. Theorem 14 in~\cite{CDP19}]\label{lem:red-mod}
Computing the diameter (resp., all eccentricities) of any graph $G$ can be reduced in linear time to computing the diameter (resp., all eccentricities) of its quotient graph $G'$.
\end{lemma}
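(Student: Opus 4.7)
The plan is to relate distances in $G$ to distances in the quotient graph $G'$ through the projection $\pi : V \to {\cal M}(G)$ sending each vertex to the unique maximal strong proper module containing it (by Gallai's theorem, the members of ${\cal M}(G)$ partition $V$). The central identity I would establish is
\[
 d_G(u,v) \;=\; d_{G'}(\pi(u), \pi(v)) \qquad \text{whenever } \pi(u) \neq \pi(v).
\]
The ``$\leq$'' direction is obtained by lifting: along any shortest $\pi(u)$--$\pi(v)$ path in $G'$, consecutive modules are fully joined in $G$ by the definition of $G'$, so choosing one representative per module yields a $uv$-walk in $G$ of the same length.

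For the ``$\geq$'' direction, given any shortest path $u = u_0, u_1, \ldots, u_k = v$ in $G$, I would project it through $\pi$ and argue that each transition $\pi(u_i) \neq \pi(u_{i+1})$ corresponds to an edge in $G'$. Indeed, for any $w \in \pi(u_i)$ the module property of $\pi(u_i)$ (applied to the outside vertex $u_{i+1}$, which is adjacent to $u_i \in \pi(u_i)$) forces $u_{i+1}$ to be adjacent to $w$; and then, symmetrically, the module property of $\pi(u_{i+1})$ forces $w$ to be adjacent to every vertex of $\pi(u_{i+1})$. Hence $\pi(u_i)\pi(u_{i+1})$ is an edge of $G'$, and after deleting ``stalls'' from the projected sequence one obtains a walk (hence a path) from $\pi(u)$ to $\pi(v)$ in $G'$ of length at most $k$. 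The remaining, easy, case $\pi(u) = \pi(v) = M$ with $u \neq v$ gives $d_G(u, v) \leq 2$: since $G$ is connected and $M \subsetneq V$, some vertex $w \notin M$ is adjacent to one (and thus to all, by the module property) vertices of $M$, providing the length-two path $u$--$w$--$v$.

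Combining both observations, for each $v \in V$ with $M_v = \pi(v)$ the eccentricity satisfies
\[
 e_G(v) \;=\; \max\bigl(\, e_{G'}(M_v),\ \delta(v) \,\bigr),
\]
where $\delta(v) \in \{0,1,2\}$ is a local quantity read off from $M_v$: namely, $\delta(v) = 0$ if $M_v = \{v\}$, $\delta(v) = 1$ if $|M_v| \geq 2$ and $M_v$ induces a clique, and $\delta(v) = 2$ otherwise. The diameter is then $\max(diam(G'), \max_v \delta(v))$. The algorithmic reduction is immediate: compute the modular decomposition and the partition ${\cal M}(G)$ in linear time via~\cite{TCHP08}, determine each $\delta(v)$ in one linear pass (testing each module for being a clique by counting its internal edges), then combine with the eccentricities of $G'$ in $O(n)$ additional time. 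I expect the main obstacle to be the shortest-path projection argument, which crucially exploits the ``all-or-nothing'' adjacency pattern of any outside vertex to a module; everything else is essentially a bookkeeping step.
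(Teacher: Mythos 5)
Your distance identity $d_G(u,v)=d_{G'}(\pi(u),\pi(v))$ for $\pi(u)\neq\pi(v)$, together with $d_G(u,v)\le 2$ inside a module, is exactly the right content here, and both directions are argued correctly; note that the paper does not prove this lemma itself but imports it from~\cite{CDP19}, observing only that by Gallai's theorem a connected $G$ either has $diam(G)\le 2$ (when $\overline{G}$ is disconnected) or a prime quotient. Your write-up is a genuine self-contained proof of the citation, and the reduction it yields (including the linear-time bookkeeping via~\cite{TCHP08}) is sound.

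One small slip: your $\delta(v)$ is not always the eccentricity of $v$ \emph{inside} $M_v$. If $M_v$ is not a clique but $v$ happens to be adjacent to all of $M_v\setminus\{v\}$ (say $M_v$ induces a $P_3$ centred at $v$), then $\max_{u\in M_v\setminus\{v\}}d_G(u,v)=1$ while you set $\delta(v)=2$. The formula $e_G(v)=\max\bigl(e_{G'}(M_v),\delta(v)\bigr)$ survives this overestimate, but only after a case analysis you do not carry out: if $G'$ is prime it has at least four vertices and no universal vertex, so $e_{G'}(M_v)\ge 2$ and the value $2$ is absorbed; and if $G'$ is complete then $M_v$ is a co-connected component, so whenever $|M_v|\ge 2$ every vertex of $M_v$ (in particular $v$) has a non-neighbour in $M_v$ and the estimate $\delta(v)=2$ is in fact exact. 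The cleaner fix is to define $\delta(v)=2$ if and only if $v$ has a non-neighbour inside $M_v$ (equivalently $\deg_{G[M_v]}(v)<|M_v|-1$), which is just as easy to test in a linear pass and makes the eccentricity formula hold termwise without any appeal to the structure of $G'$.
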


An important observation for what follows is that, if a graph $G$ belongs to $Ext_{\alpha}$ for some $\alpha$, then so does its quotient graph $G'$.
Hence, we may only consider {\em prime} graphs in $Ext_{\alpha}$.

\paragraph*{Hyperbolicity.}

The hyperbolicity of a graph $G$ \cite{Gromov1987} is the smallest half-integer $\delta\ge 0$  such that,  for any four vertices $u,v,w,x$, the two largest of the three distance sums $d(u,v)+d(w,x)$, $d(u,w)+d(v,x)$, $d(u,x)+d(v,w)$ differ by at most $2\delta$. In this case we say that $G$ is $\delta$-hyperbolic. To quote~\cite{DDG21}: ``As the tree-width of a graph measures its combinatorial tree-likeness, so does the hyperbolicity of a graph measure its metric tree-likeness. In other words, the smaller the hyperbolicity $\delta$ of $G$ is, the closer $G$ is to a tree metrically.''

We will use in what follows the following ``tree-likeness'' properties of hyperbolic graphs:

\begin{lemma}[\cite{CDEH+08}]\label{lem:diam-hyp}
If $G$ is $\delta$-hyperbolic, then $diam(G) \geq 2rad(G) - 4\delta -1$.
\end{lemma}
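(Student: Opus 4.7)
The plan is to prove the bound by locating a vertex whose eccentricity is essentially $diam(G)/2$, namely the midpoint of a shortest diametral path, via the four-point hyperbolicity condition. Set $D := diam(G)$ and $R := rad(G)$, and fix a diametral pair $(x,y)$ with $d(x,y) = D$ together with a shortest $xy$-path $P = c_0 c_1 \ldots c_D$ where $c_0 = x$ and $c_D = y$. Define the midpoint $m := c_{\lfloor D/2 \rfloor}$, so that $d(x,m) = \lfloor D/2 \rfloor$ and $d(y,m) = \lceil D/2 \rceil$.

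The core step is to show that $e(m) \leq \lceil D/2 \rceil + 2\delta$, which immediately yields the claimed bound since $R \leq e(m)$. Pick an arbitrary vertex $z \in V$ and apply the four-point condition to the quadruple $\{x, y, m, z\}$. The three distance sums are
\[
S_1 = d(x,y) + d(m,z) = D + d(m,z),
\]
\[
S_2 = d(x,m) + d(y,z) = \lfloor D/2 \rfloor + d(y,z),
\]
\[
S_3 = d(y,m) + d(x,z) = \lceil D/2 \rceil + d(x,z).
\]
Since $d(x,z) \leq D$ and $d(y,z) \leq D$, we have $S_2, S_3 \leq D + \lceil D/2 \rceil$. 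Suppose, for contradiction, that $d(m,z) > \lceil D/2 \rceil + 2\delta$. Then $S_1 > D + \lceil D/2 \rceil + 2\delta$, so $S_1$ would exceed both $S_2$ and $S_3$ by strictly more than $2\delta$, contradicting the defining property of $\delta$-hyperbolicity that the two largest of the three sums differ by at most $2\delta$. Hence $d(m,z) \leq \lceil D/2 \rceil + 2\delta$ for every $z$, as required.

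The inequality $R \leq e(m) \leq \lceil D/2 \rceil + 2\delta \leq (D+1)/2 + 2\delta$ rearranges directly to $D \geq 2R - 4\delta - 1$. I expect the only mildly delicate aspect to be the parity bookkeeping: the ``$-1$'' in the statement stems from the one-unit gap between $\lfloor D/2 \rfloor$ and $\lceil D/2 \rceil$, which forces us to take the midpoint on the ``short'' half of $P$ so that the distance to be bounded is $\lceil D/2 \rceil$ rather than something larger. Once this is set up, the argument is a single clean application of the four-point condition and needs no further structural assumption on $G$.
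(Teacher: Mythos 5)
Your argument is correct and complete: for any $z$, the four-point condition on $\{x,y,m,z\}$ with $S_1=D+d(m,z)$ and $\max(S_2,S_3)\leq D+\lceil D/2\rceil$ forces $d(m,z)\leq \lceil D/2\rceil+2\delta$, and $R\leq e(m)\leq (D+1)/2+2\delta$ rearranges to the stated bound. Note that the paper does not prove this lemma at all -- it is imported verbatim from the cited reference -- so there is no in-paper proof to compare against; your derivation is the standard one (the midpoint of a diametral shortest path is an approximate center in a hyperbolic graph). One minor remark: the choice of which of the two medial vertices to call $m$ is immaterial, since the bound $\max(S_2,S_3)\leq D+\lceil D/2\rceil$ holds either way; the ``parity bookkeeping'' you flag as delicate is actually symmetric in the two choices.
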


\begin{lemma}[Proposition 3(c) in~\cite{CDHV+19}]\label{lem:diam-furthest}
Let $G$ be a $\delta$-hyperbolic graph and let $u,v$ be a pair of vertices of $G$ such that $v \in F(u)$.
We have $e(v) \geq diam(G) - 8\delta \geq 2rad(G) - 12\delta -1$.
\end{lemma}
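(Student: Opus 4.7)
The plan is to derive both inequalities directly from the 4-point condition of $\delta$-hyperbolicity, combined with Lemma~\ref{lem:diam-hyp}.

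The second inequality $diam(G) - 8\delta \geq 2\,rad(G) - 12\delta - 1$ is immediate: by Lemma~\ref{lem:diam-hyp} we have $diam(G) \geq 2\,rad(G) - 4\delta - 1$, and subtracting $8\delta$ from both sides gives exactly the claim. So all the work lies in establishing the first inequality $e(v) \geq diam(G) - 8\delta$.

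Here I would fix a diametral pair $(x,y)$ with $d(x,y) = diam(G)$ and apply the 4-point condition to the quadruple $\{u,v,x,y\}$. The three relevant distance sums are
$$A = d(u,v) + d(x,y),\quad B = d(u,x) + d(v,y),\quad C = d(u,y) + d(v,x).$$
The pivotal observation is that $A$ is the maximum of the three. Indeed, since $v \in F(u)$ we have $d(u,v) \geq d(u,x)$ and $d(u,v) \geq d(u,y)$, and since $(x,y)$ is a diametral pair we have $d(x,y) \geq d(v,y)$ and $d(x,y) \geq d(v,x)$; adding these pairwise yields $A \geq B$ and $A \geq C$. The 4-point condition then forces $\max(B,C) \geq A - 2\delta$. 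Assuming without loss of generality that $B \geq C$, I would isolate $d(v,y)$ from $B \geq A - 2\delta$:
$$d(v,y) \geq d(u,v) + d(x,y) - d(u,x) - 2\delta \geq d(x,y) - 2\delta = diam(G) - 2\delta,$$
where the last inequality uses $d(u,v) \geq d(u,x)$ once again. Therefore $e(v) \geq d(v,y) \geq diam(G) - 2\delta$, which is strictly stronger than the required $diam(G) - 8\delta$.

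The only conceptual subtlety is verifying that $A$ really is the largest of the three sums; this relies on using \emph{both} hypotheses (that $v$ is farthest from $u$, and that $(x,y)$ is a diametral pair) simultaneously, and is what makes the 4-point condition directly applicable. Once this is in place the remainder is bookkeeping, with no induction or auxiliary geodesic construction needed. The slack between the $-2\delta$ bound my sketch produces and the stated $-8\delta$ bound presumably reflects a looser normalization of the hyperbolicity constant in the original reference~\cite{CDHV+19}.
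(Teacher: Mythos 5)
Your proof is correct. Note that the paper does not actually prove this lemma: it imports it verbatim from~\cite{CDHV+19} (Proposition 3(c)), so there is no internal argument to compare against. Your four-point argument is a valid, self-contained derivation under the paper's definition of hyperbolicity: the sum $A = d(u,v)+d(x,y)$ is indeed (weakly) the largest of the three, since $v \in F(u)$ gives $d(u,v) \geq \max\{d(u,x), d(u,y)\}$ and $(x,y)$ diametral gives $d(x,y) \geq \max\{d(v,x), d(v,y)\}$; the four-point condition then yields $\max(B,C) \geq A - 2\delta$, and both cases ($B \geq C$ and $C \geq B$) symmetrically produce a vertex of $\{x,y\}$ at distance at least $diam(G) - 2\delta$ from $v$. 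The second inequality is, as you say, just Lemma~\ref{lem:diam-hyp} minus $8\delta$. Your bound $e(v) \geq diam(G) - 2\delta$ is strictly stronger than the stated $diam(G) - 8\delta$ (which follows since $\delta \geq 0$); the weaker constant in the statement is an artifact of the original reference, which works with a different normalization of $\delta$ (via geodesic thinness/Gromov products rather than the four-point condition used here), so nothing is lost. One stylistic nit: spell out that the case $C \geq B$ is genuinely symmetric (it bounds $d(v,x)$ rather than $d(v,y)$, either of which lower-bounds $e(v)$), rather than leaving it as an unexamined WLOG.
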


\begin{lemma}[\cite{CDEH+08}]\label{lem:rad-hyp}
Let $u$ be an arbitrary vertex of a $\delta$-hyperbolic graph $G$.
If $v \in F(u)$ and $w \in F(v)$, then let $c \in I(v,w)$ be satisfying $d(c,v) = \left\lfloor d(v,w)/2 \right\rfloor$.
We have $e(c) \leq rad(G) + 5\delta$.
\end{lemma}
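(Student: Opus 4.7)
The plan is to bound $e(c)$ by localizing $c$ near any center $z$ of $G$. Fixing such a $z$ with $e(z)=rad(G)$, the triangle inequality yields $d(x,c)\le d(x,z)+d(z,c)\le rad(G)+d(z,c)$ for every $x\in V$, so it suffices to prove that $d(z,c)$ is bounded by a small multiple of $\delta$.

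The main tool is the four-point condition of $\delta$-hyperbolicity applied to the quadruple $(v,w,z,c)$. Because $c\in I(v,w)$, the identity $d(v,c)+d(c,w)=d(v,w)$ allows the three candidate sums to be written as
\[
S_{1}=d(v,w)+d(z,c),\quad S_{2}=d(v,z)+d(w,c),\quad S_{3}=d(v,c)+d(w,z).
\]
The triangle-inequality estimates $d(v,c)+d(c,z)\ge d(v,z)$ and $d(c,w)+d(z,c)\ge d(z,w)$ rewrite as $S_{1}\ge S_{2}$ and $S_{1}\ge S_{3}$, respectively, so $S_{1}$ is the largest of the three sums. Hyperbolicity then gives $S_{1}-\max\{S_{2},S_{3}\}\le 2\delta$, which after cancellation yields the conditional bound
\[
d(z,c)\;\le\; \max\!\bigl\{\,d(v,z)-d(v,c),\; d(w,z)-d(c,w)\,\bigr\}+2\delta \;\le\; rad(G)-\lfloor d(v,w)/2\rfloor+2\delta,
\]
where I have used $d(v,z),d(w,z)\le rad(G)$ together with $d(v,c)=\lfloor d(v,w)/2\rfloor \le d(c,w)=\lceil d(v,w)/2\rceil$ in the last step.

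To finish, note that $d(v,w)=e(v)$ since $w\in F(v)$. Lemma~\ref{lem:diam-furthest} applied to the pair $(u,v)$ gives $e(v)\ge 2\,rad(G)-12\delta-1$, hence $\lfloor d(v,w)/2\rfloor\ge rad(G)-6\delta-1$. Substituting yields $d(z,c)=O(\delta)$, and through the triangle inequality above $e(c)\le rad(G)+O(\delta)$, as claimed.

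The main obstacle I anticipate is pinning down the precise constant $5\delta$ rather than the $8\delta+O(1)$ that the naive chain of estimates above produces. Sharpening it presumably requires either averaging the two branches of the conditional bound, exploiting $d(v,z)+d(w,z)\le 2\,rad(G)$ in conjunction with the symmetric identity $d(v,c)+d(c,w)=d(v,w)$, or else an auxiliary four-point inequality on a neighbouring quadruple such as $(u,v,w,c)$, to trim the additive slack that enters through Lemma~\ref{lem:diam-furthest}. Either refinement preserves the qualitative picture: in a near-tree-like metric the midpoint of a near-diametral geodesic must be a near-center of $G$, and so its eccentricity exceeds $rad(G)$ by only a bounded multiple of $\delta$.
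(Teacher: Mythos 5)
This lemma is imported verbatim from \cite{CDEH+08}; the paper contains no proof of it, so there is nothing internal to compare your argument against, and I can only judge it on its own terms. The chain of inequalities you write down is correct as far as it goes: $S_1$ is indeed the largest of the three sums, the cancellation gives $d(z,c)\le \max\{d(v,z)-d(v,c),\,d(w,z)-d(c,w)\}+2\delta \le rad(G)-\lfloor d(v,w)/2\rfloor+2\delta$, and combining with $\lfloor d(v,w)/2\rfloor = \lfloor e(v)/2\rfloor \ge rad(G)-6\delta-1$ from Lemma~\ref{lem:diam-furthest} yields $d(z,c)\le 8\delta+1$ and hence $e(c)\le rad(G)+8\delta+1$.

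The genuine gap is the one you flag yourself: the statement asserts $e(c)\le rad(G)+5\delta$, and you prove only $rad(G)+8\delta+1$, with the closing step left as speculation about ``averaging the two branches'' or an ``auxiliary four-point inequality'' that is never carried out. This is not a cosmetic issue of hidden constants: the explicit values $5\delta$ propagate into the window sizes $22\delta+3$ and $14\delta+3$ (hence $66\alpha-19$ and $42\alpha-11$) in Theorems~\ref{thm:approx} and~\ref{thm:exact}, so the lemma as stated is what the paper actually uses. Structurally, the loss in your route is built in: you pass through an arbitrary center $z$ and pay for Lemma~\ref{lem:diam-furthest} (which already concedes $8\delta$ relative to $diam(G)$ plus $4\delta+1$ from Lemma~\ref{lem:diam-hyp}) before paying another $2\delta$ for the four-point condition, so no amount of averaging the two branches of your conditional bound will recover $5\delta$. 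A tighter route applies the four-point condition to $(v,w,x,c)$ for an \emph{arbitrary} vertex $x$, bounding $d(x,c)\le\max\{d(v,x)-d(v,c),\,d(w,x)-d(c,w)\}+2\delta$ with $d(v,x)\le e(v)=d(v,w)$ and $d(w,x)\le diam(G)\le 2rad(G)$; even this only lands around $rad(G)+6\delta$ with the lemmas available in this paper, and reaching $5\delta$ requires the finer estimates of \cite{CDEH+08} (in particular a sharper lower bound on $d(v,w)$ than the one obtained by chaining Lemmas~\ref{lem:diam-hyp} and~\ref{lem:diam-furthest}). So the proposal should be regarded as a correct proof of a weaker statement, $e(c)\le rad(G)+8\delta+1$ — which would still support the paper's asymptotic results with enlarged but still ${\cal O}(\alpha)$ windows — and not as a proof of the lemma as stated.
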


\section{Properties of graph extremities}\label{sec:diam-pair}

{
We present several simple properties of graph extremities in what follows.
In Sec.~\ref{sec:bounds}, we give bounds on the number of extremities in a graph.
Then, we show in Sec.~\ref{sec:gal} that, for computing the vertex eccentricities of a prime graph (and so, its diameter), it is sufficient to only consider its so-called extremities.
In Sec.~\ref{sec:dom-target}, we relate the location of the extremities in a graph to the one of an arbitrary dominating target.
We state in Sec.~\ref{sec:hyp-extrem} a relationship between extremities and the hyperbolicity of a graph (this result easily follows from~\cite{KKM01}). 
In Sec.~\ref{sec:extrem-graph-classes}, additional properties of extremities in some graph classes are discussed.
}

\subsection{Bounds on the Number of Graph extremities}\label{sec:bounds}

Every non-complete prime graph has at least two extremities~\cite{KKM01}.
The remainder of this section is devoted to proving an upper bound on the number of extremities in a prime graph.
Unfortunately, there may be up to $\Theta(n)$ extremities in an $n$-vertex graph, even if it is AT-free. 
See the construction of~\cite[Fig. $2$]{CDDH+01} for an example. 
It is worth mentioning this example also has clique-number equal to $\Theta(n)$.
Our general upper bounds in what follows show that only dense prime graphs may have $\Omega(n)$ extremities.

\begin{lemma}\label{lem:num-extrem}
If $G \in Ext_{\alpha}$ is prime, then the number of its extremities is at most:
\begin{itemize}
    \item $\alpha \cdot \chi(G)$, where $\chi(G)$ denotes the chromatic number of $G$;
    \item $R(\alpha+1,\omega(G)+1) - 1$, where $\omega(G)$ is the clique number of $G$, and $R(\cdot,\cdot)$ is a Ramsey number.
\end{itemize}
In particular, it is in ${\cal O}(\alpha\sqrt{m})$.
\end{lemma}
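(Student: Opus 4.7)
The plan is to exploit the fact that a prime graph in $Ext_{\alpha}$ coincides with its own quotient graph, so that no independent set of extremities has size exceeding $\alpha$. The two bounds then follow by two independent set / clique arguments on the subgraph induced by the extremities.

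For the first bound, I would fix a proper coloring of $G$ using $\chi(G)$ colors. Every color class is an independent set of $G$, hence in particular every set of extremities sharing a color is an independent set of extremities. Since $G$ is prime and belongs to $Ext_{\alpha}$, any such set contains at most $\alpha$ vertices. Summing over the $\chi(G)$ color classes yields that the total number of extremities is at most $\alpha \cdot \chi(G)$.

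For the second bound, I would apply Ramsey's theorem to the subgraph of $G$ induced by the extremities. If this subgraph had at least $R(\alpha+1, \omega(G)+1)$ vertices, then it would contain either an independent set of size $\alpha+1$ -- contradicting $G \in Ext_{\alpha}$ together with the fact that $G$ is prime, hence equal to its quotient graph -- or a clique of size $\omega(G)+1$ -- contradicting the definition of $\omega(G)$ as the clique number of $G$. Thus the number of extremities is strictly less than $R(\alpha+1, \omega(G)+1)$.

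Finally, for the ${\cal O}(\alpha\sqrt{m})$ conclusion, I would invoke the classical bound $\chi(G) = {\cal O}(\sqrt{m})$, which follows by observing that a $\chi(G)$-critical subgraph has minimum degree at least $\chi(G)-1$, so $m \geq \binom{\chi(G)}{2}$. Combining this with the first bound gives the claim. I do not foresee a real obstacle here: the only thing one must be careful about is to invoke primality of $G$ to translate the hypothesis ``the quotient graph contains at most $\alpha$ pairwise nonadjacent extremities'' into a statement about $G$ itself, which is immediate because a prime graph is its own quotient.
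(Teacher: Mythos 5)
Your proposal is correct and follows essentially the same route as the paper: bound the independence number of the subgraph induced by the extremities by $\alpha$ (using that a prime graph equals its quotient graph), then apply a proper $\chi(G)$-coloring for the first bound, Ramsey's theorem for the second, and the classical $\chi(G) = {\cal O}(\sqrt{m})$ estimate for the conclusion. The only cosmetic difference is that you spell out the justification for $\chi(G) = {\cal O}(\sqrt{m})$, which the paper takes for granted.
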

\begin{proof}
Let us denote by $q$ the number of extremities of $G$, and let $H$ be induced by all the extremities. Note that $H$ is not necessarily connected. Since we assume that $G \in Ext_{\alpha}$, the independence number of $H$ is at most $\alpha$. 
In this situation, $q < R(\alpha+1,\omega(G)+1)$ (otherwise, either $H$ would contain an independent set of size $\alpha+1$, or $H$ and so, $G$, would contain a clique of size $\omega(G)+1$).
Since the chromatic number of $H$ is at most $\chi(G)$, we also have that $H$ can be partitioned in at most $\chi(G)$ independent sets, and so, $q \leq \alpha \cdot \chi(G)$. In particular, $q = {\cal O}(\alpha\sqrt{m})$ because $\chi(G) = {\cal O}(\sqrt{m})$ for any graph $G$.
\end{proof}

Let $G \in Ext_{\alpha}$ be prime, with $q$ extremities.
Note that, using $R(s,t) = {\cal O}(t^{s-1})$ for any fixed $s$, we get that $q = {\cal O}(\alpha^{\omega(G)+1})$. That is in ${\cal O}(\alpha^3)$ for triangle-free graphs.
For graphs of constant chromatic number, the bound of Lemma~\ref{lem:num-extrem} is linear in $\alpha$.
In particular, $q = {\cal O}(\alpha\Delta)$ for the graphs of maximum degree $\Delta$, and $q = {\cal O}(\alpha)$ for bipartite graphs.

\subsection{Relationships with the diameter}\label{sec:gal}

To the best of our knowledge, the following relation between extremities and vertex eccentricities has not been noticed before:

\begin{lemma}\label{lem:red-extrem}
If $x$ is a vertex of a prime graph $G=(V,E)$ with $|V| \geq 3$, then there exists an extremity $y$ of $G$ such that $d(x,y) = e(x)$.

In particular, for every $y' \in F(x)$, there is an extremity $y \in F(x)$ so that $d(y,y') \leq 2$.
\end{lemma}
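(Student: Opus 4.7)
The plan is to start from an arbitrary $y' \in F(x)$. If $y'$ is itself an extremity, set $y := y'$ and we are done, so assume $y'$ is not an extremity; we may further assume $e(x) \geq 2$, since otherwise $x$ is universal and the claim is immediate. Then $x \notin N[y']$, so $G - N_G[y']$ admits a connected component $C$ not containing $x$.

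The first, structural step is to establish that every vertex of $C$ lies in $F(x)$. Let $z \in C$ and pick a shortest $x$-$z$ path $P$. Since $N(y')$ is an $x$-$z$ separator in $G$, $P$ crosses $N[y']$; but $P$ cannot pass through $y'$ itself, for otherwise $d(x,z) \geq d(x,y') + d(y',z) > e(x)$, contradicting the trivial $d(x,z) \leq e(x)$. Hence $P$ meets $N(y')$ at some vertex $w$, whence $d(x,z) = d(x,w) + d(w,z)$. The triangle inequality applied to the edge $wy'$ yields $d(x,w) \geq d(x,y') - 1 = e(x) - 1$, and $d(w,z) \geq 1$ since $z \notin N[y']$. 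Combined with $d(x,z) \leq e(x)$, this forces $d(x,z) = e(x)$, so $C \subseteq F(x)$.

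The second step uses Step~1 to locate an extremity of $G$ at distance exactly $2$ from $y'$. Because $G$ is connected, $N(C) \subseteq N(y')$ is nonempty, and some vertex $y \in C$ is adjacent to $N(y')$; for any such $y$ one has $d(y,y') = 2$ and $y \in F(x)$. If some such $y$ is already an extremity we are done. Otherwise, the plan is to iterate the same construction on $y$, with selection of the next candidate controlled by primality of $G$: $C$ cannot be a module of $G$, so the adjacencies of its vertices to $N(y')$ cannot be uniform, and a careful extremal choice of $y \in C$ (for instance, maximizing $|N(y) \cap N(y')|$, or minimizing another suitable measure) should ensure that $G - N[y]$ is connected on the first try.

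The main obstacle I anticipate is exactly this last point: turning the qualitative statement ``$C$ is not a module of $G$'' into a concrete extremity that is at distance only $2$ from the starting $y'$. A naive iteration risks drifting beyond distance $2$ or cycling forever, so primality has to be exploited in a sharper and more quantitative way — perhaps by choosing $y \in C$ that simultaneously maximizes adjacency to $N(y')$ and minimizes adjacency inside $C$, and then directly verifying that every potential obstruction to the connectedness of $G - N[y]$ is reconnected through $N(y') \setminus N(y)$. Formalizing this selection rule and its verification is where I expect most of the work to concentrate, and it is also the place where primality is genuinely needed (as simple non-prime examples where $F(x)$ consists entirely of non-extremities demonstrate).
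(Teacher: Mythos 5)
Your first step is correct and matches the opening of the paper's own argument: for a non-extremity $y' \in F(x)$, every vertex of a component $C$ of $G \setminus N[y']$ not containing $x$ lies in $F(x)$, and such a component has a vertex at distance exactly $2$ from $y'$. But the heart of the lemma --- actually exhibiting an extremity among these candidates --- is precisely the part you leave as a plan (``a careful extremal choice \dots should ensure that $G - N[y]$ is connected on the first try''), and as written it does not go through. First, it is not clear that your fixed component $C$ of $G \setminus N[y']$ contains any extremity at all; the paper never establishes this, and the extremity it produces may live in a different component (and not necessarily outside $N[y']$). Second, no argument is given that maximizing $|N(y)\cap N(y')|$, or any of the other suggested measures, forces $G \setminus N[y]$ to be connected; primality only rules out nontrivial modules, and converting that into connectivity of $G\setminus N[y]$ for a single extremal candidate is exactly the nontrivial content of the lemma. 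You have correctly identified where the difficulty lies, but identifying it is not the same as resolving it.

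The paper's route is structurally different at this decisive step. It first shrinks $N[y']$ to a minimal $wx$-separator $S \subseteq N[y']$, so that the component $X$ of $x$ in $G\setminus S$ is a full component for $S$; this full-component property is what powers the later claims. It then invokes Lemma~\ref{lem:minsep} (for a minimal separator $S$ of a prime graph, a component with no extremity has the property that $N(c)\cap S$ separates the rest of the graph for each of its vertices $c$), picks among all vertices outside $X\cup S$ one vertex $a$ minimizing $|N(a)\cap S|$, and shows that either the component $A$ of $a$ contains an extremity, or there is a further component $B$ of $G\setminus S$ all of whose vertices have attachment exactly $N(a)\cap S$; such a $B$ is a module, hence a singleton $\{b\}$ by primality, and $b$ must be an extremity on pain of producing a pair of twins. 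So primality enters through ``a uniform-attachment component collapses to a single vertex,'' not through a direct connectivity check on one extremal vertex of $C$. To complete your proof you would need either to import Lemma~\ref{lem:minsep} and this two-stage argument over the components of the minimal separator, or to supply a genuinely new justification for your selection rule; as it stands the proposal has a gap at its key step.
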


We observe that a slightly weaker version of Lemma~\ref{lem:red-extrem} could be also deduced from Lemma~\ref{lem:lexbfs} (proved in the next section).
The following lemma shall be used in our proofs:

\begin{lemma}[\cite{KKM01}]\label{lem:minsep}
Let $S$ be a minimal separator for a prime graph $G=(V,E)$. For any component $C$ of $G \setminus S$, if $C$ does not contain an extremity of $G$, then $N(c) \cap S$ is a separator of $G \setminus C$ for every $c \in C$. 
\end{lemma}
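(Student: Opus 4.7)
The plan is to prove the stronger "in particular" statement directly, since applying it to any $y' \in F(x)$ yields the first assertion. Fix $y' \in F(x)$ and set $k := e(x) = d(x, y')$. If $y'$ is an extremity, take $y := y'$; otherwise $G \setminus N[y']$ is disconnected and we let $C$ be any of its components that does not contain $x$. A first useful observation is that $C \subseteq F(x)$: every $xc$-path with $c \in C$ must cross $N(y')$ at some vertex $s$ with $d(x,s) \geq k-1$, so $d(x,c) \geq k$; combined with $d(x,c) \leq e(x) = k$, equality holds. By connectedness of $G$, at least one vertex of $C$ is adjacent to $N(y')$, hence lies at distance exactly $2$ from $y'$; so it suffices to exhibit an extremity of $G$ inside $C$, and ideally on the boundary of $C$ with $N(C)$.

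Next, I would set $S := N(C) \subseteq N(y')$ and argue that $S$ is a minimal separator of $G$: $C$ is a full component for $S$ by definition, and the component $C_0$ of $G \setminus S$ containing $y'$ is also full, since $y'$ is adjacent to every vertex of $S$. This allows us to invoke Lemma~\ref{lem:minsep}, whose contrapositive reads: to exhibit an extremity of $G$ in $C$, it is enough to find some $c \in C$ such that $N(c) \cap S$ is \emph{not} a separator of $G \setminus C$. Note that $G \setminus C$ is connected through $y'$ and $S$, and that after removing $N(c) \cap S$ the $C_0$-side always survives, because $y'$ still dominates $S \setminus N(c)$. Disconnection therefore requires the existence of a component $K$ of $G \setminus S$, distinct from both $C$ and $C_0$, whose outer neighborhood $N(K)$ is trapped inside $N(c) \cap S$.

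The main obstacle is to rule out such a trapped $K$ for a suitably chosen $c$. My plan is to pick $c^* \in C$ so that $N(c^*) \cap S$ is inclusion-minimal; primality of $G$ forbids $C$ from being a module when $|C| \geq 2$ (as $C \neq V$), so the sets $\{N(c) \cap S : c \in C\}$ genuinely vary. I would then argue that any component $K$ still trapped below $N(c^*) \cap S$, together with $C$, forces a nontrivial module structure in $G$ --- most likely via the shared external neighborhood, since $N(K) \cup N(C) \subseteq S$ --- contradicting primality. The degenerate case $|C| = 1$ is immediate: primality forces $G \setminus S$ to contain exactly the two full components $\{c\}$ and $C_0$, making $c$ the desired extremity. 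Finally, once an extremity $y \in C$ is produced, the argument can be steered so that $y$ is adjacent to $S$ (e.g.~by re-running it inside the component of $G \setminus N[y]$ missing $x$, which sits strictly inside $C$), giving $d(y, y') = 2$ and completing the proof.
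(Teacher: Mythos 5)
There is a fundamental mismatch here: your proposal is not a proof of the statement you were asked to prove. The statement is Lemma~\ref{lem:minsep} (``if a component $C$ of a minimal separator $S$ in a prime graph contains no extremity, then $N(c)\cap S$ separates $G\setminus C$ for every $c\in C$''), which the paper imports from~\cite{KKM01} without proof. What you have written is instead a proof plan for Lemma~\ref{lem:red-extrem} --- your text speaks of $y'\in F(x)$, of the ``in particular'' clause about finding an extremity $y$ with $d(y,y')\leq 2$, and of setting $k:=e(x)$, none of which appear in Lemma~\ref{lem:minsep}. Worse, your plan explicitly \emph{invokes} Lemma~\ref{lem:minsep} (``This allows us to invoke Lemma~\ref{lem:minsep}, whose contrapositive reads\dots''), so relative to the assigned task the argument is circular: you assume the very statement you were supposed to establish.

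To actually prove Lemma~\ref{lem:minsep} you would need an argument of a completely different shape: assume some $c\in C$ has the property that $N(c)\cap S$ is \emph{not} a separator of $G\setminus C$, and deduce from this (together with primality of $G$ and minimality of $S$) that $C$ must contain an extremity --- or argue the contrapositive directly by showing that if every vertex of $C$ failed to be an extremity in a suitable sense, the separation property would follow. Nothing in your proposal engages with that implication; the module-based reasoning you sketch (choosing $c^*$ with $N(c^*)\cap S$ inclusion-minimal, ruling out a ``trapped'' component $K$) is aimed at locating an extremity inside a component assuming Lemma~\ref{lem:minsep} already holds, which is exactly the structure of the paper's proof of Lemma~\ref{lem:red-extrem}, not of Lemma~\ref{lem:minsep} itself.
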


Recall for what follows that a vertex is called universal if and only if all other vertices are adjacent to it.

\begin{proof}[Proof of Lemma~\ref{lem:red-extrem}]
Since we assume that $|V| \geq 3$, we have that $x$ cannot be a universal vertex (otherwise, $V \setminus \{x\}$ would be a nontrivial module, thus contradicting that $G$ is prime).
Let $y \in F(x)$ be arbitrary and we assume that $y$ is not an extremity. 
We shall replace $y$ by some extremity $y^*$ so that $d(x,y) = d(x,y^*) = e(x)$.
First we observe that $x \notin N[y]$ because we assume that $x$ is not a universal vertex.
Let $w$ be disconnected from $x$ in $G \setminus N[y]$, and let $S \subseteq N[y]$ be a minimal $wx$-separator of $G$ (obtained by iteratively removing vertices from $N[y]$ while $w$ and $x$ stay disconnected)
-- possibly, $y \notin S$. --

We continue with a useful property of the connected components of $G \setminus S$.
Specifically, let $C$ be any connected component of $G \setminus S$ not containing vertex $x$.
We claim that $d(x,c) = e(x)$ for each $c \in C$.
Indeed, every shortest $xc$-path contains a vertex $s \in S$ and therefore, $d(x,y) \leq 1 + d(s,x) \leq d(c,s) + d(s,x) = d(c,x)$. In particular, $C \subseteq N(S)$ (otherwise, $d(x,y) < 2 + d(s,x) \leq d(x,c)$ for any $c \in C \setminus N(S)$ and any $s \in S$ that is on a shortest $cx$-path).
It implies that, if $c \in C$ is an extremity, then $d(c,y) \leq 2$.

In what follows, let $X$ the connected component of $x$ in $G \setminus S$.
Then, amongst all vertices of $G \setminus S$ in another connected component than $x$, let $a$ be minimizing $|N(a) \cap S|$ and let $A$ be its connected component in $G \setminus S$. 
Let us assume that $A$ does not contain an extremity of $G$ (else, we are done).
By Lemma~\ref{lem:minsep}, $N(a) \cap S$ is a separator of $G \setminus A$.
Let $b$ be separated from vertex $x$ in $G \setminus (A \cup N(a))$.
We claim that $b \notin S$.
In order to see that, we first need to observe that $X$ is a full component for $S$ (otherwise, $S$ could not be a minimal $wx$-separator).
Therefore if $b \in S$, then the subset $X \cup \{b\}$ would be connected, thus contradicting that $x$ and $b$ are disconnected in $G \setminus (A \cup N(a))$.
This proves our claim, and from now on we denote $B$ the connected component of $b$ in $G \setminus S$.
Observe that $B \neq X$ (otherwise, $N(a) \cap S$ could not be a $bx$-separator in $G \setminus A$).
We further claim that $N(B) \subseteq N(a)$.
Indeed, recall that $X$ is a full component for $S$.
Hence, if it were not the case that $N(B) \subseteq N(a)$ then the subset $X \cup B \cup (N(B) \setminus N(a))$ would be connected, thus contradicting that $x$ and $b$ are disconnected in $G \setminus (A \cup N(a))$.

By the above claim, we get $N(b') \cap S \subseteq N(B) \subseteq N(a) \cap S$, for every $b' \in B$.
Thus, by minimality of $|N(a) \cap S|$, we obtain $N(b') \cap S = N(a) \cap S = N(B)$ for every $b' \in B$.
But then, $B$ is a module of $G$, and therefore $B = \{b\}$ because $G$ is prime.
Finally, suppose by contradiction $b$ is not an extremity.
By repeating the exact same arguments for $b$ instead of $a$, we find another connected component $C=\{c\}$ of $G \setminus S$ so that: $c$ is separated from $x$ in $G \setminus (B \cup N(b)) = G \setminus N[b]$, and $N(c) = N(b) = N(a) \cap S$ (possibly, $C = A$ and $a=c$).
However, it implies that $b$ and $c$ are twins, a contradiction.
Overall, we may choose for our vertex $y^*$ either an extremity of $A$ (if there exists one) or vertex $b$. 
\end{proof}

\begin{corollary}\label{cor"extrem}
If $G=(V,E)$ is prime and $diam(G) \geq 2$, then there exist extremities $x,y$ such that $d(x,y) = diam(G)$.
\end{corollary}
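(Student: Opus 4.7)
The plan is a direct double application of Lemma~\ref{lem:red-extrem}. First I would fix any diametral pair $(x',y')$, so that $d(x',y') = diam(G)$ and in particular $e(x') = e(y') = diam(G)$. The assumption $diam(G) \geq 2$ guarantees $|V| \geq 3$, so the hypothesis of Lemma~\ref{lem:red-extrem} is satisfied at every vertex of $G$.

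Next I would apply Lemma~\ref{lem:red-extrem} to the vertex $x'$: this produces an extremity $y$ with $d(x',y) = e(x') = diam(G)$. In particular, $e(y) \geq d(x',y) = diam(G)$, hence $e(y) = diam(G)$. I would then apply Lemma~\ref{lem:red-extrem} a second time, now to the vertex $y$, to obtain an extremity $x$ with $d(x,y) = e(y) = diam(G)$. Both $x$ and $y$ are extremities by construction, and their distance equals the diameter, which is exactly the conclusion.

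There is essentially no obstacle here, since all the real work is already packaged in Lemma~\ref{lem:red-extrem}. The only small point to check before invoking it twice is that the graph has at least three vertices, which is immediate from $diam(G) \geq 2$.
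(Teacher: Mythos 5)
Your proof is correct and follows exactly the intended derivation: the paper states this as an immediate corollary of Lemma~\ref{lem:red-extrem}, obtained by applying that lemma twice (once at a diametral vertex, once at the resulting extremity). The observation that $diam(G)\geq 2$ forces $|V|\geq 3$ is the right small check before invoking the lemma.
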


Overall if we were given the $q$ extremities of a prime graph $G$, then by Lemma~\ref{lem:red-extrem}, we could compute all eccentricities (and so, the diameter) in ${\cal O}(qm)$ time. 
By Lemma~\ref{lem:num-extrem}, this runtime is in ${\cal O}(\alpha m^{3/2})$ for the graphs within $Ext_{\alpha}$, which is subquadratic for any fixed $\alpha$.
This bound can be improved to linear time for any graph of $Ext_{\alpha}$ with constant clique number. 
However, the best-known algorithms for computing the extremities run in ${\cal O}(nm)$ time and in ${\cal O}(n^{2.79})$ time~\cite{KrS06}, respectively. 
Furthermore, computing the extremities is at least as hard as triangle detection, even for AT-free graphs~\cite{KrS06}.
We leave as an open problem whether there exists a truly subquadratic algorithm for computing all extremities in a graph.

\subsection{Relationships with Dominating targets}\label{sec:dom-target}

A dominating target in a graph $G$ is a vertex-subset $D$ with the property that any connected subgraph of $G$ containing all of $D$ must be a dominating set.
Dominating targets of cardinality two have been studied under the different name of dominating pairs.
In particular, every AT-free graph contains a dominating pair~\cite{COS97}.

\begin{lemma}\label{lem:dom-target}[special case of Theorems $6$ and $7$ in~\cite{KKM01}]
If $G$ is prime, then every inclusion-wise maximal subset of pairwise nonadjacent extremities in $G$ is a dominating target.
\end{lemma}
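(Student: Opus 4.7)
The plan is to mimic the strategy of the proof of Lemma~\ref{lem:red-extrem}. Let $A$ be an inclusion-wise maximal set of pairwise nonadjacent extremities of $G$, and suppose for contradiction that some connected subgraph $H$ with $A \subseteq V(H)$ fails to dominate $G$, so that there is a vertex $v$ with $N[v] \cap V(H) = \emptyset$. Since $H$ is connected and entirely avoids $N[v]$, all of $V(H)$, and in particular all of $A$, lies in a single connected component of $G \setminus N[v]$.

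Fix any $a \in A$ and let $T \subseteq N(v)$ be a minimal $v$-$a$ separator of $G$. Then $T$ is a minimal separator of $G$: by minimality of $T$, the component $C_v$ of $v$ and the component $C_a$ of $a$ in $G \setminus T$ are both full components. Because $A$ is connected in $G \setminus N[v]$ and $T \subseteq N[v]$, we have $A \subseteq C_a$.

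The core claim I would then establish is that some connected component $C^\star$ of $G \setminus T$ distinct from $C_a$ contains an extremity $w$ of $G$. Granted this, $w \notin A$ (since $w \notin C_a \supseteq A$), and every neighbor of $w$ lies in $C^\star \cup T$, which is disjoint from $A$ (as $A \subseteq C_a$ and $A \cap T \subseteq A \cap N[v] = \emptyset$). Hence $A \cup \{w\}$ is a strictly larger set of pairwise nonadjacent extremities, contradicting the maximality of $A$.

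To prove the core claim, I would reproduce the endgame of the proof of Lemma~\ref{lem:red-extrem} almost verbatim: assuming no component of $G \setminus T$ other than $C_a$ contains an extremity, pick a vertex $u \notin C_a$ minimizing $|N(u) \cap T|$, let $U$ be its component, and apply Lemma~\ref{lem:minsep} to $U$. The fact that $C_a$ is a full component for $T$ is precisely what guarantees that the vertex separated from $C_a$ by $N(u) \cap T$ does not itself belong to $T$, so it sits in a third component of $G \setminus T$; minimality of $|N(u) \cap T|$ then forces that component to be a singleton $\{u'\}$ with $N(u') = N(u) \cap T$, and since $u'$ is also assumed not to be an extremity, repeating the argument with $u'$ in place of $u$ yields a further singleton component whose unique vertex is a twin of $u'$, contradicting primality of $G$. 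The main obstacle in writing this up is really just verifying that the ``$b \notin S$'' bookkeeping from Lemma~\ref{lem:red-extrem} transfers cleanly to our setting, but the full-component property of $C_a$ is exactly what makes it go through; the reduction from dominating-target failure to this minimal separator $T$ is the only step that does not already appear in that earlier proof.
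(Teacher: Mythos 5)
The paper offers no internal proof of Lemma~\ref{lem:dom-target}: it is imported as a special case of Theorems~6 and~7 of~\cite{KKM01}, so your argument is not a variant of the paper's proof but a self-contained replacement built from ingredients the paper does prove. It is correct. The reduction is sound: if a connected subgraph $H \supseteq A$ misses $N[v]$, then $V(H)$ is connected in $G \setminus N[v]$, hence $A \subseteq V(H)$ lies in the component $C_a$ of $G \setminus T$ for the minimal $v$-$a$ separator $T \subseteq N(v)$, and an extremity $w$ in any other component $C^{\star}$ has $N[w] \subseteq C^{\star} \cup T$, which is disjoint from $A$ (since $A \subseteq C_a$ and $A \cap N[v] = \emptyset$), contradicting maximality of $A$. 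The endgame does transfer as you claim: $C_a$ being a full component of the minimal separator $T$ plays exactly the role that the full component $X$ plays in the proof of Lemma~\ref{lem:red-extrem}, so choosing $u \notin C_a \cup T$ minimizing $|N(u) \cap T|$, invoking Lemma~\ref{lem:minsep} on its component, deducing that the separated component $B$ avoids $T$, satisfies $N(B) \subseteq N(u) \cap T$, and is therefore a singleton module $\{b\}$ with $N(b) = N(u) \cap T$, and finally iterating on $b$ (which is again a minimizer since $|N(b) \cap T| = |N(u) \cap T|$) to produce a nonadjacent twin, all go through verbatim and contradict primality. What your route buys is a proof entirely internal to the paper's toolkit; what the citation buys is brevity and the greater generality of~\cite{KKM01}. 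The only loose ends are cosmetic: the minimizer $u$ should explicitly be taken in $V \setminus (C_a \cup T)$, which is nonempty since it contains $v$, and the case $A = \emptyset$ is excluded because a prime non-complete graph has at least two extremities, as quoted at the start of Sec.~\ref{sec:bounds}.
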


We have the following relation between extremities and dominating targets:

\begin{lemma}\label{lem:extrem-close-to-target}
If $D$ is a dominating target of a graph $G$ (not necessarily prime), then every extremity of $G$ is contained in $N[D]$.
In particular, there are at most $(\Delta+1) \cdot |D|$ extremities, where $\Delta$ denotes the maximum degree of $G$.
\end{lemma}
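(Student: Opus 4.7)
The plan is to argue by contradiction. Suppose $v$ is an extremity of $G$ with $v \notin N[D]$. By definition of closed neighbourhood this means $D \cap N[v] = \emptyset$, i.e. $D \subseteq V \setminus N[v]$. Since $G$ is connected and $v$ is an extremity, the removal of $N[v]$ does not increase the number of connected components, hence $G[V \setminus N[v]]$ is a connected subgraph of $G$ (note that it is non-empty because it contains $D$, which we may assume non-empty since otherwise the statement is vacuous).

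Now $G[V\setminus N[v]]$ is a connected subgraph of $G$ containing all of $D$, so by the defining property of a dominating target the vertex set $V\setminus N[v]$ must be a dominating set of $G$. This is the desired contradiction: every neighbour of $v$ in $G$ lies in $N(v)\subseteq N[v]$, so $v$ has no neighbour in $V\setminus N[v]$ and is itself not in $V\setminus N[v]$, hence $v$ cannot be dominated by this set. Therefore every extremity of $G$ must lie in $N[D]$.

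The quantitative bound then follows by a trivial counting argument: every vertex $d \in D$ contributes at most $|N[d]| \leq \Delta + 1$ vertices to $N[D]$, so $|N[D]| \leq (\Delta+1)|D|$, and since the extremities form a subset of $N[D]$ their number is bounded by the same quantity.

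The main (minor) subtlety is ensuring that we are in a setting where the definition of extremity does imply that $G[V \setminus N[v]]$ is connected rather than empty. This is handled by noting that $D \subseteq V\setminus N[v]$ forces that set to be non-empty, and that throughout the paper $G$ is assumed connected; so ``does not increase the number of connected components'' specialises to ``$G[V\setminus N[v]]$ is connected'', which is all that the argument requires. Aside from this observation the proof is a direct unfolding of the definitions and involves no further combinatorial work.
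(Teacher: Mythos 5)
Your proof is correct and follows essentially the same route as the paper: both argue by contradiction that if an extremity $v$ lay outside $N[D]$, then $G \setminus N[v]$ would be a connected subgraph containing all of $D$ that fails to dominate $v$, contradicting the dominating-target property, and both obtain the counting bound by the trivial estimate $|N[D]| \leq (\Delta+1)|D|$. Your extra remark about non-emptiness of $G \setminus N[v]$ is a harmless elaboration of a point the paper leaves implicit.
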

\begin{proof}
Suppose by contradiction the existence of some extremity $v \notin N[D]$. 
Then, $H = G \setminus N[v]$ is a connected subgraph of $G$ that contains all of $D$ but such that $v$ has no neighbour in $V(H)$.
The latter contradicts that $D$ is a dominating target of $G$.
\end{proof}

It follows from both Lemma~\ref{lem:red-extrem} and Lemma~\ref{lem:extrem-close-to-target} that, for any dominating target $D$ in a prime graph, there is a diametral vertex in $N[D]$.
We slightly strengthen this result, as follows.
The following simple lemmas also generalize prior results on AT-free graphs~\cite{CDDH+01} and graphs with a dominating pair~\cite{Duc21c}.

\begin{lemma}\label{lem:ecc-dom-target}
If $D$ is a dominating target, then $F(x) \cap N[D] \neq \emptyset$ for any vertex $x$.

In particular if $F(x) \cap D = \emptyset$, then $F(x) \subseteq N(D)$.
\end{lemma}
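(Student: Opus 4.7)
The plan is to exploit the dominating-target property by exhibiting a very concrete connected subgraph containing $D$ and reading off which vertex of it must be close to an arbitrary $y \in F(x)$. For each $d \in D$, I would fix a shortest $xd$-path $T_d$ and set $H := \bigcup_{d \in D} V(T_d)$. Since all the paths $T_d$ share the vertex $x$, $H$ induces a connected subgraph of $G$, and by construction $H \supseteq D$. By hypothesis on $D$, $H$ is a dominating set of $G$, so for every $y \in F(x)$ there exists $h \in H$ with $d_G(y,h) \leq 1$.

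Next, I would locate $h$ metrically in two complementary ways. The triangle inequality gives $d(x,h) \geq d(x,y) - d(h,y) \geq e(x) - 1$. On the other hand, $h$ lies on some shortest $xd$-path $T_d$, hence $d(x,h) \leq d(x,d) \leq e(x)$. Two cases then arise. If $d(x,d) = e(x)$, then $d \in F(x) \cap D \subseteq F(x) \cap N[D]$ and the first claim follows immediately. Otherwise $d(x,d) = e(x) - 1$, and the inequality chain forces $d(x,h) = d(x,d)$; since $h$ lies on the shortest $xd$-path $T_d$, this forces $h = d$, whence $d(y,d) = d(y,h) \leq 1$ and $y \in N[D]$. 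Either way, $F(x) \cap N[D] \neq \emptyset$.

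For the ``in particular'' statement, assume $F(x) \cap D = \emptyset$ and pick an arbitrary $y \in F(x)$. The first case above would place $d$ in $F(x) \cap D$, contradicting the hypothesis, so only the second case survives. This yields $d(y,d) \leq 1$ with $y \neq d$ (since $y \notin D$), so $y \in N(d) \subseteq N(D)$. As $y \in F(x)$ was arbitrary, $F(x) \subseteq N(D)$. I do not anticipate a substantive obstacle here; the only mild subtlety is justifying that $H$ is connected, which is immediate from the shared endpoint $x$ of the paths $T_d$, and keeping careful track of the boundary case $d(x,d) = e(x) - 1$ in which the ``witness'' $h$ collapses onto $d$ itself.
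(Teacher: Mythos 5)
Your proposal is correct and follows essentially the same route as the paper's proof: both build the connected subgraph $H$ as the union of shortest $xd$-paths sharing the endpoint $x$, invoke the dominating-target property to find a witness $h \in H \cap N[y]$, and then use the distance constraints along the shortest path to force $h$ to coincide with the endpoint $d \in D$ (the paper phrases this as a contradiction when $y^* \neq u$, you phrase it as the inequality chain collapsing). The case split on $d(x,d) = e(x)$ versus $d(x,d) = e(x)-1$ mirrors the paper's opening reduction ``we may assume $F(x) \cap D = \emptyset$.''
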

\begin{proof}
We may assume that $F(x) \cap D = \emptyset$ (else, we are done).
In this situation, let $y \in F(x)$ be arbitrary.
Then, let $H$ be the union of shortest $xu$-paths, for every $u \in D$.
Since $H$ is a connected subgraph, we have $y \in N[H]$.
In particular, there is a shortest $ux$-path $P$, for some fixed $u \in D$, such that $y \in N[P]$.
Observe that $y \notin V(P)$ (otherwise, $d(x,y) \leq d(x,u)$, and therefore $u \in F(x)$).
So, let $y^* \in V(P) \cap N(y)$.
If $y^* \neq u$, then $d(x,y) \leq d(x,y^*) + 1 \leq (d(x,u)-1)+1 = d(x,u)$, and therefore, $u \in F(x)$. A contradiction.
As a result, $y \in N(u) \setminus D \subseteq N(D)$.
\end{proof}

\begin{corollary}\label{cor:neighbours-dom-target}
If $D$ is a dominating target of a graph $G$, and no vertex of $D$ is in a diametral pair, then $x,y \in N(D)$ for every diametral pair $(x,y)$.
\end{corollary}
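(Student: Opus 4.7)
My plan is to derive the corollary as a direct consequence of Lemma~\ref{lem:ecc-dom-target}. The key observation is that, for any diametral pair $(x,y)$, the vertex $y$ automatically lies in $F(x)$ (and symmetrically $x \in F(y)$), because $d(x,y) = \mathrm{diam}(G) \geq e(x)$ forces $e(x) = d(x,y)$.

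I would first argue that the hypothesis ``no vertex of $D$ is in a diametral pair'' implies $F(x) \cap D = \emptyset$. Indeed, if there were some $u \in F(x) \cap D$, then $d(x,u) = e(x) = \mathrm{diam}(G)$, so $(x,u)$ would be a diametral pair containing $u \in D$, contradicting the assumption. Having established $F(x) \cap D = \emptyset$, I can invoke the ``in particular'' part of Lemma~\ref{lem:ecc-dom-target} to conclude $F(x) \subseteq N(D)$, and since $y \in F(x)$, we get $y \in N(D)$.

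The argument for $x \in N(D)$ is entirely symmetric: $x \in F(y)$, and the same reasoning applied to $y$ gives $F(y) \cap D = \emptyset$, hence $F(y) \subseteq N(D)$ by Lemma~\ref{lem:ecc-dom-target}, and therefore $x \in N(D)$.

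I do not anticipate any real obstacle here; the whole content is already packaged in Lemma~\ref{lem:ecc-dom-target}, and the only additional observation needed is the elementary fact that every endpoint of a diametral pair realizes its own eccentricity. The corollary is essentially a reformulation of Lemma~\ref{lem:ecc-dom-target} specialized to diametral pairs, under a hypothesis that rules out the ``trivial'' inclusion $F(x) \cap D \neq \emptyset$.
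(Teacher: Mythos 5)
Your proposal is correct and matches the paper's own proof: both reduce the claim to Lemma~\ref{lem:ecc-dom-target} by noting that $y \in F(x)$ for a diametral pair, that the hypothesis forces $F(x) \cap D = \emptyset$, and then conclude by symmetry. Your write-up merely makes explicit the elementary observation $e(x) = d(x,y) = \mathrm{diam}(G)$ that the paper leaves implicit.
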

\begin{proof}
Let $(x,y)$ be an arbitrary diametral pair of $G$.
In order to prove the result, by symmetry, it suffices to prove that $y \in N(D)$.
Since $y \in F(x)$ and $D \cap F(x) = \emptyset$ (else, some vertex of $D$ would belong to a diametral pair), it follows from Lemma~\ref{lem:ecc-dom-target}.
\end{proof}

This above Corollary~\ref{cor:neighbours-dom-target} suggests the following strategy in order to compute the diameter of a prime graph $G$.
First, we compute a small dominating target $D$.
Then, we search for a diametral vertex within the neighbourhood of each of its $|D|$ vertices.
If $G \in Ext_{\alpha}$, then according to Lemma~\ref{lem:dom-target} there always exists such a $D$ with ${\cal O}(\alpha)$ vertices.
However, we are not aware of any truly subquadratic algorithm for computing this dominating target.
By Lemma~\ref{lem:dom-target}, it is sufficient to compute a maximal independent set of extremities, but then we circle back to the aforementioned problem of computing all extremities in a graph.
In Sec.~\ref{sec:alg} in the paper, we prove that we needn't compute a dominating target in full in order to determine what the diameter is.
Specifically, we may only compute a strict subset $D' \subset D$ of a dominating target (for that, we use the techniques presented in Sec.~\ref{sec:framework}).
However, the price to pay is that while doing so, we also need to consider a bounded number of vertices outside of $N[D']$ and their respective neighbourhoods.
Hence, the number of neighbourhoods to be considered grows to ${\cal O}(\alpha^2)$.
This will be our starting approach for proving Theorem~\ref{thm:main}.

Finally, we want to stress here that the property of having a small dominating target is not sufficient on its own for ensuring faster diameter computation algorithms.
Indeed, assuming the Strong Exponential-Time Hypothesis, the diameter of $n$-vertex graphs with $n^{1+o(1)}$ edges and a dominating edge cannot be computed in ${\cal O}(n^{2-\epsilon})$ time, for any $\epsilon > 0$~\cite{Duc21c}. For partial extensions of our results to the graphs with a bounded-cardinality dominating target, see Sec.~\ref{sec:extensions}.

\subsection{Relationships with Hyperbolicity}\label{sec:hyp-extrem}

{
Recall the definition of $\delta$-hyperbolic graphs in Sec.~\ref{sec:prelim}.
In a $\delta$-hyperbolic graph $G$, an ``almost central'' vertex of eccentricity $\leq rad(G) + c \delta$, for some $c > 0$, can be computed in linear time, using a double-sweep BFS (see Lemma~\ref{lem:rad-hyp}). Then, according to Lemma~\ref{lem:diam-hyp}, any diametral vertex must at a distance $\geq rad(G) - c' \delta$, for some $c' > 0$, to this almost central vertex.
Roughly, we wish to combine these properties with the computation of some subset $D'$ of a small dominating target (see Sec.~\ref{sec:dom-target}) in order to properly locate some neighbourhood that contains a diametral vertex.
For that, we need to prove here that graphs in $Ext_{\alpha}$ are $\delta$-hyperbolic for some $\delta$ depending on $\alpha$.
Namely:

\begin{lemma}\label{lem:hyp-ext-alpha}
Every graph $G \in Ext_{\alpha}$ is $(3\alpha-1)$-hyperbolic.
\end{lemma}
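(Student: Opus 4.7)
The plan is to reduce the claim to the case where $G$ is prime and then derive the bound from a small dominating target together with the metric consequences of asteroidal sets established in~\cite{KKM01}.

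\textbf{Reduction to prime $G$.} I would first argue that, up to a trivial additive loss, the hyperbolicity of $G$ coincides with the hyperbolicity of its quotient graph $G'$. For any two strong maximal modules $M \neq M'$ and any choice of representatives $u \in M$, $u' \in M'$, we have $d_G(u,u') = d_{G'}(M,M')$, whereas distances inside a module are at most $2$. A routine case analysis on the four vertices used in the 4-point condition (according to how many of them share a strong maximal module) shows that the hyperbolicity defect of any four-tuple in $G$ is at most $\max(2,\, 2\delta(G'))$. Since $3\alpha - 1 \geq 2$ for $\alpha \geq 1$, and since $G' \in Ext_\alpha$ whenever $G \in Ext_\alpha$, it suffices to establish the bound when $G$ is prime; in that case the $Ext_\alpha$ condition says that $G$ itself has at most $\alpha$ pairwise nonadjacent extremities.

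\textbf{Producing a small dominating target.} In the prime case, Lemma~\ref{lem:dom-target} gives an inclusion-wise maximal independent set of extremities $D \subseteq V(G)$ which is a dominating target, and with $|D| \leq \alpha$ by assumption. Lemma~\ref{lem:red-extrem} further allows us to realise the eccentricity of any vertex by some extremity of $G$, which will be useful to reach a contradiction in the next step.

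\textbf{From the dominating target to the 4-point condition.} The final step is to turn the existence of such a dominating target into the quantitative bound $2\delta(G) \leq 6\alpha - 2$. I would proceed by contradiction, assuming vertices $u_1, u_2, u_3, u_4$ whose two largest distance sums differ by more than $6\alpha - 2$. Using the ``asteroidal-set covering'' results of~\cite{KKM01}, together with Lemma~\ref{lem:red-extrem} to replace each $u_i$ (or certain midpoints of the shortest paths between them) by an extremity of $G$ witnessing its eccentricity, one should be able to exhibit $\alpha + 1$ pairwise nonadjacent extremities of $G$, contradicting $G \in Ext_\alpha$. The constant $3\alpha - 1$ arises naturally from combining the three pairs of distances appearing in the 4-point condition with the $\alpha$-bound on the size of an asteroidal set and a small additive saving of $1$.

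\textbf{Main obstacle.} The hard part is precisely the last step: making explicit the extraction of $\alpha + 1$ pairwise nonadjacent extremities from a hyperbolicity-violating configuration. The proof would likely rely on the geometry of geodesics with respect to asteroidal sets worked out in~\cite{KKM01}, which is why the statement advertises this lemma as ``easily following''. Once the corresponding asteroidal-set lemma of~\cite{KKM01} is invoked, the quantitative constant $3\alpha - 1$ is a direct calculation combined with the modular-decomposition reduction above.
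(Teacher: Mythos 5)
Your reduction to the prime case and your use of Lemma~\ref{lem:dom-target} to obtain a dominating target $D$ with $|D|\leq\alpha$ match the paper's opening moves (the paper simply cites a known result that hyperbolicity is inherited from the quotient graph for $\delta\geq 1$, rather than redoing the four-point case analysis, but your version of that step is fine in spirit). The problem is the third step, which is where the entire quantitative content of the lemma lives, and which you do not actually carry out. You propose to assume a four-tuple violating $(3\alpha-1)$-hyperbolicity and to ``exhibit $\alpha+1$ pairwise nonadjacent extremities,'' but you give no mechanism for producing even one extremity from such a configuration, let alone $\alpha+1$ pairwise nonadjacent ones; Lemma~\ref{lem:red-extrem} only produces an extremity realising an eccentricity, which has no evident relation to a four-point defect. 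You concede this yourself (``the hard part is precisely the last step''), and the claim that the constant $3\alpha-1$ ``arises naturally'' from the three distance sums is not a computation. As written, this is a plan for a proof, not a proof.

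The paper closes the gap by a completely different and much shorter route: a graph with a dominating target $D$ admits an \emph{additive tree $(3|D|-1)$-spanner} (a result of~\cite{KKM01}), and any graph admitting an additive tree $t$-spanner is $t$-hyperbolic (\cite{CDEH+08,DrK14}). Chaining these with $|D|\leq\alpha$ gives $(3\alpha-1)$-hyperbolicity immediately, and explains where the constant really comes from — it is the stretch of the tree spanner, not an accounting of the three distance sums in the four-point condition. If you want to salvage your approach, the missing ingredient is precisely this tree-spanner intermediary; a direct extraction of a large asteroidal set from a hyperbolicity-violating quadruple is not something~\cite{KKM01} provides, and I see no straightforward way to make it work.
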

\begin{proof}
This result directly follows from the combination of several prior works.
First, if the quotient graph $G'$ of $G$ is $\delta$-hyperbolic, for some $\delta \geq 1$, then so is $G$~\cite{Sot11}.
Thus, from now on we assume (up to replacing $G$ by $G'$) that graph $G$ is prime.
By Lemma~\ref{lem:dom-target}, $G$ contains a dominating target $D$ with $|D| \leq \alpha$ vertices.
Furthermore, if $D$ denotes a dominating target in a graph $G$, then $G$ admits an additive tree $(3|D|-1)$-spanner~\cite{KKM01}.
Since graphs with a tree $t$-spanner are $t$-hyperbolic~\cite{CDEH+08,DrK14}, we obtain as desired that $G$ is $(3\alpha-1)$-hyperbolic.
\end{proof}

We can improve this upper bound on the hyperbolicity for graphs of bounded asteroidal number.
Specifically, a graph is $k'$-chordal if it has no induced cycle of length more than $k'$. 
Every graph of asteroidal number at most $k$ must be $(2k+1)$-chordal. 
Furthermore, for any $k' \geq 4$, every $k'$-chordal graph is $\frac 1 2 \left\lfloor k'/2 \right\rfloor$-hyperbolic~\cite{WuZ11}.
Therefore, we obtain that the graphs of asteroidal number at most $k$ are $k/2$-hyperbolic.
}

\subsection{Extremities in some Graph classes}\label{sec:extrem-graph-classes}

We complete this section with the following inclusions between graph classes.

\begin{lemma}\label{lem:an-k}
Every graph $G$ of asteroidal number $k$ belongs to $Ext_k$.
\end{lemma}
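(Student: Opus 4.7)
The plan is to combine two observations: (a) as noted in the introduction of the paper, pairwise nonadjacent extremities of any prime graph (in particular, of the quotient graph $G'$) form an asteroidal set; (b) the asteroidal number cannot increase when passing from $G$ to its quotient graph $G'$. From (a) and (b) together, the number of pairwise nonadjacent extremities in $G'$ is at most the asteroidal number of $G$, which equals $k$, hence $G \in Ext_k$.

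The only substantive step is (b), which is a lifting argument. Let $A' = \{a_1,\ldots,a_\ell\}$ be an asteroidal set of $G'$, and identify each $a_i$ with the maximal strong module $M_i \subsetneq V(G)$ it represents. I would pick an arbitrary representative $v_i \in M_i$ for each $i$ and show that $\{v_1,\ldots,v_\ell\}$ is an asteroidal set of $G$. Pairwise non-adjacency is immediate from the definition of quotient edges: $a_i a_j \notin E(G')$ forces no edge between any vertex of $M_i$ and any vertex of $M_j$ in $G$.

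For the asteroidal property, fix $i$ and distinct $j,k \neq i$, and take any $a_j$-$a_k$-path $a_j, b_1, \ldots, b_r, a_k$ in $G' \setminus N_{G'}[a_i]$. I would lift it by picking an arbitrary $u_s \in b_s$ and forming the path $v_j, u_1, \ldots, u_r, v_k$ in $G$. This is a valid path because consecutive modules along the quotient path are adjacent in $G'$, hence all cross-edges are present in $G$. To check that the lifted path avoids $N_G[v_i]$: each intermediate module $b_s$ satisfies $b_s \neq a_i$ and $b_s a_i \notin E(G')$, hence $u_s \notin M_i$ and $u_s$ has no neighbour in $M_i$, so in particular $u_s \notin N_G[v_i]$; the same reasoning applies to $v_j$ and $v_k$ since $M_j, M_k$ are distinct from and nonadjacent to $M_i$ in $G'$. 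This yields an asteroidal set of size $\ell$ in $G$, proving (b).

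I do not foresee any genuine obstacle: the argument is purely a routine unpacking of the modular-decomposition definitions, together with the ``all-or-nothing'' adjacency between modules. The only mild subtlety is the verification that the lifted path stays outside $N_G[v_i]$, which is handled by the fact that each module $b_s$ on the quotient path is neither $a_i$ nor adjacent to $a_i$ in $G'$.
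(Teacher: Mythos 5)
Your proposal is correct and follows exactly the route the paper has in mind: the paper gives no explicit proof, declaring the lemma to follow ``trivially'' from the definitions, and its introduction already records both of your ingredients (pairwise nonadjacent extremities form an asteroidal set, and the count of such extremities in the quotient graph lower-bounds the asteroidal number of $G$). Your module-lifting argument for step (b) is a sound and complete write-up of the details the paper leaves implicit.
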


While this above Lemma~\ref{lem:an-k} trivially follows from the respective definitions of $Ext_k$ and the asteroidal number, the following result is less immediate: 
 
\begin{lemma}\label{lem:dp}
Every DP graph $G=(V,E)$ of diameter at least six belongs to $Ext_2$.
\end{lemma}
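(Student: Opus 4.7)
I would first replace $G$ by its quotient graph $G'$: by Lemma~\ref{lem:red-mod} this preserves the diameter, and since $diam(G)\geq 6$ rules out the first two alternatives of Theorem~\ref{thm:modular-dec}, the graph $G'$ is prime. Because $G\in Ext_2$ if and only if $G'$ contains at most two pairwise nonadjacent extremities, it is enough to prove the statement for $G'$. I argue by contradiction: assume there exist three pairwise nonadjacent extremities $a_1,a_2,a_3$ of $G'$, which by primality form an asteroidal set.

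Next I would lift the DP property from $G$ to a dominating pair of $G'$. Given a dominating pair $(u,v)$ of $G$, any $M_uM_v$-path in $G'$ lifts to a $uv$-path of $G$ by choosing one representative per module; this lifted path dominates $V(G)$ by DP, and the module property (every vertex outside a module is adjacent to all or none of it) then ensures that the original $G'$-path dominates $V(G')$. Hence $(M_u,M_v)$ is a dominating pair of $G'$. Applying the standard estimate $diam(H)\leq d_H(x,y)+2$ valid for every dominating pair $(x,y)$ in a graph $H$, I get $d_{G'}(M_u,M_v) \geq diam(G')-2 \geq 4$.

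I would then apply Lemma~\ref{lem:extrem-close-to-target} in $G'$ with the dominating target $\{M_u,M_v\}$: the three extremities $a_1,a_2,a_3$ all lie in $N_{G'}[\{M_u,M_v\}]$. A short case analysis rules out the possibility that some $a_i$ is a common neighbour of $M_u$ and $M_v$ in $G'$ (else $d_{G'}(M_u,M_v)\leq 2$, contradicting the previous paragraph), as well as the possibility that two of the $a_i$ coincide with $\{M_u,M_v\}$ (else the third $a_j$ would be nonadjacent to both ends of the dominating pair yet forced to lie in $N_{G'}[\{M_u,M_v\}]$). By the pigeonhole principle, two of the three extremities, say $a_1$ and $a_2$, must both lie in $N_{G'}(X)\setminus N_{G'}[Y]$ for a suitable ordering $\{X,Y\}=\{M_u,M_v\}$; in particular $d_{G'}(a_1,a_2)=2$, realised through the single common neighbour $X$.

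The last step, which I expect to be the main obstacle, consists in contradicting the lower bound $d_{G'}(M_u,M_v)\geq 4$. The plan is to combine the asteroidal $a_1a_2$-path $Q_{12}$ (which exists by the asteroidal property of $\{a_1,a_2,a_3\}$ and avoids $N_{G'}[a_3]$) with the hereditary DP property applied to the connected induced subgraph $G\setminus N_G[M_{a_3}]$ of $G$ (connected because $a_3$ is an extremity of $G'$, lifting at the module level). Lifting the new dominating pair given by DP in this subgraph to $G'$, and concatenating it with the detour $Q_{12}$ in the right orientation depending on whether $X=M_u$ or $X=M_v$, should realise an $M_uM_v$-walk in $G'$ of length at most three, yielding $d_{G'}(M_u,M_v)\leq 3$ and hence $diam(G')\leq 5$, a contradiction. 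The delicate part is to make this shortcut construction rigorous: one has to control the position of $a_3$ inside $N_{G'}[\{M_u,M_v\}]$, to make sure that the detour $Q_{12}$ truly reaches from a neighbour of $M_u$ to one of $M_v$ without being forced to traverse the long shortest $M_uM_v$-path of $G'$, and to exclude the degenerate configurations in which the new dominating pair of $G\setminus N_G[M_{a_3}]$ fails to interact properly with $Q_{12}$.
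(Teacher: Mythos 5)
Your opening moves match the paper: reduce to the quotient graph (the paper does this more directly, observing that the DP property is hereditary and proving the stronger claim that no DP graph of diameter at least six, prime or not, has three pairwise nonadjacent extremities), locate the three extremities inside $N[x]\cup N[y]$ for a dominating pair $(x,y)$ via Lemma~\ref{lem:extrem-close-to-target}, and pigeonhole two of them, say $a_1,a_2$, into the neighbourhood of the same end $x$. Up to that point you are on the paper's track.

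The final step, which you yourself flag as the main obstacle, is a genuine gap, and the route you sketch does not lead anywhere workable. First, the target of your contradiction is off: you aim to exhibit an $M_uM_v$-walk of length at most $3$, but $a_1$ and $a_2$ are both adjacent to the \emph{same} end $X$ of the dominating pair, so an $a_1a_2$-path avoiding $N[a_3]$ gives no leverage for travelling from $M_u$ to $M_v$; there is no mechanism in your construction that ever reaches the far end $Y$. Second, to relate the new dominating pair of $G\setminus N[M_{a_3}]$ to $a_1$ and $a_2$ you would need Lemma~\ref{lem:extrem-close-to-target} \emph{inside that subgraph}, which requires $a_1,a_2$ to still be extremities of $G\setminus N[M_{a_3}]$ --- and that is neither established nor obviously true (deleting $N[a_3]$ can disconnect $G\setminus(N[a_3]\cup N[a_1])$). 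The paper's proof avoids both problems by deleting a different set: $S=N(a_1)\cap N(a_2)$. Because $S\subseteq N(a_1)\cap N(a_2)$, extremality of $a_1$ and $a_2$ forces $G\setminus S$ to be connected (here the third extremity $a_3$ is used only to rule out $V=N[a_1]\cup N[a_2]$), and $a_1,a_2$ remain extremities of $G\setminus S$. The hereditary DP property then yields a dominating pair $(x',y')$ of $G\setminus S$ with $a_1$ near $x'$ and $a_2$ near $y'$, so the dominating shortest $x'y'$-path together with the two edges $xa_1,xa_2$ closes up into (near-)induced cycles; since no cycle of length at least $7$ contains a dominating pair, a case analysis on the possible chords shows $e(x)\le 4$, whence $diam(G)\le 5$. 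None of these ingredients --- the choice of $S$, the non-separator claim, the preservation of extremality, and the bounded-induced-cycle argument --- appear in your proposal, and without them the contradiction does not materialize.
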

\begin{proof}
Since the property of being a DP graph is hereditary, the quotient graph of any DP graph is also a DP graph.
In particular, it suffices to prove even more strongly that an arbitrary DP graph $G$ of diameter at least six (not necessarily prime) cannot contain three pairwise nonadjacent extremities.
Suppose by contradiction the existence of three such extremities $u,v,w$.
Let $(x,y)$ be a dominating pair.
By Lemma~\ref{lem:extrem-close-to-target}, $u,v,w \in N[x] \cup N[y]$.
Without loss of generality, let $u,v \in N(x)$.

We claim that $S=N(u) \cap N(v)$ is not a separator of $G$.
Suppose by contradiction that it is the case.
Let $A = N[u] \setminus N(v), \ B=N[v] \setminus N(u) \ \text{and} \ X = V \setminus \left(A \cup B \cup S\right)$.
Since $S \subseteq N(u)$ and $u$ is an extremity of $G$, $B \cup X$ must be contained in some connected component of $G \setminus S$.
But similarly, since $S \subseteq N(v)$ and $v$ is an extremity of $G$, $A \cup X$ must be also contained in some connected component of $G \setminus S$.
As a result, $X = \emptyset$, and the only two components of $G \setminus S$ are $A$ and $B$.
In particular, $w \in A \cup B \subseteq N[u] \cup N[v]$, that is a contradiction.
Therefore, we proved as claimed that $S$ is not a separator of $G$.

We now claim that $u,v$ are still extremities in the subgraph $G \setminus S$.
By symmetry, it suffices to prove the result for vertex $u$.
Observing that removing all of $N[u] \setminus S$ leaves us with $G \setminus N[u]$, we are done because $u$ is an extremity of $G$. -- However, please note that vertex $w$ may not be an extremity of $G \setminus S$. --

Since $G \setminus S$ is a DP graph, there exists a dominating pair $(x',y')$ in this subgraph.
Again by Lemma~\ref{lem:extrem-close-to-target} we have $u,v \in N[x'] \cup N[y']$.
Without loss of generality, let $u \in N[x'], \ v \in N[y']$ (possibly, $u = x'$, resp. $v = y'$).
Now, let $P=(z_0=x',z_1,\ldots,z_{\ell}=y')$ be a shortest $x'y'$-path of $G \setminus S$.
By construction, $P$ is a dominating path of $G \setminus S$.
In order to derive a contradiction, we shall prove, using $P$, that $e(x) \leq 4$. 
Indeed, doing so, since $(x,y)$ is a dominating pair, we obtain that $diam(G) \leq e(x) + 1 \leq 5$, a contradiction.
For that, let $t \in V$ be arbitrary. 
We may further assume $t \notin N[x]$.
If $t \in S$, then $u,v \in N(t) \cap N(x)$, therefore $d(x,t) \leq 2$.
From now on, let us assume that $t \notin S$. Consider some index $i$ such that $t \in N[z_i]$.
Let $Q_u$ be an induced $xz_i$-path such that $V(Q_u) \subseteq \{x,u,z_0,z_1,\ldots,z_i\}$.
In the same way, let $Q_v$ be an induced $xz_i$-path such that $V(Q_v) \subseteq \{x,v,z_{\ell},z_{\ell-1},\ldots,z_i\}$.
Let us first assume that $V(Q_u) \cup V(Q_v)$ induces a cycle $C$.
Then, the length of $C$ must be $\leq 6$ because $C$ is a DP graph and no cycle of length $\geq 7$ contains a dominating pair.
As a result, $d(x,z_i) \leq 3$, and so $d(x,t) \leq 4$. 
For the remainder of the proof, we assume that there exists a chord in the cycle $C$ induced by $V(Q_u) \cup V(Q_v)$.
Since the three of $P,Q_u,Q_v$ are induced paths, the only possible chords are: $uz_j$, for some $i+1 \leq j \leq \ell$; or $vz_j$, for some $0 \leq j \leq i-1$.
By symmetry, let $uz_j$ be a chord of $C$.
Since $P$ is a shortest $x'y'$-path of $G \setminus S$, and $u \in N[x']$, we obtain that $j \in \{0,1,2\}$.
In particular, we obtain that $i \leq 1$, and so, $d(x,t) \leq 1 + d(x,z_i) \leq 2+d(u,z_i) \leq 4$.
\end{proof}

Lemma~\ref{lem:dp} does not hold for diameter-five DP graphs, as it can be shown from the example in~\cite[Fig. 6]{PCK04}, that has three pairwise nonadjacent extremities.
Moreover, for every $n \geq 4$, there exists a diameter-two DP graph $K_n^+$ with $n$ pairwise nonadjacent extremities~\cite{PCK04}.
We left open whether, for any $d \in \{3,4,5\}$, there exists some constant $\alpha(d) \geq 3$ such that all diameter-$d$ DP graphs belong to $Ext_{\alpha(d)}$.

\section{A framework for computing extremities}\label{sec:framework}

We identify sufficient conditions for computing an independent set of extremities (not necessarily a maximal one).
To the best of our knowledge, before this work there was no faster known algorithm for computing {\em one} extremity than for computing all such vertices.
We present a simple linear-time algorithm for this problem on prime graphs --- see Sec.~\ref{sec:one-extrem}.
Then, we refine our strategy in Sec.~\ref{sec:gal-extrem} so as to compute one extremity avoided by some fixed connected subset.
This procedure is key to our proof of Theorem~\ref{thm:main}, for which we need to iteratively compute extremities, and connect those to some pre-defined vertex $c$ using shortest paths, until we obtain a connected dominating set.
In fact, our approach in Sec.~\ref{sec:gal-extrem} works under more general conditions which we properly state in Def.~\ref{def:transitive}.
Our main algorithmic tool here is LexBFS, of which we first recall basic properties in Sec.~\ref{sec:lexbfs}.

\subsection{LexBFS}\label{sec:lexbfs}
The Lexicographic Breadth-First Search (LexBFS) is a standard algorithmic procedure, that runs in linear time~\cite{RTL76}. We give a pseudo-code in Algorithm~\ref{alg:lexbfs}. Note that we can always enforce a start vertex $u$ by assigning to it an initial non empty label. Then, for a given graph $G = (V,E)$ and a start vertex $u$, $LexBFS(u)$ denotes the corresponding execution of LexBFS. Its output is a numbering $\sigma$ over the vertex-set (namely, the reverse of the ordering in which vertices are visited during the search). In particular, if $\sigma(i) = x$, then $\sigma^{-1}(x) = i$. 

%
\begin{algorithm}
	\caption{LexBFS~\cite{RTL76}.}
	\label{alg:lexbfs}
	\footnotesize
	
	\begin{algorithmic}[1]
		\REQUIRE{A graph $G=(V,E)$.}\\
		    \STATE{assign the label $\emptyset$ to each vertex;}
		    \FOR{$i=n$ {\bf to} $1$}
		        \STATE{pick an unnumbered vertex $x$ with the largest label in the lexicographic order;}
		        \FORALL{unnumbered neighbours $y$ of $x$}
		            \STATE{add $i$ to $label(y)$;}
		        \ENDFOR
		        \STATE{$\sigma(i) \leftarrow x$ {\it /* number $x$ by $i$ */};}
		    \ENDFOR
	\end{algorithmic}
\end{algorithm}

We use some notations from~\cite{COS99}.
Fix some LexBFS ordering $\sigma$.
Then, for any vertices $u$ and $v$, $u \prec v$ if and only if $\sigma^{-1}(u) < \sigma^{-1}(v)$.
Similarly, $u \preceq v$ if either $u = v$ or $u \prec v$.
Let us define $N_{\prec}(v) = \{ u \in N(v) \mid u \prec v \}$ and $N_{\succ}(v) = \{ u \in N(v) \mid v \prec u \}$.
Let also $\lhd$ denote the lexicographic total order over the sets of LexBFS labels.
For every vertices $u$ and $v$, $u \preceq v$, let $\lambda(u,v)$ be the label of vertex $u$ when vertex $v$ was about to be numbered. We stress that $\lambda(u,v) \trianglelefteq \lambda(v,v)$ ({\it i.e.}, the vertex selected to be numbered at any step has maximum label for the lexicographic order).
Furthermore, a useful observation is that $\lambda(u,v)$ is just the list of all neighbours of $u$ which got numbered before $v$, ordered by decreasing LexBFS number.
In particular, for any $u \preceq v$ we have $\lambda(u,v) = \lambda(v,v)$ if and only if $N_{\succ}(v) \subseteq N(u)$.
We often use this latter property in our proofs.

\begin{lemma}[monotonicity property~\cite{COS99}]\label{lem:monotone}
Let $a,b,c$ and $d$ be vertices of a graph $G$ such that: $a \preceq c$, $b \preceq c$ and $c \prec d$.
If $\lambda(a,d) \lhd \lambda(b,d)$, then $\lambda(a,c) \lhd \lambda(b,c)$.
\end{lemma}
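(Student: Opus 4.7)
The plan is to unfold the definition of the labels directly and exploit the fact that LexBFS builds each label by appending, in strictly decreasing order of iteration indices, whenever an unnumbered vertex discovers that one of its neighbours has just been numbered. Concretely, for any vertex $u$ that is still unnumbered at the step just before $v$ is numbered, $\lambda(u,v)$ is the list of integers $i > \sigma^{-1}(v)$ such that $\sigma(i) \in N(u)$, written in decreasing order.

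The main quantitative observation is then the following. Because $c \prec d$ means $\sigma^{-1}(c) < \sigma^{-1}(d)$, i.e., $c$ is numbered \emph{after} $d$, the label $\lambda(a,c)$ is obtained from $\lambda(a,d)$ by appending on the right the decreasing sequence of indices $i$ with $\sigma^{-1}(c) < i \leq \sigma^{-1}(d)$ and $\sigma(i) \in N(a)$, and analogously for $b$. Crucially, every appended entry is at most $\sigma^{-1}(d)$, whereas every entry already present in $\lambda(a,d)$ or $\lambda(b,d)$ is strictly greater than $\sigma^{-1}(d)$.

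The lemma then falls out of a short case analysis on how the hypothesis $\lambda(a,d) \lhd \lambda(b,d)$ is witnessed. If the two labels already differ at some earliest position $k$ with $\lambda(a,d)$ strictly smaller there, then appending fresh entries further to the right leaves the first $k$ positions of each label unchanged, so the same comparison still holds at time $c$. Otherwise $\lambda(a,d)$ is a proper prefix of $\lambda(b,d)$; the first ``extra'' entry of $\lambda(b,d)$ after this prefix is an integer strictly greater than $\sigma^{-1}(d)$, whereas whatever $\lambda(a,c)$ may append beyond $\lambda(a,d)$ consists only of integers at most $\sigma^{-1}(d)$. Hence either $\lambda(a,c) = \lambda(a,d)$ remains a proper prefix of $\lambda(b,c)$, or the two labels first disagree at position $|\lambda(a,d)|+1$ with $\lambda(a,c)$ strictly smaller there; in either subcase, $\lambda(a,c) \lhd \lambda(b,c)$, as wanted.

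I do not anticipate any substantive obstacle in this proof. The only trap is the bookkeeping imposed by the fact that LexBFS numbers vertices from $n$ down to $1$, so $c \prec d$ must be interpreted as ``$c$ is numbered after $d$'', ensuring that $\lambda(\cdot,c)$ extends $\lambda(\cdot,d)$ and not the other way around. Once the key dichotomy ``appended entries $\leq \sigma^{-1}(d) <$ already-present entries'' is isolated, the proper-prefix subcase is the only place where this dichotomy is actually needed, and the rest is routine.
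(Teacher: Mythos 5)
The paper does not prove this lemma at all: it is imported as a known ``monotonicity property'' of LexBFS from the cited reference, so there is no in-paper argument to compare against. Your blind proof is correct and complete as a first-principles derivation. The characterization of $\lambda(u,v)$ as the decreasing list of indices $i>\sigma^{-1}(v)$ with $\sigma(i)\in N(u)$ matches the paper's stated reading of the labels (and is legitimate here because $a\preceq c\prec d$ and $b\preceq c\prec d$ guarantee that $a$ and $b$ are still unnumbered at both reference times, so all four labels are well defined). The key dichotomy — entries appended between time $d$ and time $c$ are all $\leq\sigma^{-1}(d)$, while entries already present in $\lambda(\cdot,d)$ are all $>\sigma^{-1}(d)$ — is exactly what makes the proper-prefix subcase go through, and your two cases (first disagreement at a common position, versus $\lambda(a,d)$ a proper prefix of $\lambda(b,d)$) exhaust the ways a strict lexicographic inequality can be witnessed. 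No gap.
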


\begin{corollary}\label{cor:monotone}
Let $x,y,z$ be vertices of a graph $G$ such that: $x \preceq y \preceq z$, and $\lambda(x,z) = \lambda(z,z)$.
Then, $\lambda(y,z) = \lambda(z,z)$.
\end{corollary}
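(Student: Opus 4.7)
The plan is to derive the corollary from Lemma~\ref{lem:monotone} by a short contradiction argument. First, I would dispose of the trivial case $y=z$, in which $\lambda(y,z) = \lambda(z,z)$ holds by definition. So from now on assume $y \prec z$.

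The key observation is that the general ``max-label'' property of LexBFS already gives us \emph{one} of the two inequalities we need: since $z$ is the vertex selected at its own numbering step, every still-unnumbered vertex at that moment (in particular $y$, because $y \prec z$) has a label $\trianglelefteq \lambda(z,z)$. So it remains to exclude the strict inequality $\lambda(y,z) \lhd \lambda(z,z)$.

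Suppose for contradiction that $\lambda(y,z) \lhd \lambda(z,z)$. By hypothesis $\lambda(x,z) = \lambda(z,z)$, so this reads $\lambda(y,z) \lhd \lambda(x,z)$. I would now apply Lemma~\ref{lem:monotone} with the choice of parameters $a := y$, $b := x$, $c := y$, $d := z$: the conditions $a \preceq c$, $b \preceq c$ and $c \prec d$ follow from $x \preceq y$ and $y \prec z$, and the hypothesis $\lambda(a,d) \lhd \lambda(b,d)$ is exactly $\lambda(y,z) \lhd \lambda(x,z)$. The monotonicity property then yields $\lambda(y,y) \lhd \lambda(x,y)$. But at the step when $y$ is chosen to be numbered, $x$ is still unnumbered (as $x \preceq y$), and $y$ is picked because it has the lexicographically maximum label, so $\lambda(x,y) \trianglelefteq \lambda(y,y)$, a contradiction.

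This argument is essentially pure bookkeeping once one spots the right substitution in Lemma~\ref{lem:monotone}; the only step that requires any care is the choice $a = c = y$, which is what lets us simultaneously use the strict inequality at time $z$ as a hypothesis and contradict the maximality of $\lambda(y,y)$ at time $y$. I do not expect any genuine obstacle here.
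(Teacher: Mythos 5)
Your proof is correct and follows essentially the same route as the paper: both argue by contradiction, rewrite $\lambda(y,z) \lhd \lambda(z,z)$ as $\lambda(y,z) \lhd \lambda(x,z)$, and apply Lemma~\ref{lem:monotone} with $(a,b,c,d)=(y,x,y,z)$ to obtain $\lambda(y,y) \lhd \lambda(x,y)$, contradicting the maximality of the selected vertex's label. Your explicit handling of the case $y=z$ (needed for the hypothesis $c \prec d$ of the lemma) is a small tidiness improvement over the paper's writeup, but the argument is the same.
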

\begin{proof}
Suppose by contradiction $\lambda(y,z) \neq \lambda(z,z)$.
In particular, $\lambda(y,z) \lhd \lambda(z,z)$, that is equivalent to $\lambda(y,z) \lhd \lambda(x,z)$.
By Lemma~\ref{lem:monotone} applied to $(a,b,c,d) = (y,x,y,z)$, we get $\lambda(y,y) \lhd \lambda(x,y)$, a contradiction.
\end{proof}

\subsection{Finding one extremity}\label{sec:one-extrem}

It turns out that finding {\em one} extremity is simple, namely:

\begin{lemma}\label{lem:lexbfs}
If $G=(V,E)$ is a prime graph with $|V| \geq 3$, and $\sigma$ is any LexBFS order, then $v = \sigma(1)$ is an extremity of $G$.
\end{lemma}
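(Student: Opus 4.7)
My plan is to combine the Berry--Bordat theorem on end-vertices of LexBFS (\cite{BeB98}, already cited in the paper) with the module-based contradiction technique used in the paper's own proof of Lemma~\ref{lem:red-extrem}.

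First, by the Berry--Bordat theorem, the vertex $v = \sigma(1)$ belongs to a moplex $M$ of $G$. Since $M$ is a module and $G$ is prime, $M$ must be trivial: the alternative $M = V$ would force $G$ to be a clique, and hence every pair of vertices a non-trivial module, contradicting primality once $|V| \geq 3$. So $M = \{v\}$, and the moplex definition yields that $N(v)$ is a minimal separator of $G$ --- here one also uses that $v$ is not universal, since otherwise $V \setminus \{v\}$ would itself be a non-trivial module of size $\geq 2$. By the standard two-full-components characterisation of minimal separators, there exists a full component $C^{\ast}$ of $G \setminus N(v)$ with $C^{\ast} \subseteq V \setminus N[v]$ and $N(C^{\ast}) = N(v)$, the other full component being $\{v\}$ itself.

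The remaining step, and the one I expect to be the main obstacle, is to upgrade ``there exists a full component'' to ``$V \setminus N[v]$ is connected and equals $C^{\ast}$''. Here I would mirror the proof of Lemma~\ref{lem:red-extrem}: assume for contradiction that another component $C' \neq C^{\ast}$ of $G \setminus N[v]$ exists, pick amongst all components of $G \setminus N(v)$ other than $\{v\}$ a vertex $a$ minimising $|N(a) \cap N(v)|$, and let $A$ be its component. If $A$ contains no extremity, Lemma~\ref{lem:minsep} applies to $A$, and the minimality of $|N(a) \cap N(v)|$ then forces the existence of some component $B \ne \{v\}, A$ with $N(b') \cap N(v) = N(B)$ for all $b' \in B$. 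This makes $B$ itself a module of $G$, which primality collapses to a singleton $\{b\}$. Iterating the same construction one more time, with $b$ in place of $a$, produces a second singleton component $\{c\}$ with $N(c) = N(b)$, so that $\{b,c\}$ is a non-trivial false-twin module --- contradicting primality.

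The delicate residual case is when $A$ happens to contain an extremity of $G$, so Lemma~\ref{lem:minsep} is unavailable. I would handle it with a direct twin-module argument on that extremity, exploiting that in a prime graph any two pairwise non-adjacent extremities living in different components of $G \setminus N[v]$ must be distinguished by some vertex of $N(v)$, and re-running the minimality argument relative to them to again produce a size-two false-twin module. Once the uniqueness of the full component is secured in both cases, $V \setminus N[v] = C^{\ast}$ is connected, so $v$ is an extremity of $G$ by definition, and the lemma follows.
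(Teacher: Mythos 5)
Your reduction to moplexes discards exactly the LexBFS information that makes the lemma true, and the resulting intermediate claim --- that in a prime graph a vertex $v$ with $\{v\}$ a moplex must be an extremity --- is false. Take the $5$-cycle on $v,s_1,x_1,x_2,s_2$ together with a pendant vertex $w$ attached to $s_1$. This graph is prime (a module avoiding $w$ is a module of the $5$-cycle, which is prime; a module containing $w$ and any other vertex is forced to be all of $V$). Here $N(v)=\{s_1,s_2\}$ is a minimal separator with full components $\{v\}$ and $\{x_1,x_2\}$, so $\{v\}$ is a moplex; yet $G\setminus N[v]=\{x_1,x_2,w\}$ splits into the two components $\{x_1,x_2\}$ and $\{w\}$, so $v$ is not an extremity. (Consistently with the lemma, no LexBFS of this graph ends at $v$.) This shows your ``delicate residual case'' cannot be closed: when the component $A$ contains an extremity of $G$ there is simply no contradiction to be derived, and in the example above every component of $G\setminus N[v]$ other than $\{v\}$ contains an extremity, so you are always in that case. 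Mirroring Lemma~\ref{lem:red-extrem} is also structurally the wrong move for your goal: that argument is engineered to \emph{locate} an extremity in some component of $G\setminus S$, not to prove there is only one component, and even its ``no extremity in $A$'' branch terminates by exhibiting an extremity $b$ rather than a contradiction. The paper's remark that an alternative proof can be deduced from Berry and Bordat's work must therefore rely on finer properties of the moplex last visited by LexBFS than the bare membership statement you invoke.

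The repair is to keep the LexBFS ordering in play, as the paper's own proof does: supposing $G\setminus N[v]$ disconnected, take a component $C$ avoiding the start vertex $\sigma(n)$ and let $z$ be the vertex of $C$ with the largest LexBFS number. Maximality of $z$ within $C$ forces $N_{\succ}(z)\subseteq N(v)$, i.e.\ $\lambda(v,z)=\lambda(z,z)$, and the monotonicity property (Corollary~\ref{cor:monotone}) then shows that the entire interval $\{w \mid v\preceq w\preceq z\}$ is a nontrivial module of $G$, contradicting primality. This is where the hypothesis that $v$ is numbered \emph{last} is genuinely used, beyond its consequence that $\{v\}$ is a moplex.
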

\begin{proof}
%
Suppose by contradiction $G \setminus N[v]$ to be disconnected.
Let  $u = \sigma(n)$ be the start vertex of the LexBFS ordering, and let $C$ be any component of $G \setminus N[v]$ which does not contain vertex $u$.
We denote by $z = \sigma(i), \ n > i > 1$ the vertex of $C$ with maximum LexBFS number.
By maximality of $z$, we obtain that $N_{\succ}(z) \subseteq N(v)$.
In particular, $\lambda(v,z) = \lambda(z,z)$.
Then, let $M = \{ w \in V \mid v \preceq w \preceq z \}$.
Since we have $\lambda(v,z) = \lambda(z,z)$, by Corollary~\ref{cor:monotone} we also get $\lambda(w,z) = \lambda(z,z)$ for every $w \in M$.
But this implies $N(w) \setminus M = N_{\succ}(z)$ for each $w \in M$, therefore $M$ is a nontrivial module of $G$. 
A contradiction. 
\end{proof}

An alternative proof of Lemma~\ref{lem:lexbfs} could be deduced from the work of Berry and Bordat on the relations between moplexes and LexBFS~\cite{BeB00}.
However, to the best of our knowledge, Lemma~\ref{lem:lexbfs} has not been proved before.

\subsection{Generalization}\label{sec:gal-extrem}

Before generalizing Lemma~\ref{lem:lexbfs}, we need to introduce a few more notions and terminology.

\begin{definition}\label{def:dominance}
Let $G=(V,E)$ be a graph and let $u,v,w \in V$ be pairwise independent. 
We write $u \perp_w v$ if and only if $u,v$ are in separate connected components of $G \setminus N[w]$.
\end{definition}

A vertex $w$ intercepts a path $P$ if $N[w] \cap V(P) \neq \emptyset$, and it misses $P$ otherwise.
We can easily check that if $u \perp_w v$, then $w$ intercepts all $uv$-paths, and conversely if $u \not\perp_w v$ and $uw, vw \notin E$, then $w$ misses a $uv$-path.
Furthermore, we stress that for every fixed vertex $u$ and every vertex $w \notin N[u]$, $w$ is an extremity of $G$ if and only if there is no vertex $v$ such that $u \perp_w v$.

\begin{definition}\label{def:transitive}
Let $u$ and $S$ be, respectively, a vertex and a vertex-subset of some graph $G=(V,E)$.
We call $S$ a $u$-transitive set if, for every $x \in S$ and $y \in V$ nonadjacent, $x \perp_y u \Longrightarrow y \in S$.
\end{definition}

Next, we give examples of $u$-transitive sets.

\begin{lemma}\label{lem:transitive}
If $H$ is a connected subgraph of a graph $G$, then its closed neighbourhood $N[H]$ is $u$-transitive for every $u \in V(H)$.
In particular, every ball centered at $u$ and of arbitrary radius is $u$-transitive.
\end{lemma}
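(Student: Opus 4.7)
My plan is to prove the statement by contraposition. I will assume $x \in N[H]$, $y \in V$ with $xy \notin E$ and $x \perp_y u$, and show $y \in N[H]$ by ruling out the opposite. So assume toward contradiction that $y \notin N[H]$. Unpacking this, $y \notin V(H)$ and $y$ has no neighbour inside $V(H)$, which means $V(H) \cap N[y] = \emptyset$. In other words, the whole of $V(H)$ survives in the subgraph $G \setminus N[y]$.

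Since $H$ is connected and $u \in V(H)$, the set $V(H)$ is entirely contained in one connected component of $G \setminus N[y]$, namely the component $C_u$ containing $u$. The rest is a short case split on how $x$ sits in $N[H]$. If $x \in V(H)$, then directly $x \in C_u$. Otherwise $x \notin V(H)$ but $xh \in E$ for some $h \in V(H) \subseteq C_u$; using that $xy \notin E$ (and $x \neq y$), we have $x \notin N[y]$, so $x$ is itself a vertex of $G \setminus N[y]$ adjacent to $h \in C_u$, forcing $x \in C_u$. In either subcase, $x$ and $u$ lie in the same component of $G \setminus N[y]$, contradicting $x \perp_y u$.

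For the ``in particular'' clause, I will deduce it as a direct corollary. For $\ell \geq 1$, the induced subgraph $H = G[N^{\ell-1}[u]]$ is connected (any shortest path from $u$ to a vertex at distance $\leq \ell-1$ stays within $N^{\ell-1}[u]$) and contains $u$; by construction $N[V(H)] = N^{\ell}[u]$, so the main part of the lemma applies. The degenerate case $\ell = 0$, i.e.\ $S = \{u\}$, is vacuous: the hypothesis $x \in S$ forces $x = u$, but then $x \perp_y u$ would be $u \perp_y u$, which is excluded by Definition~\ref{def:dominance} requiring three pairwise independent vertices.

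I do not expect any serious obstacle here; the only point to handle with care is the pairwise-independence requirement inside Definition~\ref{def:dominance}, which must be invoked to conclude $x \notin N[y]$ in the second subcase above. Everything else is bookkeeping about connected components of $G \setminus N[y]$.
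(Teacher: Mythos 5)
Your proof is correct and is essentially the paper's argument in contrapositive form: the paper takes an $xu$-path $P$ through $H$ and notes that $y$ must intercept $P\setminus\{x\}$ since $x\perp_y u$ and $xy\notin E$, whereas you observe that if $y$ missed all of $N[H]$ then $V(H)\cup\{x\}$ would stay connected in $G\setminus N[y]$ — the same underlying fact. Your explicit treatment of the ball case (including radius $0$) is a harmless elaboration of what the paper leaves implicit.
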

\begin{proof}
Let $x \in N[H]$ and $y$ satisfy $x \perp_y u$.
Since $x \in N[H]$ and $u \in V(H)$, there exists a $xu$-path $P$ of which all vertices except maybe $x$ are in $H$.
Furthermore, since we assume that $x \perp_y u$, and so $x$ and $y$ are nonadjacent, we obtain $y \in N[P \setminus \{x\}] \subseteq N[H]$.
\end{proof}

For an example of non-connected $u$-transitive set, we may simply consider three pairwise nonadjacent vertices $u,v,w$ in a cycle. Then, $S=\{v,w\}$ is $u$-transitive.

\medskip
We are now ready to state the following key lemma:

\begin{lemma}\label{lem:next-extrem}
Let $G = (V,E)$ be a prime graph with $|V| \geq 3$, let $u \in V$ be arbitrary and let $S \subseteq V$ be $u$-transitive.
If $V \neq S \cup N[u]$, then we can compute in linear time an extremity $v \notin S \cup N[u]$ such that $d(u,v)$ is maximized.
\end{lemma}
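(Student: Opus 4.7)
The plan is to generalize the LexBFS argument of Lemma~\ref{lem:lexbfs} via a modified starting labeling. First, I would run a BFS from $u$ to compute all distances $d(u,\cdot)$, set $D := \max\{d(u,w) : w \notin S \cup N[u]\}$ (well-defined because $V \neq S \cup N[u]$), and let $L := \{w \notin S \cup N[u] : d(u,w) = D\}$. I then execute a single LexBFS in which the initial labels are $\{2n+1\}$ for vertices in $S \cup N[u]$, $\{n+1\}$ for vertices outside $S \cup N[u]$ at distance strictly less than $D$ from $u$, and empty for vertices in $L$. Since the prefixes $2n+1$ and $n+1$ are preserved throughout the search, a straightforward lexicographic comparison of labels shows that $S \cup N[u]$ is numbered entirely before the ``middle'' layer, which is in turn numbered entirely before $L$. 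In particular $v := \sigma(1) \in L$, so $v \notin S \cup N[u]$ and $d(u,v) = D$; hence returning $v$ will settle the lemma once we establish that $v$ is an extremity. Both the BFS and the LexBFS run in ${\cal O}(n+m)$ time.

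It then remains to show that $v$ is an extremity of $G$. Assume for contradiction that $G \setminus N[v]$ has a component $C$ distinct from the one containing $u$ (note $u \in G \setminus N[v]$ because $v \notin N[u]$). I claim $C \subseteq L$. The $u$-transitivity of $S$, contrapositively applied with $y := v \notin S$, gives $x \not\perp_v u$ for every $x \in S$ nonadjacent to $v$; thus every vertex of $S$ appearing in $G \setminus N[v]$ lies in $u$'s component, so $C \cap S = \emptyset$. Similarly $C \cap N[u] = \emptyset$, because every neighbor of $u$ outside $N[v]$ stays in $u$'s component of $G \setminus N[v]$. Finally, for any $c \in C$ a shortest $uc$-path must cross $N(v)$ at some vertex $w$, yielding $d(u,c) = d(u,w)+d(w,c) \geq (D-1)+1 = D$; combined with $d(u,c) \leq D$ (which follows from $c \notin S \cup N[u]$ and the definition of $D$), this forces $d(u,c) = D$, so $c \in L$.

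I then close the argument as in Lemma~\ref{lem:lexbfs}. Let $z$ be the vertex of $C$ with maximum LexBFS number. By maximality of $z$ in $C$, any neighbor of $z$ numbered before $z$ lies outside $C$, hence in $N(C) \subseteq N(v)$; thus $N_{\succ}(z) \subseteq N(v)$, which gives $\lambda(z,z) \subseteq \lambda(v,z)$ as sets. Combining this with $\lambda(v,z) \trianglelefteq \lambda(z,z)$ (which holds since $v$ is still unnumbered when $z$ is picked with maximum label) forces $\lambda(v,z) = \lambda(z,z)$. Then Corollary~\ref{cor:monotone} yields $\lambda(w,z) = \lambda(z,z)$ for every $w$ in $M := \{w : v \preceq w \preceq z\}$, i.e.\ $N(w) \setminus M = N_{\succ}(z)$ for every $w \in M$ (using $v = \sigma(1)$, so that $V \setminus M = \{y : y \succ z\}$). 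Hence $M$ is a module of $G$ with $|M| \geq 2$. It cannot equal $V$: that would force $z = \sigma(n)$ to be the start vertex, but by construction the start vertex has initial label $\{2n+1\}$ and therefore lies in $S \cup N[u]$, contradicting $z \in C \subseteq L$. So $M$ is a nontrivial module of $G$, contradicting primality, and $v$ is indeed an extremity. The most delicate step is verifying $C \subseteq L$, where the $u$-transitivity of $S$ is used in an essential way to rule out components of $G \setminus N[v]$ meeting $S \cup N[u]$; once this is in place, the modular argument of Lemma~\ref{lem:lexbfs} transfers with little modification.
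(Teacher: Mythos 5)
Your proof is correct, but it takes a genuinely different route from the paper's. The paper runs a plain LexBFS($u$), takes the last-visited vertex $w$ outside $S$, and observes that the set $M_0$ of suitably tied vertices is only a module of $G \setminus S$ — which does not yet contradict primality — so it must then iteratively refine $M_0$ by partition refinement on neighbourhoods in $S_j$, maintaining the invariant that any ``bad'' component of $G \setminus N[v]$ stays inside $M_j$, until a singleton (necessarily an extremity) remains. Your three-tier initial labelling sidesteps that entire refinement loop: by forcing $S \cup N[u]$, then the middle distance layers, then the farthest layer $L$ to be visited in that order, you guarantee both that $\sigma(1) \in L$ and that the interval $M = \{w : v \preceq w \preceq z\}$ lies wholly inside $L$ (with identical, empty, initial labels), so the monotonicity argument produces a nontrivial module of $G$ itself and the contradiction is immediate. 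Your verification that $C \subseteq L$ is also cleaner than the paper's: you replace the LexBFS-index argument by the metric observation that any $uc$-path must cross $N[v]$, forcing $d(u,c) \geq D$, while the exclusion of $S$ and $N[u]$ from $C$ uses $u$-transitivity exactly as the paper does. The one point you should make explicit is that Lemma~\ref{lem:monotone} and Corollary~\ref{cor:monotone} are stated for standard LexBFS, whereas you run LexBFS with nonempty initial labels; the property does carry over, because the initial entries $2n+1$ and $n+1$ exceed every dynamically appended number, so labels still evolve by appending entries strictly smaller than all existing ones, which is all the monotonicity proof needs. With that remark added, your argument is a valid — and arguably shorter — alternative to the paper's proof; what the paper's version buys in exchange is that it uses only an off-the-shelf LexBFS($u$) plus partition refinement, and its invariants {\tt(i)}--{\tt(iii)} are stated in a form that is reused conceptually in the proof of Theorem~\ref{thm:chordal-domt}.
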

\begin{proof}
We describe the algorithm before proving its correctness and analysing its runtime.

\smallskip
{\bf Algorithm.}
Let $\sigma$ be any LexBFS($u$) order of $G$, and let $w \notin S$ be minimizing $\sigma^{-1}(w)$.
First we compute the maximum index $i$, $n > i \geq \sigma^{-1}(w)$, so that $\lambda(w,\sigma(i)) = \lambda(\sigma(i),\sigma(i)) = \lambda_i$.
We set $j := 0$, $S_j := S$ and $M_j := \{ v \notin S \mid w \preceq v \preceq \sigma(i) \}$.
Then, while $|M_j| > 1$, we apply the following procedure:
\begin{itemize}
    \item We partition $M_j$ into groups $A_j^1,A_j^2,\ldots,A_j^{p_j}$ so that two vertices are in the same group if and only if they have the same neighbours in $S_j$. 
    \item Without loss of generality let $A_j^1$ be minimizing $|N(A_j^1) \cap S_j|$. 
    We set $M_{j+1} = A_j^1$, $S_{j+1} = M_j \setminus M_{j+1}$.
    \item We set $j := j+1$.
\end{itemize}
If $|M_j| = 1$ (end of the while loop), then we output the unique vertex $v \in M_j$.

\medskip
{\bf Correctness.}
We prove by induction that the following three properties hold, for any $j \geq 0$: {\tt(i)} $M_j$ is a module of $G \setminus S_j$; {\tt(ii)} every $v \in M_j$ is a vertex of $V \setminus S$ such that $d(u,v)$ is maximized; and {\tt(iii)} for every $v \in M_j$, either $v$ is an extremity of $G$ or every connected component $C$ of $G \setminus N[v]$ that does not contain vertex $u$ satisfies $C \subseteq M_j$.

First, we consider the base case $j=0$.
Recall that $M_0  = \{ v \notin S \mid w \preceq v \preceq \sigma(i)  \}$, where $i$ is the maximum index such that $\lambda(w,\sigma(i)) = \lambda(\sigma(i),\sigma(i)) = \lambda_i$.
By Corollary~\ref{cor:monotone} we have $\lambda(v,\sigma(i)) = \lambda_i$ for each $v \in M_0$, and so, $N_{\succ}(\sigma(i)) \subseteq N(v)$.
The latter implies that $d(u,v) = d(u,\sigma(i))$ because $\sigma$ is a (Lex)BFS order.
Equivalently, $d(u,v) = d(u,w)$, and by the minimality of $\sigma^{-1}(w)$ we have that $w$ is a vertex of $V \setminus S$ maximizing $d(u,w)$ (Property {\tt(ii)}).
Moreover, $N(v) \setminus (S \cup M_0) = N_{\succ}(\sigma(i)) \setminus S$ for each $v \in M_0$, therefore $M_0$ is a module of $G \setminus S$ (Property {\tt(i)}).
Now, let $v \in M_0$ be arbitrary.
If $v$ is not an extremity of $G$, then let $C$ be any connected component of $G \setminus N[v]$ not containing vertex $u$. We claim that $C \subseteq M_0$. 
Indeed, suppose by contradiction $C \cap S \neq \emptyset$.
By maximality of $d(u,v)$, we have $v \notin N[u]$ (for else, $V \setminus S \subseteq N[u]$).
But then, we would get $s \perp_v u$ for any $s \in C \cap S$, and therefore by the definition of $S$ we should have $v \in S$.
A contradiction.
As a result we have $C \cap S = \emptyset$.
Furthermore, we have $w \preceq c$ for each $c \in C$, that follows from the minimality of $\sigma^{-1}(w)$.
Let $z \in C$ be maximizing $\sigma^{-1}(z)$.
In order to prove that $C \subseteq M_0$, it now suffices to prove that $\sigma^{-1}(z) \leq i$.
Suppose by contradiction that it is not the case.
We first observe $N_{\succ}(z)  \subseteq N(v)$ by maximality of $\sigma^{-1}(z)$. 
Therefore (since in addition, $v \preceq \sigma(i) \prec z$), $\lambda(v,z) = \lambda(z,z)$.
Since we suppose $v \preceq \sigma(i) \prec z$, we also get by Corollary~\ref{cor:monotone} that $\lambda(\sigma(i),z) = \lambda(z,z)$.
Then, $N_{\succ}(z) \subseteq N_{\succ}(\sigma(i)) \subseteq N(w)$.
However, it implies $\lambda(w,z) = \lambda(z,z)$, thus contradicting the maximality of $i$ ($< \sigma^{-1}(z)$) for this property.

Then, let us assume that $M_j,S_j$ satisfy all of properties {\tt(i)}, {\tt(ii)} and {\tt(iii)}, and that $|M_j| > 1$.
By construction, all vertices in $M_{j+1}$ have the same neighbours in $S_j$.
Since $M_{j+1} \subset M_j$ and $M_j$ is a module of $G \setminus S_j$, we obtain that $M_{j+1}$ is a module of $G \setminus \left(M_j \setminus M_{j+1}\right) = G \setminus S_{j+1}$ (Property {\tt(i)}).
Property {\tt(ii)} also holds because it holds for $M_j$ and $M_{j+1} \subset M_j$.
Now, let $v \in M_{j+1}$ be arbitrary, and let us assume it is not an extremity of $G$.
Let $C$ be any component of $G \setminus N[v]$ not containing vertex $u$.
By Property {\tt(iii)}, $C \subseteq M_j$.
Suppose by contradiction $C \not\subset M_{j+1}$.
Let $z \in C \setminus M_{j+1}$. Since we have $C \cap S_j = \emptyset$, we obtain $N(z) \cap S_j \subseteq N(v) \cap S_j$.
By minimality of $|N(v) \cap S_j| = |N(M_{j+1}) \cap S_j|$, we get $N(z) \cap S_j = N(v) \cap S_j$, which contradicts that $z \notin M_{j+1}$.

\smallskip
The above Property {\tt(iii)} implies that, if $|M_j| = 1$, then the unique vertex $v \in M_j$ is indeed an extremity.
Furthermore, by Property {\tt(ii)}, $v$ is a vertex of $V \setminus S$ that maximizes $d(u,v)$.
Hence, in order to prove correctness of the algorithm, all that remains to prove is that this algorithm eventually halts.
For that, we claim that if $|M_j| > 1$ then $|M_{j+1}| < |M_j|$.
Indeed, Property {\tt (i)} asserts that $M_j$ is a module of $G \setminus S_j$.
Let $A_j^1,A_j^2,\ldots,A_j^{p_j}$ be the partition of $M_j$ such that two vertices are in the same group if and only if they have the same neighbours in $S_j$.
Since $G$ is prime, $M_j$ cannot be a nontrivial module of $G$, and therefore $p_j \geq 2$.
Hence, $|M_{j+1}| = |A_j^1| < |M_j|$, as claimed.
This above claim implies that eventually we reach the case when $|M_j| = 1$, and so, the algorithm eventually halts.

\smallskip
{\bf Complexity.}
Computing the LexBFS ordering $\sigma$ can be done in linear time~\cite{RTL76}.
Then once we computed vertex $w$ in additional ${\cal O}(n)$ time, we can compute the largest index $i$ such that $\lambda(w,\sigma(i)) = \lambda(\sigma(i),\sigma(i))$ as follows.
We mark all the neighbours of vertex $w$, then we scan the vertices by decreasing LexBFS number, and we stop at the first encountered vertex $x \neq u$ such that all vertices in $N_{\succ}(x)$ are marked.
Since all the neighbour-sets need to be scanned at most once, the total runtime for this step is linear.
Finally, we dynamically maintain some partition such that, at the beginning of any step $j \geq 0$, this partition equals $(M_j)$.
If $|M_j| > 1$, then we consider each vertex $s \in S_j$ sequentially, and we replace every group $X$ in the partition by the nonempty groups amongst $X \setminus N(s),X \cap N(s)$. 
In doing so, we obtain the partition $(A_j^1,A_j^2,\ldots,A_j^{p_j})$.
We remove all groups but $M_{j+1}$, computing $S_{j+1} = M_j \setminus M_{j+1}$ along the way.
By using standard partition refinement techniques~\cite{HMPV00,PaT87}, after an initial processing in ${\cal O}(|M_0|)$ time each step $j$ can be done in ${\cal O}\left(\sum_{s \in S_j}|N(s)| + |M_j \setminus M_{j+1}|\right)$ time. 
Because all the sets $S_j$ are pairwise disjoint the total runtime is linear.
\end{proof}

\section{Proof of Theorem~\ref{thm:main}}\label{sec:alg}

In Sec.~\ref{sec:approx}, we present a linear-time algorithm for computing a vertex whose eccentricity is within one of the true diameter.
This part of the proof is simpler, and it gives some intuition for our exact diameter computation algorithm, which we next present in Sec.~\ref{sec:exact}.

For simplicity, the following Theorems~\ref{thm:approx} and~\ref{thm:exact} assume that each input graph $G$ is given with some value $\alpha$ so that $G \in Ext_{\alpha}$.
However, this stringent assumption can be removed up to slight modifications of our main algorithms.
We postpone the details to Appendix~\ref{app:compute-alpha}.

\subsection{Approximation algorithm}\label{sec:approx}

\begin{theorem}\label{thm:approx}
For every graph $G=(V,E) \in Ext_{\alpha}$, we can compute in deterministic ${\cal O}(\alpha^2m)$ time estimates $e(v) \geq \bar{e}(v) \geq e(v) -1$ for every vertex $v$.
\end{theorem}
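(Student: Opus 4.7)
The plan is to first apply Lemma~\ref{lem:red-mod} and replace $G$ by its quotient graph, so that henceforth $G$ is prime for modular decomposition (the alternative case $\mathrm{diam}(G) \leq 2$ can be certified and handled directly), and $G$ still belongs to $Ext_\alpha$. The target structure will be a dominating target $D$ of $G$ with $|D| \leq \alpha$: once such a $D$ is available, one BFS from each vertex of $D$ produces the candidate estimates $\bar{e}(v) := \max_{d \in D} d(v,d)$ in total time $O(\alpha m)$; the inequality $\bar{e}(v) \leq e(v)$ is immediate, and the matching lower bound follows from Lemma~\ref{lem:ecc-dom-target}, by picking $y \in F(v) \cap N[D]$ and $d \in D$ with $y \in N[d]$: this yields $d(v,d) \geq d(v,y) - 1 = e(v) - 1$. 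By Lemma~\ref{lem:dom-target}, in order to obtain such a $D$ it is enough to compute a maximal independent set of extremities of $G$.

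To build this set incrementally I would fix an arbitrary start vertex $c$, maintain a connected subgraph $H \ni c$ initialised as $H := \{c\}$, and repeatedly feed the $c$-transitive set $S := N[H]$ (a $c$-transitive set by Lemma~\ref{lem:transitive}) into Lemma~\ref{lem:next-extrem}. Each such call returns in linear time an extremity $v_i \notin N[H]$ that maximises $d(c,v_i)$, and I would then append a shortest $cv_i$-path $P_i$ to $H$, which remains connected, still contains $c$, and now also contains $v_i$ as a vertex. Since every newly found $v_i$ lies outside $N[H] \supseteq N[c] \cup \bigcup_{j<i} N[v_j]$, the accumulated $\{v_1,\dots,v_k\}$ is always an independent set of extremities, all nonadjacent to $c$; because $G \in Ext_\alpha$, the loop must halt after at most $\alpha$ rounds, at which point $N[H] = V$ and $H$ is a dominating set.

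The collected set $\{v_1,\dots,v_k\}$ obtained this way need not be \emph{maximal} as an independent set of extremities of $G$: an extremity might be adjacent only to interior vertices of some $P_j$ while being nonadjacent to every $v_i$. I would therefore follow the main loop with an extension phase of at most $\alpha - k$ more searches. Each such search, a careful modification of Lemma~\ref{lem:next-extrem} in which the forbidden set is extended so as to encode the constraint ``not in $N[\{v_1, \ldots, v_k\}]$'' (e.g.\ by restarting the LexBFS from a vertex of $H$ chosen so that the new forbidden set becomes transitive with respect to this new start), either exhibits an extremity nonadjacent to all current $v_i$ (which we then adjoin to our set and to $H$ along a new shortest path) or certifies that no such extremity exists. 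Implementing each such search in $O(\alpha m)$ time brings the total cost of this phase to $O(\alpha^2 m)$ and yields the desired maximal independent set $D$ of extremities with $|D| \leq \alpha$. The concluding $|D|$ BFSes add a further $O(\alpha m)$, so the overall runtime is the announced $O(\alpha^2 m)$, and correctness is provided by the first paragraph.

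The main obstacle is this extension phase. Lemma~\ref{lem:next-extrem} is designed to forbid a \emph{single} $c$-transitive set, whereas here we also need to exclude the potentially-disconnected set $N[\{v_1, \ldots, v_k\}]$. Circumventing this without re-inspecting each candidate vertex of the current $H$ (which would blow up the runtime to $\Omega(nm)$), and guaranteeing that the extra work per extension stays within $O(\alpha m)$, is where the delicate reasoning lies and where the extra factor $\alpha$ in the final bound comes from.
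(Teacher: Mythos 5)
There is a genuine gap, and it sits exactly where you flag it: the ``extension phase.'' Your overall strategy is to compute a full dominating target $D$ (a maximal independent set of extremities, via Lemma~\ref{lem:dom-target}) and then run one BFS per vertex of $D$; the correctness of the resulting estimate via Lemma~\ref{lem:ecc-dom-target} is fine. But the paper explicitly states in Sec.~\ref{sec:dom-target} that it does \emph{not} know how to compute such a $D$ in truly subquadratic time --- doing so ``circles back'' to the problem of computing all extremities, for which the best known algorithms run in $\Omega(nm)$ time and which is at least as hard as triangle detection even on AT-free graphs. Lemma~\ref{lem:next-extrem} is structurally incapable of the search you need: it only ever returns extremities \emph{outside} a $c$-transitive set $S \cup N[c]$, whereas your extension phase must find extremities that lie \emph{inside} $N[H]$ (adjacent to interior path vertices) while being nonadjacent to $v_1,\dots,v_k$; the forbidden set $N[\{v_1,\dots,v_k\}]$ is not $c$-transitive, and no choice of restart vertex is shown to make it so. The claim that each such search, including the certification that no further extremity exists, can be done in ${\cal O}(\alpha m)$ time is asserted but not justified, and this is precisely the step the paper could not carry out.

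The paper's actual proof sidesteps maximality entirely. It keeps only the non-maximal independent set of extremities $x_1,\dots,x_t$ produced by your first loop (seeded by a double-sweep LexBFS so that $c$ is an approximate center), and uses the connected dominating set $H = \bigcup_i P_i$ directly: for any $v$, some $u \in H$ adjacent to a vertex of $F(v)$ satisfies $d(u,v) \geq e(v)-1$, and the issue is only to locate $u$ without a BFS from every vertex of $H$. This is where hyperbolicity enters: by Lemma~\ref{lem:hyp-ext-alpha}, $G$ is $(3\alpha-1)$-hyperbolic, and Lemmas~\ref{lem:rad-hyp} and~\ref{lem:diam-furthest} force any such $u$ to be far from $c$, hence among the ${\cal O}(\alpha)$ vertices of $P_i$ closest to the endpoint $x_i$. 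Running a BFS from each of these ${\cal O}(\alpha^2)$ vertices gives the estimates in ${\cal O}(\alpha^2 m)$ time. If you want to salvage your approach, you would need either a subquadratic algorithm for a maximal independent set of extremities (open) or an argument like the paper's that a non-maximal one suffices.
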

\begin{proof}
We may assume $G$ to be prime by Lemma~\ref{lem:red-mod}.
Furthermore, let us assume that $|V| \geq 3$.
We subdivide the algorithm in three main phases.
\begin{itemize}
    \item First, we compute some shortest path by using a double-sweep LexBFS.
    More specifically, let $x_1$ be the last vertex numbered in a LexBFS.
    Let $x_2$ be the last vertex numbered in a LexBFS($x_1$).
    We compute a vertex $c \in I(x_1,x_2)$ so that $d(c,x_1) = \left\lfloor d(x_1,x_2)/2\right\rfloor$.
    Then, let $P_1$ (resp., $P_2$) be an arbitrary shortest $cx_1$-path (resp., $cx_2$-path).
    -- Note that for AT-free graphs (but not necessarily in our case), the shortest $x_1x_2$-path $P_1 \cup P_2$ is dominating~\cite{COS99}. --
    \item Second, we set $H := P_1 \cup P_2$. While $H$ is not a dominating set of $G$, we compute an extremity $x_i \notin N[H]$ and we add an arbitrary shortest $cx_i$-path to $H$. -- We stress that such an extremity $x_i$ always exists due to $H$ being connected and therefore $N[H]$ being $c$-transitive (see Lemma~\ref{lem:transitive}) and by Lemma~\ref{lem:next-extrem}.-- Let $x_3,x_4,\ldots,x_t$ denote all the extremities computed. By construction, $H$ is the union of $t$ shortest paths $P_1,P_2,\ldots,P_t$ with one common end-vertex $c$.
    \item Finally, for every $1 \leq i \leq t$, let the subset $U_i$ be composed of the $\min\{d(x_{i},c)+1,66\alpha-19\}$ closest vertices to $x_i$ in $P_i$ (including $x_i$ itself).
    Let $U = \bigcup_{i=1}^t U_i$.
    For every vertex $v \in V$, we set $\bar{e}(v) := \max\{d(u,v) \mid u \in U\}$.
\end{itemize}

\smallskip
\noindent
{\bf Correctness.}
Let $v \in V$ be arbitrary.
Since $H$ is a dominating set of $G$, some vertex $u \in H$, in the closed neighbourhood of any vertex of $F(v)$, must satisfy $d(u,v) \geq e(v) -1$.
In order to prove correctness of our algorithm, it suffices to prove the existence of one such vertex in $U$.
For that, let $\delta$ be chosen such that $G$ is $\delta$-hyperbolic.
By Lemma~\ref{lem:rad-hyp}, $e(c) \leq  rad(G) + 5\delta$.
Furthermore if $u \in H$ satisfies $N[u] \cap F(v) \neq \emptyset$, then by Lemma~\ref{lem:diam-furthest}, $e(u) \geq 2rad(G) - 12\delta -2$.
In particular, $d(u,c) \geq rad(G) - 17\delta - 2$ (else, $e(u) \leq d(u,c) + e(c) \leq 2rad(G) - 12\delta - 3$).
For $1 \leq i \leq t$ such that $u \in V(P_i)$, since $P_i$ is a shortest $x_ic$-path of length at most $e(c) \leq rad(G) + 5\delta$, we get that $u$ must be one of the $(rad(G)+5\delta)-(rad(G)-17\delta-2)+1=22\delta+3$ closest vertices to $x_i$.
By Lemma~\ref{lem:hyp-ext-alpha}, $\delta \leq 3\alpha-1$, therefore $22\delta+3 \leq 66\alpha-19$.

\smallskip
\noindent
{\bf Complexity.} 
Recall that the first phase of the algorithm consists in a double-sweep LexBFS.
Hence, it can be done in linear time.
Then, at every step of the second phase we must decide whether $H$ is a dominating set of $G$, that can be done in linear time.
If $H$ is not a dominating set, then we compute an extremity $x_i \notin N[H]$, that can also be done in linear time by Lemma~\ref{lem:next-extrem}.
We further compute an arbitrary shortest $x_ic$-path $P_i$, that can be done in linear time using BFS.
Overall, the second phase takes ${\cal O}(tm)$ time, with $t$ the number of extremities computed.
Finally, in the third phase, we need to execute ${\cal O}(\alpha)$ BFS for every shortest path $P_1,P_2,\ldots P_t$, that takes ${\cal O}(\alpha tm)$ time.

By Lemma~\ref{lem:lexbfs}, both $x_1$ and $x_2$ are also extremities.
We observe that $x_1x_2 \notin E$ (else, since $x_2 \in F(x_1)$, $x_1$ would be universal, thus contradicting either that $G$ is prime or $|V| \geq 3$).
By construction, $x_3,x_4,\ldots,x_t$ are pairwise nonadjacent, and they are also nonadjacent to both $x_1$ and $x_2$.
Altogether combined, we obtain that $x_1,x_2,\ldots,x_t$ are pairwise nonadjacent extremities.
As a result, $t \leq \alpha$.
It implies that the total runtime is in ${\cal O}(\alpha^2 m)$.
\end{proof}

\subsection{Exact computation}\label{sec:exact}

The following general result, of independent interest, is the cornerstone of Theorem~\ref{thm:exact}:

\begin{lemma}\label{lem:search-diam}
Let $u$ be a vertex in a prime graph $G=(V,E)$.
If $G$ has $q$ extremities, then we can compute in ${\cal O}(q m)$ time the value $\ell(u) = \max\{ e(x) \mid x \in N[u] \}$, and a $x \in N[u]$ of eccentricity $\ell(u)$.
This is ${\cal O}(\alpha m^{3/2})$ time if $G \in Ext_{\alpha}$.
\end{lemma}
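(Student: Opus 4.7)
The plan is to iteratively enumerate the extremities of $G$ lying outside $N[u]$, run a BFS from each, and track the maximum of $d(y,x)$ over all enumerated extremities $y$ and all $x \in N[u]$. The motivation is Lemma~\ref{lem:red-extrem}: for any vertex $x$, $e(x)$ is realized by some extremity $y_x$; moreover, if $x \in N[u]$ and $e(x) \geq 3$, then $d(x,y_x) \geq 3$ forces $y_x \notin N[u]$. So extremities inside $N[u]$ only contribute values at most $2$, which are subsumed by $e(u)$. (For a prime graph with $|V| \geq 3$, $u$ cannot be universal, hence $e(u) \geq 2$.) It thus suffices to enumerate the extremities of $G$ that lie outside $N[u]$.

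The algorithm begins with a BFS from $u$, yielding $e(u)$ and a shortest-path tree rooted at $u$. It then initializes $S := \emptyset$, $\bar\ell := e(u)$, $x^\star := u$, and repeats the following while $S \cup N[u] \neq V$: by Lemma~\ref{lem:next-extrem}, compute in linear time an extremity $y \notin S \cup N[u]$ maximizing $d(u,y)$; run a BFS from $y$ and update $(\bar\ell, x^\star)$ whenever some $x \in N[u]$ satisfies $d(y,x) > \bar\ell$; finally, extract from the shortest-path tree rooted at $u$ a $u$-$y$-path $P_y$ and set $S := S \cup N[P_y]$. Output $\bar\ell$ together with $x^\star$. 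For correctness, the invariant is that $S$ remains $u$-transitive: this is vacuously true initially, and Lemma~\ref{lem:transitive} combined with the easy fact that unions of $u$-transitive sets are $u$-transitive preserves the invariant when adding $N[P_y]$. Upon termination, $V = S \cup N[u]$; by the contrapositive of Lemma~\ref{lem:next-extrem}, no extremity lies outside $S \cup N[u]$, so every extremity outside $N[u]$ was enumerated. Hence for every $x \in N[u]$ with $e(x) \geq 3$, the realizing extremity $y_x$ was considered and $d(y_x,x) = e(x) \leq \bar\ell$; combined with $e(u) \leq \bar\ell$, this yields $\bar\ell = \ell(u)$.

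For the runtime, each iteration consists of one BFS, one invocation of Lemma~\ref{lem:next-extrem}, and one linear-time update to $S$, for a total of $O(m)$ per iteration. Because each iteration places a fresh extremity into $S$ (namely $y \in P_y \subseteq N[P_y]$), the loop body executes at most $q$ times, so the overall runtime is $O(qm)$. Combining with Lemma~\ref{lem:num-extrem} gives $q = O(\alpha \sqrt{m})$ whenever $G \in Ext_\alpha$, yielding the claimed $O(\alpha m^{3/2})$ bound. The main technical difficulty I expect is justifying the stopping criterion, namely that the condition $V = S \cup N[u]$ is equivalent to having enumerated every extremity outside $N[u]$; this amounts to applying Lemma~\ref{lem:next-extrem} contrapositively and carefully verifying that the remaining extremities (those in $N[u]$) cannot raise $\ell(u)$ above the running $\bar\ell$.
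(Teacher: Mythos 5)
There is a genuine gap in the correctness argument, at the step ``upon termination $V = S \cup N[u]$; \ldots{} so every extremity outside $N[u]$ was enumerated.'' This does not follow. Your set $S$ grows by $N[P_y]$ at each iteration, and $N[P_y]$ may swallow extremities that were never returned by Lemma~\ref{lem:next-extrem} and from which no BFS was ever run. At termination you only know that every extremity lies in $S \cup N[u]$, not that every extremity outside $N[u]$ was \emph{enumerated}. Concretely, since $\ell(u) \in \{e(u), e(u)+1\}$, the dangerous vertices are those $z$ with $d(u,z) = e(u)$ and $d(x,z) = e(u)+1$ for some $x \in N[u]$; such a $z$ can be adjacent to $y$ or to the penultimate vertex of $P_y$ (hence absorbed into $S$ via $N[P_y]$) while $y$ itself satisfies $d(x,y) \le e(u)$ for all $x \in N[u]$. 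Your algorithm would then report $e(u)$ instead of $e(u)+1$. Being $u$-transitive is the right condition for Lemma~\ref{lem:next-extrem} to keep producing new extremities, but it is \emph{not} by itself a licence to discard: you must also certify that every vertex placed into $S$ cannot raise $\ell(u)$ above the running estimate.

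This is precisely the difficulty the paper's proof is organized around. The paper restricts attention to extremities $v \in F(u)$ (using that $\ell(u) \le e(u)+1$, so only vertices of $F(u)$ can witness $\ell(u) = e(u)+1$), and after each extremity $v_i$ it invokes Lemmas~\ref{lem:clean} and~\ref{lem:clean-2} to build a $u$-transitive set $S_i'$ containing $v_i$ with the additional guarantee that $d(s,x) \le e(u)$ for every $s \in S_i'$ and every $x \in N[u]$; only such certified sets are added to the discard set. The construction of $S_i'$ is delicate (it distinguishes the vertices $x \in N[u]$ at distance $e(u)$ from $v_i$ from those at distance $e(u)-1$, and tests interception of certain tree edges), and there is no reason the crude set $N[P_y]$ enjoys the same guarantee. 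Your outer loop structure, termination argument, and $O(qm)$ accounting are fine; the missing ingredient is replacing $N[P_y]$ by a discard set that is both $u$-transitive and provably harmless, which is the actual content of Lemmas~\ref{lem:clean} and~\ref{lem:clean-2}.
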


The proof of Lemma~\ref{lem:search-diam} involves several cumbersome intermediate lemmas.
We postpone the proof of Lemma~\ref{lem:search-diam} to the end of this section, proving first our main result:

\begin{theorem}\label{thm:exact}
For every graph $G = (V,E)\in Ext_{\alpha}$, we can compute its diameter in deterministic ${\cal O}(\alpha^3m^{3/2})$ time.
\end{theorem}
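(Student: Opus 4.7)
The plan is to combine the approximation machinery from Theorem~\ref{thm:approx} with the per-neighbourhood exact eccentricity subroutine of Lemma~\ref{lem:search-diam}. By Lemma~\ref{lem:red-mod}, I first reduce in linear time to the quotient graph $G'$, which is prime and still belongs to $Ext_{\alpha}$. Hence, from now on I may assume $G$ itself is prime.

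Next, I would run the first two phases of the algorithm from Theorem~\ref{thm:approx} on $G$ to produce the set $U$ of size ${\cal O}(\alpha^2)$ defined by that algorithm (namely, for each of the $t \leq \alpha$ shortest paths $P_i$ from $c$ to the pairwise nonadjacent extremities $x_i$, keep the $\min\{d(x_i,c)+1,\, 66\alpha-19\}$ vertices of $P_i$ closest to $x_i$). The crucial observation is that the correctness analysis of Theorem~\ref{thm:approx} proves more than an additive $+1$ estimate: it actually exhibits, for every vertex $v$, a vertex $u \in U$ such that $N[u] \cap F(v) \neq \emptyset$. Applying this to a diametral vertex $v$, some $u \in U$ satisfies $N[u] \cap F(v) \neq \emptyset$, and any $y \in N[u] \cap F(v)$ has $e(y) \geq d(y,v) = e(v) = diam(G)$. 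Since no eccentricity exceeds $diam(G)$, this yields the identity
\[
    diam(G) \;=\; \max_{u \in U} \ell(u), \qquad \ell(u) := \max\{ e(x) \mid x \in N[u] \},
\]
which is precisely the quantity Lemma~\ref{lem:search-diam} is designed to compute.

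The algorithm then invokes Lemma~\ref{lem:search-diam} once for each $u \in U$ and outputs the maximum value of $\ell(u)$ obtained. The runtime analysis is straightforward: building the quotient graph is linear; computing $U$ costs ${\cal O}(\alpha^2 m)$ by Theorem~\ref{thm:approx}; and each of the $|U| = {\cal O}(\alpha^2)$ invocations of Lemma~\ref{lem:search-diam} runs in ${\cal O}(\alpha m^{3/2})$ time (using the bound $q = {\cal O}(\alpha\sqrt{m})$ on the number of extremities from Lemma~\ref{lem:num-extrem}). Summing gives the claimed ${\cal O}(\alpha^3 m^{3/2})$ bound, since the $\alpha^2 m$ term for building $U$ is absorbed.

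The only genuinely delicate point is the identity $diam(G) = \max_{u \in U} \ell(u)$, because the statement of Theorem~\ref{thm:approx} only gives the weaker $\bar{e}(v) \geq e(v) - 1$; I would have to reopen its proof to extract the stronger fact that some $u \in U$ is adjacent to (or equal to) a vertex in $F(v)$ for every $v$. Everything else is bookkeeping: extracting the set $U$ from the approximation algorithm and pipelining it into Lemma~\ref{lem:search-diam}. So the real work has already been done; Theorem~\ref{thm:exact} essentially follows by plugging Lemma~\ref{lem:search-diam} into each of the ${\cal O}(\alpha^2)$ candidate neighbourhoods identified by the $+1$-approximation.
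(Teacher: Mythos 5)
Your proposal is correct and follows essentially the same route as the paper: compute a dominating union of at most $\alpha$ shortest paths rooted at an almost-central vertex $c$, restrict attention to the ${\cal O}(\alpha)$ vertices nearest the far end of each path (the hyperbolicity bound of Lemma~\ref{lem:hyp-ext-alpha} combined with Lemmas~\ref{lem:rad-hyp} and~\ref{lem:diam-hyp}/\ref{lem:diam-furthest} is exactly what guarantees a vertex $u^*$ with $\ell(u^*)=diam(G)$ lands in this set), and invoke Lemma~\ref{lem:search-diam} on each of the ${\cal O}(\alpha^2)$ candidates. The paper re-derives the localization with a slightly tighter constant ($42\alpha-11$ instead of the $66\alpha-19$ you inherit from Theorem~\ref{thm:approx}), but this changes nothing in the asymptotic analysis, and the ``delicate point'' you flag is indeed established inside the proof of Theorem~\ref{thm:approx} exactly as you describe.
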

\begin{proof}({\it Assuming Lemma~\ref{lem:search-diam}})
By Lemma~\ref{lem:red-mod}, we may assume $G$ to be prime.
Let us further assume that $|V|\geq 3$, and so that $G$ cannot have a universal vertex.

\smallskip
\noindent
{\bf Algorithm.}
We subdivide the procedure in three main phases, the two first of which being common to both Theorems~\ref{thm:approx} and~\ref{thm:exact}.
\begin{itemize}
    \item Let $x_1$ be the last vertex numbered in a LexBFS. 
    We execute a LexBFS with start vertex $x_1$. Let $x_2$ be the last vertex numbered in a LexBFS($x_1$).
    We compute a $c \in I(x_1,x_2)$ so that $d(c,x_1) = \left\lfloor d(x_1,x_2)/2\right\rfloor$.
    Let $P_1,P_2$ be shortest $x_1c$-path and $x_2c$-path, respectively.
    \item We set $H := P_1 \cup P_2$.
    While $H$ is not a dominating set of $G$, we compute a new extremity $x_i \notin N[H]$, an arbitrary shortest $x_ic$-path $P_i$, then we set $H := H \cup P_i$. 
    In what follows, we denote by $x_1,x_2,\ldots,x_t$ the extremities computed in the two first phases of the algorithm.
    Let $P_1,P_2,\ldots,P_t$ be the corresponding shortest paths, whose union equals $H$.
    \item Finally, for every $1 \leq i \leq t$, let $U_i \subseteq V(P_i)$ contain the $\min\{d(x_{i},c)+1,42\alpha-11\}$ closest vertices to $x_i$.
    Let $L_i := \max\{ \ell(u_i) \mid u_i \in U_i \}$.
    We output $L := \max_{1 \leq i \leq t}L_i$ as the diameter value.
\end{itemize}

\smallskip
\noindent
{\bf Complexity.}
The first phase of the algorithm can be done in ${\cal O}(m)$ time, and its second phase in ${\cal O}(tm)$ time, with $t$ the number of extremities computed.
During the third and final phase, we need to apply Lemma~\ref{lem:search-diam} ${\cal O}(\alpha)$ times for every shortest path $P_1,P_2,\ldots,P_t$.
Hence, the above algorithm runs in ${\cal O}(t\alpha qm)$ time, with $q$ the total number of extremities of $G$.
By Lemma~\ref{lem:num-extrem}, we have that $q={\cal O}(\alpha \sqrt{m})$.
See Theorem~\ref{thm:approx} for a proof that $t \leq \alpha$.
As a result, the above algorithm runs in ${\cal O}(\alpha^3m^{3/2})$ time.

\smallskip
\noindent
{\bf Correctness.}
Let us consider an arbitrary diametral pair $(u,v)$. 
Since $H$ is a dominating set of $G$, we have $N[u] \cap H \neq \emptyset$. 
Let $u^* \in N[u] \cap H$, and observe that $e(u^*) \geq diam(G) -1$. 
Then we claim that if $u^* \in V(P_i)$ for some $1 \leq i \leq t$, we must have $u^* \in U_i$. 
The proof is similar to what we did for Theorem~\ref{thm:approx}. 
Specifically, let us choose $\delta$ such that $G$ is $\delta$-hyperbolic.
By Lemma~\ref{lem:rad-hyp}, $e(c) \leq  rad(G) + 5\delta$.
Furthermore, if $u^* \in H$ satisfies $e(u^*) \geq diam(G) - 1$, then by Lemma~\ref{lem:diam-hyp}, $e(u^*) \geq 2rad(G) - 4\delta -2$.
In particular, $d(u^*,c) \geq rad(G) - 9\delta - 2$ (else, $e(u^*) \leq d(u^*,c) + e(c) \leq 2rad(G) - 4\delta - 3 < diam(G) - 1$).
For $1 \leq i \leq t$ such that $u^* \in V(P_i)$, since $P_i$ is a shortest $x_ic$-path of length at most $e(c) \leq rad(G) + 5\delta$, we get that $u^*$ must be one of the $(rad(G) + 5\delta)-(rad(G) - 9\delta - 2)+1= 14\delta+3$ closest vertices to $x_i$.
By Lemma~\ref{lem:hyp-ext-alpha}, $\delta \leq 3\alpha-1$, therefore $14\delta+3 \leq 42\alpha-11$.
In this situation, $L \geq L_i \geq \ell(u^*) = e(u) = diam(G)$. Combined with the trivial inequality $L \leq diam(G)$, it implies that $L = diam(G)$.
\end{proof}

The actual runtime of Theorem~\ref{thm:exact} is ${\cal O}(\alpha^2qm)$, where $q$ denotes the number of extremities.
By Lemma~\ref{lem:num-extrem}, this is in ${\cal O}(\alpha^3m)$ for bipartite graphs, ${\cal O}(\alpha^3\Delta m)$ for graphs with maximum degree $\Delta$ and in ${\cal O}(\alpha^5m)$ for triangle-free graphs. More generally, this is linear time for all graphs of $Ext_{\alpha}$ with bounded clique number.

\medskip
The remainder of this section is devoted to the proof of Lemma~\ref{lem:search-diam}.
The key idea here is that for every vertex $u$, there is an extremity $v \in F(u)$ such that $\max\{d(v,x) \mid x \in N[u]\} = \ell(u)$.
Therefore, in order to achieve the desired runtime for Lemma~\ref{lem:search-diam}, it would be sufficient to iterate over all extremities of $G$ that are contained in $F(u)$. 
However, due to our inability to compute all extremities in subquadratic time, we are bound to use Lemma~\ref{lem:next-extrem} for only computing some of these extremities.
Therefore, throughout the algorithm, we further need to grow some $u$-transitive set whose vertices must be carefully selected so that they can be discarded from the search space.
The next Lemmas~\ref{lem:clean} and~\ref{lem:clean-2} are about the construction of this $u$-transitive set.

\begin{lemma}\label{lem:clean}
Let $u$ and $v$ be vertices of a graph $G$ such that $v \in F(u)$, and let $X = \{ x \in N[u] \mid d(x,v) = e(u) \}$.
In ${\cal O}(m)$ time we can construct a set $Y$ where:
\begin{itemize}
    \item $v \in Y$; $Y$ is $u$-transitive;
    \item $d(y,x) \leq d(v,x)$ for every $y \in Y$ and $x \in X$.
\end{itemize}
\end{lemma}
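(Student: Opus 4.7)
The plan is threefold: define $Y$ as the $u$-transitive closure of $\{v\}$, verify the distance property by a short induction, and then describe a linear-time implementation of this closure.

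\textbf{Definition and $u$-transitivity.} I take $Y$ to be the smallest subset of $V$ that contains $v$ and is closed under the rule: whenever $z \in Y$ and $y \in V$ satisfy $zy \notin E$ and $z \perp_y u$, then $y \in Y$. By construction $v \in Y$ and $Y$ is $u$-transitive. Note that every vertex $y$ inserted into $Y$ on account of some $z \in Y$ necessarily lies outside $N[u]$, since the relation $z \perp_y u$ already forces $u,y,z$ to be pairwise nonadjacent.

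\textbf{Distance inequality.} I would prove by induction on the order in which a vertex is added to $Y$ that $d(y,x) \leq d(v,x)$ for every $y \in Y$ and $x \in X$. The base case $y = v$ is immediate. For the inductive step, suppose $y$ is added because some $z \in Y$ satisfies $zy \notin E$ and $z \perp_y u$, and fix $x \in X$; the inductive hypothesis gives $d(z,x') \leq d(v,x')$ for every $x' \in X$. If $x \in N[y]$, then $d(y,x) \leq 1 \leq e(u) = d(v,x)$, using $e(u) \geq 1$ (the corner case $e(u) = 0$ forces $V = \{u\}$, where the lemma is trivial). Otherwise $x \notin N[y]$; since $x \in N[u]$ we have $x = u$ or $xu \in E$, and combined with $u \notin N[y]$ this places $x$ in the component of $u$ in $G \setminus N[y]$. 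Hence $z \perp_y u$ implies that $y$ separates $z$ from $x$ in $G$, so any shortest $zx$-path hits $N[y]$ at some vertex $w$. Because $z \notin N[y]$ we have $d(z,w) \geq 1$, and therefore
\[
   d(y,x) \;\leq\; d(y,w) + d(w,x) \;\leq\; 1 + d(z,x) - d(z,w) \;\leq\; d(z,x) \;\leq\; d(v,x),
\]
which closes the induction.

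\textbf{Linear-time implementation.} Producing $Y$ in $\mathcal{O}(m)$ time is the real difficulty; a naive enumeration of candidate pairs $(z,y)$ with a separation test inside $G \setminus N[y]$ would cost $\Omega(nm)$. My plan is to first run $\mathrm{BFS}(u)$ and $\mathrm{BFS}(v)$ in $\mathcal{O}(m)$ time to obtain $d(u,\cdot)$, $d(v,\cdot)$, and the set $X$; then grow $Y$ by a single guided sweep of $V \setminus N[u]$ in which the transitivity test $z \perp_y u$ is replaced by coverage counters on the BFS structure rooted at $u$. Concretely, when $z$ enters $Y$, I mark the edges on the $u$-side of $N[y]$ so that each edge is visited only a constant number of times across all insertions; the correctness of the sweep follows from the monotonicity furnished by the distance inequality above, and termination is automatic from the finiteness of $V$.

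\textbf{Expected main obstacle.} The existence and properties of $Y$ reduce to a clean induction, so the bulk of the technical work lies in the $\mathcal{O}(m)$ implementation of the $u$-transitive closure: specifically, the amortized analysis of the guided sweep and the argument that every separation test can be resolved by purely local information on the BFS tree from $u$, without ever running a separate BFS in $G \setminus N[y]$.
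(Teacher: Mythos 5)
Your construction of $Y$ as the $u$-transitive closure of $\{v\}$ is a genuinely different route from the paper's, and the purely existential part of it is correct: the closure is well-defined (intersections of $u$-transitive sets are $u$-transitive), it contains $v$ and is $u$-transitive by construction, and your induction for the distance bound is sound --- in your second case the chain $d(y,x)\le 1+d(w,x)=1+d(z,x)-d(z,w)\le d(z,x)\le d(v,x)$ goes through because $w\in N[y]$ lies on a shortest $zx$-path while $z\notin N[y]$ forces $d(z,w)\ge 1$. If this set could be produced in ${\cal O}(m)$ time, the lemma would follow.

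The gap is precisely the ${\cal O}(m)$ construction, which is the entire algorithmic content of the lemma (it is invoked once per extremity inside the proof of Lemma~\ref{lem:search-diam}, so its linearity is load-bearing for Theorem~\ref{thm:exact}). Your sketch --- ``coverage counters on the BFS structure rooted at $u$'', ``mark the edges on the $u$-side of $N[y]$'' --- is not an algorithm: the predicate $z\perp_y u$ asks whether $N[y]$ separates $z$ from $u$ in $G$, a global connectivity question about the deleted graph $G\setminus N[y]$ that is not determined by local information on a BFS tree rooted at $u$ (a candidate $y$ can fail to separate because of a detour far from any tree path), and the closure process would need to resolve this predicate for up to $\Theta(n)$ candidates $y$, each living in a different deleted graph. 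No amortization over edges is offered that addresses this, and your own ``expected main obstacle'' paragraph concedes the point. The paper sidesteps the difficulty by not computing the minimal closure at all: it outputs the much larger set $Y=(V\setminus F(u))\cup Y'$, where $V\setminus F(u)$ is a ball around $u$ and hence $u$-transitive for free by Lemma~\ref{lem:transitive}, and $Y'\subseteq F(u)$ is cut out by two explicitly checkable conditions --- containing $N(v)\cap I(u,v)$ in the neighbourhood, and intercepting every edge of a small set $E_X$ of tree edges incident to $N(v)$ --- which are then shown to be forced on any $w$ with $y\perp_w u$ for $y\in Y'$. Those membership tests reduce to neighbourhood inclusions and edge interceptions that can be batch-checked in linear time with sorted adjacency lists; that design choice is what makes the ${\cal O}(m)$ bound attainable, and it is the idea your proposal is missing.
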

\begin{proof}
The algorithm goes as follows:
\begin{enumerate}
    \item If $uv \in E$, then we output $Y = \{v\}$ and we halt. Otherwise, we set $Y_0 := V \setminus F(u), \ Y' := \emptyset$, and the algorithm continues to the next line.
    \item We construct a shortest-path tree $T$ rooted at $v$. Let $E_X \subseteq E(T)$ contain all the second edges $p_xq_x$ (starting from $v$) of the $vx$-paths in $T$, for $x \in X$.
    In what follows, we always assume that $p_x \in N(v)$. -- We observe later in the proof that for a vertex $w$ to satisfy $u \perp_w v$, it must intercept each edge of $E_X$ (see Claim~\ref{claim:intersect}). --
    \item We set $Y' := \{z \in F(u) \mid N(v) \cap I(u,v) \subseteq N(z) \}$. -- Note that every vertex $w \in F(u)$ such that $u \perp_w v$ belongs to this subset. However, we may also have some vertices currently in $Y'$ which do not satisfy this property. -- 
    \item We remove from $Y'$ every vertex which avoids at least one edge from $E_X$.
    \item Finally, we output $Y = Y_0 \cup Y'$. 
\end{enumerate}

\smallskip
\noindent
{\bf Correctness.}
The algorithm is trivially correct if $uv \in E$. Thus, from now on we assume $uv \notin E$.
In particular, $e(u) > 1$.
The output set $Y$ is the disjoint union of two subsets $Y_0$ and $Y'$, that we separately analyse in what follows. 

Recall that $Y_0 := V \setminus F(u)$.
In particular, for any $y \in Y_0$ and $x \in X$, we have $d(x,y) \leq d(u,y) + 1 \leq e(u) = d(v,x)$.
Similarly, by construction we have $d(x,y) \leq d(x,v)$ for every vertex $y$ of $Y'$ and every $x \in X$, that is because $y$ has in its closed neighbourhood some vertex on a shortest $vx$-path that is not equal to $v$ (namely, one of $p_x,q_x$ where $p_xq_x \in E_X$ and the $vx$-path in $T$ goes by this edge).

By Lemma~\ref{lem:transitive}, $Y_0$ is $u$-transitive.
Therefore in order to prove that $Y$ is $u$-transitive, it suffices to prove that for every $y \in Y'$ and $w \in V$ nonadjacent, $y \perp_w u \Longrightarrow w \in Y$.
For that, we start proving the following useful claim:
\begin{myclaim}\label{claim:intersect}
For $x \in X$, let $vp_x, p_xq_x$ be the two first edges of some shortest $vx$-path $P_x$.
There is no vertex of $V(P_x \setminus \{v,p_x,q_x\})$ in $N[F(u)]$.
\end{myclaim}

Indeed, suppose by contradiction the existence of such vertex $w'$.
Note that $d(w',x) = d(v,x) - d(w',v) \leq d(v,x) - 3$.
However, $d(w',u) \geq e(u) -1$ because $w' \in N[F(u)]$.
In particular, we have $d(w',x) \geq d(w',u) - 1 \geq e(u)-2 = d(v,x)-2$.
A contradiction.
Therefore, Claim~\ref{claim:intersect} is proved. $\diamond$

\smallskip
Suppose by contradiction the existence of nonadjacent vertices $y \in Y'$ and $w \in V \setminus Y$ such that $u \perp_w y$. 
In particular, $w \in F(u) \setminus Y'$.
There are two similar cases to be considered.
First, let us assume $N(v) \cap I(u,v) \not\subseteq N(w)$.
Let $p_u \in N(v) \cap I(u,v)$ be satisfying $p_u \notin N(w)$.
Note that $p_u \in N(y)$ and, since $y \in F(u)$, $p_u \in N(y) \cap I(u,y)$.
However, every shortest $uy$-path that contains $p_u$ is avoided by $w$, thus contradicting that $y \perp_w u$.
From now on, $N(v) \cap I(u,v) \subseteq N(w)$.
Then, let us consider an edge $p_xq_x \in E_X$ that is missed by $w$. 
This edge is on the $xv$-path in $T$, for some $x \in X$.
Let us call this path $P_x$. 
By Claim~\ref{claim:intersect}, $w$ misses $P_x \setminus \{v\}$.
However, since $u,y \in N[P_x \setminus \{v\}]$, vertices $u$ and $y$ must be in the same connected component of $G \setminus N[w]$.
A contradiction.

Finally, we have $v \in Y' \subseteq Y$ by construction.

\smallskip
\noindent
{\bf Complexity.}
We can construct the shortest-path tree $T$ in linear time, simply by running a BFS.
Then, we can construct the subset $E_X$ of tree edges in additional ${\cal O}(n)$ time by using a standard dynamic programming approach on $T$ (bottom-up, from the nodes in $X$ to the root $v$).
For convenience, let us define $Z = \{z \in F(u) \mid N(v) \cap I(u,v) \subseteq N(z) \}$, that is a superset of the final subset $Y'$.
It can be constructed in linear time, simply by scanning the adjacency lists of all vertices of $N(v) \cap I(u,v)$.
Let also $E_Z$ be the subset of all edges $pq \in E_X$ such that $p \in N(v) \setminus I(u,v)$.
Indeed, we can observe that a vertex $z \in Z$ intercepts any edge of $E_X \setminus E_Z$.
Therefore, in order to decide whether such a $z$ intercepts all edges of $E_X$, it suffices to check whether that is the case for all edges of $E_Z$.

For that, we create two sets ${\cal P}$ and ${\cal Q}$ so that, for every $pq \in E_Z$, where $p \in N(v)$, we have $p \in {\cal P}$ and $q \in {\cal Q}$.
We totally order the vertices of ${\cal P}$, and we reorder the partial adjacency lists $N[z] \cap {\cal P}, \ z \in Z$, according to this total ordering ({\it i.e.}, for every $z \in Z$, we store an ordered list whose elements are exactly $N[z] \cap {\cal P}$). It takes linear time.
Then, we totally order ${\cal Q}$ so that, for any $p,p' \in {\cal P}$, if $p$ is ordered before $p'$, then all children nodes of $p$ in ${\cal Q}$ are ordered before the children nodes of $p'$ in ${\cal Q}$ (with respect to the tree $T$). 
It can be done in ${\cal O}(n)$ time, {\it e.g.}, if we scan the children nodes of every $p \in {\cal P}$, in order.
We reorder the partial adjacency lists $N(z) \cap {\cal Q}, \ z \in Z$, according to the total ordering of ${\cal Q}$. It also takes linear time.
Finally, for every $p \in {\cal P}$, let $\iota(p)$ be the largest $q \in {\cal Q}$ (according to the total order) so that $pq \in E_Z$.
We consider each $z \in Z$ sequentially, and we do as follows:
\begin{itemize}
    \item We scan the ordered lists $N(z) \cap {\cal Q}, N[z] \cap {\cal P}$ and ${\cal Q}$. Let $q,p$ and $q'$ be the next vertices to be read in all of the three lists. If $q=q'$ then we move forward in both $N(z) \cap {\cal Q}$ and ${\cal Q}$. Otherwise, $q'$ is ordered before $q$. Let $p'$ be the parent node of $q'$ in $T$ (it can be computed in ${\cal O}(1)$ time assuming a standard pointer structure for the tree $T$). We move forward in $N[z] \cap {\cal P}$ until the next vertex $p$ to be read either equals $p'$ or is ordered after it. If $p \neq p'$, then the edge $p'q' \in E_Z$ is avoided by vertex $z$ and we stop, removing $z$ from $Y'$. Else, we move forward in $N(z) \cap {\cal Q}$ until the next vertex to be read is ordered after $\iota(p)$. We access to the position of $\iota(p)$ in ${\cal Q}$ in ${\cal O}(1)$ time (using a pointer) and then we move forward in ${\cal Q}$ from one position.
\end{itemize}
Doing so, the total number of operations for any $z \in Z$ is upper bounded by an ${\cal O}(|N(z)|)$. As a result, this whole step takes linear time.
\end{proof}

We now complete Lemma~\ref{lem:clean}, as follows:

\begin{lemma}\label{lem:clean-2}
Let $u$ and $v$ be vertices of a graph $G$ such that $v \in F(u)$, and $d(x,v) \leq e(u)$ for every $x \in N[u]$.
In ${\cal O}(m)$ time we can construct a set $S'$ where:
\begin{itemize}
    \item $v \in S'$; $S'$ is $u$-transitive;
    \item $d(s,x) \leq e(u)$ for every $s \in S'$ and $x \in N[u]$.
\end{itemize}
\end{lemma}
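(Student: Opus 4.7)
The plan is to invoke Lemma~\ref{lem:clean} as a black box on the pair $(u,v)$ and show that, under the stronger hypothesis of the present lemma, the resulting set $Y = Y_0 \cup Y'$ already serves as $S'$. The first two conclusions ($v \in S'$ and $S'$ is $u$-transitive) are then inherited without any extra work, and only the tightened distance bound $d(s,x) \leq e(u)$ for every $s \in S'$ and every $x \in N[u]$ requires verification.

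For $s \in Y_0 = V \setminus F(u)$ the bound is trivial, since $d(s,u) \leq e(u)-1$ gives $d(s,x) \leq e(u)$ for each $x \in N[u]$. For $s \in Y'$ and $x$ in the set $X = \{x \in N[u] : d(x,v) = e(u)\}$ used inside Lemma~\ref{lem:clean}, the conclusion $d(s,x) \leq d(v,x) = e(u)$ is exactly what that lemma provides. The non-trivial case, which I expect to be the main obstacle, is $s \in Y' \cap F(u)$ together with $x \in N[u] \setminus X$: a priori, $d(s,x)$ could be as large as $e(u)+1$.

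To handle this case, I would first observe that $u \in X$ (because $d(u,v) = e(u)$), so necessarily $x \in N(u)$. From $d(u,v) = e(u)$ and $d(u,x) = 1$, the triangle inequality forces $d(v,x) \in \{e(u)-1,\, e(u),\, e(u)+1\}$; the hypothesis of the lemma rules out $e(u)+1$ and $x \notin X$ rules out $e(u)$, so $d(v,x) = e(u)-1$. The key geometric step is then to identify, on any shortest $xv$-path, the neighbour $p$ of $v$: from $d(u,p) \leq d(u,x) + d(x,p) = 1 + (e(u)-2) = e(u)-1$ and $d(u,p) \geq d(u,v) - d(v,p) = e(u)-1$, one gets $d(u,p) = e(u)-1$, so $p$ lies on a shortest $uv$-path, i.e.\ $p \in N(v) \cap I(u,v)$. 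The very definition of $Y'$ in Lemma~\ref{lem:clean} then forces $p \in N(s)$, from which $d(s,x) \leq 1 + d(p,x) = e(u)-1$ follows.

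All algorithmic work is already done inside Lemma~\ref{lem:clean}, so the overall running time is $\mathcal{O}(m)$. The only subtleties, aside from the predecessor argument above, are the degenerate base cases: $e(u) = 1$ falls into the $uv \in E$ branch of Lemma~\ref{lem:clean}, which outputs $\{v\}$ directly, and the required distance condition is then read off from the hypothesis itself; when $e(u) = 2$ one has $p = x$ in the argument above, which must be treated as a convention but causes no real issue.
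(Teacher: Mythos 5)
Your proposal is correct, and it takes a genuinely different route from the paper. The paper does \emph{not} take $S' = Y$: it filters $Y$ down to a proper subset via a three-way case analysis on $d(y,X')$ (keeping $y$ when $d(y,X') \leq e(u)-2$, discarding it when $d(y,X') \geq e(u)$, and in the boundary case $d(y,X') = e(u)-1$ keeping $y$ only if $\bigcup \{N(v) \cap I(x',v) \mid x' \in X'\} \subseteq N(y)$), and it must then re-establish $u$-transitivity of the filtered set from scratch, which is the bulk of its correctness argument. You instead show the filtering is unnecessary: your key observation is that for $x' \in N[u]$ with $d(v,x') = e(u)-1$, the neighbour $p$ of $v$ on a shortest $x'v$-path automatically satisfies $d(u,p) = e(u)-1$ and hence lies in $N(v) \cap I(u,v)$, which the very definition of $Y'$ forces into $N(s)$; this gives $d(s,x') \leq e(u)-1$ for every $s \in Y'$. (In effect you are proving $N(v) \cap I(x',v) \subseteq N(v) \cap I(u,v)$, so the paper's extra adjacency test in its boundary case is vacuously satisfied by every member of $Y'$, and its discard case never applies to $Y'$ at all.) What your approach buys is a shorter proof with $u$-transitivity and $v \in S'$ inherited for free; what it costs is that you must open up the construction of Lemma~\ref{lem:clean} — the decomposition $Y = Y_0 \cup Y'$ with $Y_0 = V \setminus F(u)$ and $Y' \subseteq \{z \in F(u) \mid N(v) \cap I(u,v) \subseteq N(z)\}$ is a property of the construction, not of the lemma's stated conclusions, so strictly speaking you are not using it as a black box and should either cite those internal facts explicitly or add them to the statement of Lemma~\ref{lem:clean}. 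Your handling of the degenerate cases $e(u)=1$ (the $uv \in E$ branch) and $e(u)=2$ (where $p = x'$) is also fine.
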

\begin{proof}
Let $X = \{ x \in N[u] \mid d(v,x) = e(u) \}$ and let $X' = N[u] \setminus X = \{x' \in N[u] \mid d(v,x') = e(u)-1\}$.
Let $Y$ be as in Lemma~\ref{lem:clean} applied to $u$ and $v$.
We obtain $S'$ by removing some vertices from $Y$:
\begin{enumerate}
    \item For every $y \in Y$, we compute $d(y,X') = \min\{ d(y,x') \mid x' \in X'\}$.
    \item Then, we consider each $y \in Y$ sequentially:
    \begin{enumerate}
        \item If $d(y,X') \leq e(u) - 2$, then we keep vertex $y$ in $S'$.
        \item Else, if $d(y,X') \geq e(u)$, then we do {\em not} keep this vertex in $S'$. -- Intuitively, this is because $y$ may be potentially at distance $e(u)+1$ from some vertex of $X'$. --
        \item Else, $d(y,X') = e(u)-1$. We keep $y$ in $S'$ iff $\bigcup \{N(v) \cap I(x',v) \mid x' \in X' \} \subseteq N(y)$. 
    \end{enumerate}
\end{enumerate}

\smallskip
\noindent
{\bf Correctness.}
We need to prove that $S'$ is $u$-transitive.
Recall that $S' \subseteq Y$, that is a $u$-transitive set.
By contradiction, let $s \in S'$ and $y \in V$ nonadjacent satisfy $s \perp_y u$ but $y \notin S'$.
In particular, $y \in Y \setminus S'$.
There are two cases:
\begin{itemize}
    \item Case $d(y,X') > d(s,X')$. We consider a $x' \in X'$ closest to $s$ and a corresponding shortest $sx'$-path $P'$. Since $d(s,x') < d(y,x')$ and $sy \notin E$, $y$ misses $P'$. But, since $x'u \in E$, vertices $u$ and $s$ must be in the same connected component of $G \setminus N[y]$. A contradiction.
    \item Otherwise, $d(y,X') \leq d(s,X')$. Since we have $d(y,X') > e(u) -2$ (else, $y \in S'$) and $d(s,X') < e(u)$ (else, $s \notin S'$), the only possibility is that $d(y,X') = d(s,X') = e(u)-1$. Since we assume $y \notin S'$, there exists a $w \in \bigcup \{N(v) \cap I(x',v) \mid x' \in X' \}$ nonadjacent to $y$. For some $x' \in X'$, there is a shortest $x'v$-path $P'$ going by $w$. Then, $y$ misses $P' \setminus \{v\}$. But, since $u,s \in N[P' \setminus \{v\}]$, again we have that the vertices $u$ and $s$ must be in the same connected component of $G \setminus N[y]$. A contradiction.
\end{itemize}
As a result, we proved that $S'$ is indeed $u$-transitive.

Next, we prove that $v \in S'$.
Indeed, $v \in Y$. 
Furthermore, $d(v,X') = e(u) -1$ and $v$ is adjacent to all of $\bigcup \{N(v) \cap I(x',v) \mid x' \in X' \}$.

What finally remains to prove is that for every $s \in S'$ and $x \in N[u]$ we have $d(s,x) \leq e(u)$.
For that, let $s \in S'$ be arbitrary.
If $d(s,X') \leq e(u)-2$, then (since the vertices of $N[u]$ are pairwise at distance at most two) $d(s,x) \leq e(u)$ for every $x \in N[u]$. 
From now on, we assume $d(s,X') = e(u) -1$.
As $s \in Y$ we have $d(s,x) \leq d(v,x) \leq e(u)$ for every $x \in X$.
Now, for any $x' \in X'$, let $w \in \bigcup \{N(v) \cap I(x',v) \mid x' \in X' \}$ be on a shortest $vx'$-path.
We have $d(s,x') \leq 1 + d(w,x') = d(v,x') = e(u) -1$.

\smallskip
\noindent
{\bf Complexity.}
We can compute $X$ and $X'$ in linear time by simply running a BFS rooted at $v$.
The initial subset $Y$ can be also constructed in linear time.
Then, we can compute the distances $d(w,X')$, for every vertex $w \in V$, by running a BFS rooted at $X'$ (equivalently, we replace $X'$ by a single new vertex with neighbour-set $N(X')$ and then we run a BFS rooted at this new vertex).
It takes linear time.
Doing so, since all vertices of $X'$ are at equal distance $e(u)-1$ from vertex $v$, we also computed the subset $\bigcup \{N(v) \cap I(x',v) \mid x' \in X' \}$.
Then, we can compute all vertices $y \in S'$ such that $d(y,X') = e(u)-1$ in linear time, simply by scanning the adjacency lists of all vertices in $\bigcup \{N(v) \cap I(x',v) \mid x' \in X' \}$.
\end{proof}
Using Lemma~\ref{lem:clean-2} we end up the section proving Lemma~\ref{lem:search-diam}:
\begin{proof}[Proof of Lemma~\ref{lem:search-diam}]
We may assume that $|V| \geq 3$ and therefore (since $G$ is prime) that $u$ is not a universal vertex.
We search for a $v \in V$ at a distance $\ell(u)$ from some vertex of $N[u]$.
Note that we have $e(u) \leq \ell(u) \leq e(u)+1$.
In particular we only need to consider the vertices $v \in F(u)$.
We next describe an algorithm that iteratively computes some pairs $(v_0,S_0),(v_1,S_1),\ldots$ such that, for any $i \geq 0$: {\tt(i)} $v_i \in F(u)$ is an extremity; and {\tt(ii)} $S_i$ is $u$-transitive. We continue until either $d(v_i,x) = e(u) +1$ for some $x \in N(u)$ or $F(u) \subseteq S_i$. 

If $i=0$ then, let $v_0$ be the last vertex numbered in a LexBFS($u$), that is an extremity by Lemma~\ref{lem:lexbfs}. 
Otherwise ($i > 0$), let $v_i$ be the output of Lemma~\ref{lem:next-extrem} applied to $u$ and $S = S_{i-1}$.
Furthermore, being given the extremity $v_i$, let $S_i'$ be computed as in Lemma~\ref{lem:clean} applied to $u$ and $v_i$.
Let $S_i = S_i' \cup S_{i-1}$ (with the convention that $S_{-1} = \emptyset$).
We stress that $S_i$ is $u$-transitive since it is the union of two $u$-transitive sets.

By Lemma~\ref{lem:clean-2}, all vertices in $S_i$ can be safely discarded since they cannot be at distance $e(u)+1$ from any vertex of $N[u]$. Furthermore, since for every $i$ we have $v_i \in S_i \setminus S_{i-1}$, the sequence $(F(u) \setminus S_{i-1})_{i \geq 0}$ is strictly decreasing with respect to set inclusion.
Each step $i$ takes linear time, and the total number of steps is bounded by the number of extremities in $F(u)$. 
For a graph within $Ext_{\alpha}$, this is in ${\cal O}(\alpha\sqrt{m})$ according to Lemma~\ref{lem:num-extrem}.
\end{proof}

\section{Extensions}\label{sec:extensions}

Our algorithmic framework in Sec.~\ref{sec:exact} can be refined in several ways.
We present such refinements for the larger class of all graphs having a dominating target of bounded cardinality.
Some general results are first discussed in Sec.~\ref{sec:gal-dom-target} before we address the special case of chordal graphs in Sec.~\ref{sec:chordal}.

\subsection{More results on dominating targets}\label{sec:gal-dom-target}

We start with the following observation:

\begin{lemma}\label{lem:red-mod-2}
If a graph $G$ contains a dominating target of cardinality at most $k$, then so does its quotient graph $G'$.
\end{lemma}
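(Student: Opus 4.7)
}

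Let $D = \{d_1,\ldots,d_k\}$ be a dominating target of $G$, and let $M_1,\ldots,M_p$ denote the inclusion-wise maximal strong modules of $G$ that are strict subsets of $V$, with chosen representatives $r_1,\ldots,r_p$ forming the vertex set of $G'$. Write $\phi : V(G) \to V(G')$ for the map sending each vertex to the representative of its maximal strong module. The natural candidate for a dominating target of $G'$ is $D' := \phi(D)$, which clearly satisfies $|D'| \leq |D| \leq k$. The plan is to verify that $D'$ is a dominating target of $G'$.

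The main step is a lifting/descent argument. Given an arbitrary connected subgraph $H'$ of $G'$ containing $D'$, I would lift it to a connected subgraph $H$ of $G$ containing $D$ as follows: for each vertex $r_i \in V(H')$, select a representative $v_i \in M_i$, choosing $v_i = d_j$ whenever $d_j \in M_i$ (breaking ties arbitrarily when several $d_j$ fall in the same module); for each edge $r_i r_j \in E(H')$, put the edge $v_i v_j$ into $H$. Such edges exist in $G$ because adjacency between representatives of distinct strong modules implies (by the module property) complete bipartite adjacency between the modules themselves. This gives a connected $H \subseteq G$ with $D \subseteq V(H)$; since $D$ is a dominating target of $G$, the set $V(H)$ dominates $V(G)$.

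It remains to transfer this domination back up to $G'$. For an arbitrary vertex $r_n \in V(G')$ I would pick any $w \in M_n$, use that $w$ is dominated in $G$ by some $v_i \in V(H)$, and conclude that $r_n \in N_{G'}[r_i]$. If $v_i = w$ then $r_i = r_n$ and we are done; otherwise $v_i w \in E(G)$ with $v_i \in M_i$ and $w \in M_n$, two distinct maximal strong modules. The key point, which is the only non-trivial piece of the argument, is to promote this single adjacency to complete bipartite adjacency between $M_i$ and $M_n$: since $M_n$ is a module and $v_i \notin M_n$, the vertex $v_i$ is adjacent to every vertex of $M_n$; then for any $m \in M_i$ and any $n' \in M_n$, applying the module property to $M_i$ with the outside vertex $n'$ (which is adjacent to $v_i \in M_i$) forces $m n' \in E(G)$. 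Thus $r_i r_n \in E(G')$, proving $r_n \in N_{G'}[V(H')]$.

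I do not anticipate a real obstacle beyond this double application of the module property; the rest is bookkeeping. The argument never uses the cardinality bound $k$, so it in fact establishes that $\phi$ maps dominating targets of $G$ to dominating targets of $G'$ of no larger size, which is all that Lemma~\ref{lem:red-mod-2} asserts.
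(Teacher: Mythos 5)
Your overall route is the same as the paper's (project $D$ to the quotient by keeping one representative per maximal strong module; the paper merely organizes this as a sequence of single-module contractions), but there is a concrete false step in the lifting argument. After selecting one representative $v_i$ per module of $H'$ and ``breaking ties arbitrarily when several $d_j$ fall in the same module'', you assert that $D \subseteq V(H)$. This is exactly what fails when a maximal strong module contains two or more vertices of $D$: only one of them survives the selection, so $H$ need not contain $D$, and you are no longer entitled to invoke the dominating-target property of $D$ to conclude that $V(H)$ dominates $G$. The situation is not vacuous. In $C_4$ with vertices $a,c,b,d$ in cyclic order, the maximal strong modules are $\{a,b\}$ and $\{c,d\}$, and $D=\{a,b\}$ is a dominating target (every connected subgraph containing the nonadjacent pair $a,b$ must contain $c$ or $d$ and hence dominates). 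Here $D'=\{\phi(a)\}$, the subgraph $H'=\{\phi(a)\}$ lifts to $H=\{a\}$, which neither contains $D$ nor dominates $G$; the conclusion holds only because $\phi(a)$ happens to be universal in $G'=K_2$, a fact your argument never establishes.

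The gap splits into two sub-cases of unequal difficulty. If $H'$ has at least two vertices, the lift is easily repaired: each unselected $d_j \in M_i \cap D$ can be attached to $v_{j'}$ for any neighbour $r_{j'}$ of $r_i$ in $H'$ (complete adjacency between $M_{j'}$ and $M_i$ makes $v_{j'}$ adjacent to $d_j$), and the enlarged $H$ is connected, contains $D$, and your transfer-back step goes through. The genuinely missing case is when all of $D$ lies in a single maximal strong module $M$ and $H'=\{\phi(M)\}$: no connected subgraph of $G$ contained in $M$ need contain $D$ (as in the $C_4$ example), so no lift is available, and you must prove directly that $M$ is completely adjacent to every other maximal strong module. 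One way: if $B$, the union of the maximal strong modules not adjacent to $M$, were nonempty, then for every $u \in N(M)$ the subgraph $G[D\cup\{u\}]$ is connected and contains $D$, so $N[D\cup\{u\}]=V$, which forces $B \subseteq N(u)$; consequently $M \cup B$ is a module of $G$ that is a proper union of at least two maximal strong modules and is not all of $V$, so its image is a nontrivial module of $G'$ --- contradicting that the quotient of a connected graph is prime or complete. Without this (or an equivalent) argument for the degenerate case, the proof is incomplete.
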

\begin{proof}
Given a module $M$ of a graph $G$, let $H$ be obtained from $G$ by removing all but one vertex of $M$. 
We claim that if $G$ has a dominating target of cardinality at most $k$, then so does $H$. 
Indeed, given such a dominating target $D$, it suffices to keep all vertices from $D \setminus M$, and to add to the latter the unique vertex of $M \cap V(H)$ if and only if $M \cap D \neq \emptyset$. 
The above proves the lemma because the quotient subgraph $G'$ is obtained by repeatedly applying this above-described operation to pairwise disjoint modules.
\end{proof}

By Lemma~\ref{lem:red-mod-2}, we may only consider in what follows {\em prime} graphs with a dominating target of bounded cardinality.
By $K_{1,t}$, we mean the star with $t$ leaves.
A graph is $K_{1,t}$-free if it has no induced subgraph isomorphic to $K_{1,t}$.
\begin{lemma}\label{lem:k1t}
Every $K_{1,t}$-free graph $G$ with a dominating target of cardinality at most $k$ belongs to $Ext_{k(t-1)}$.
\end{lemma}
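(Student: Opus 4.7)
The plan is to work with the quotient graph $G'$, as this is what the definition of $Ext_{k(t-1)}$ concerns. By Lemma~\ref{lem:red-mod-2}, $G'$ inherits a dominating target $D$ of cardinality at most $k$; and because $G'$ is isomorphic to an induced subgraph of $G$ (obtained by picking one representative from each inclusion-wise maximal strong module), $G'$ remains $K_{1,t}$-free. Hence it suffices to bound the number of pairwise nonadjacent extremities of $G'$.

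Let $A$ be any set of pairwise nonadjacent extremities of $G'$. By Lemma~\ref{lem:extrem-close-to-target}, $A \subseteq N[D]$, so I can assign each $a \in A$ to some $v(a) \in D$ with $a \in N[v(a)]$, which produces a partition $A = \bigsqcup_{v \in D} A_v$ where $A_v \subseteq N[v]$ for every $v \in D$.

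The key step is to show that $|A_v| \le t - 1$ for every $v \in D$. I distinguish two cases. First, if $v \notin A_v$, then all members of $A_v$ lie in $N(v)$ and are pairwise nonadjacent, so together with $v$ they induce a star $K_{1,|A_v|}$; the $K_{1,t}$-freeness of $G'$ then forces $|A_v| \le t-1$. Second, if $v \in A_v$, then because $A$ is an independent set, $v$ has no neighbour in $A$, so $A \cap N(v) = \emptyset$, which collapses $A_v$ to $\{v\}$ and gives $|A_v| = 1 \le t-1$ (the degenerate case $t=1$ simply says $G$ is edgeless and can be handled separately). Summing over $D$ then yields $|A| \le k(t-1)$, which is the desired bound.

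The main subtlety I expect is precisely the second case, $v \in A_v$: a naive star-counting argument would only give $|A_v| \le t$ instead of $t-1$, and hence only the weaker bound $kt$; it is the independence of $A$ itself that rescues the sharper inequality by ruling out any neighbour of $v$ in $A$. Beyond that point, the proof is a routine pigeonhole against the star-freeness hypothesis, combined with the inheritance properties of modular decomposition and dominating targets already established in Lemmas~\ref{lem:red-mod-2} and~\ref{lem:extrem-close-to-target}.
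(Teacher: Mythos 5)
Your proof is correct and follows essentially the same route as the paper: reduce to the quotient graph via Lemma~\ref{lem:red-mod-2}, place all the nonadjacent extremities inside $N[D]$ via Lemma~\ref{lem:extrem-close-to-target}, and then pigeonhole against $K_{1,t}$-freeness. The only difference is cosmetic — you argue by direct counting with an explicit partition and spell out the case where a vertex of $D$ is itself one of the extremities, which the paper's contradiction-plus-pigeonhole phrasing leaves implicit.
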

\begin{proof}
By Lemma~\ref{lem:red-mod-2}, we may assume the graph $G$ to be prime.
Suppose by contradiction the existence of $k(t-1)+1$ pairwise nonadjacent extremities of $G$.
Let $D$ be dominating target of $G$ of cardinality at most $k$.
By Lemma~\ref{lem:extrem-close-to-target}, every of these $k(t-1)+1$ extremities is in $N[D]$.
But then, by the pigeonhole principle, some vertex of $D$ is adjacent to at least $t$ such extremities, that results in the existence of an induced $K_{1,t}$ in $G$.
A contradiction.
\end{proof}

\begin{corollary}\label{cor-k1t}
We can compute the diameter of $K_{1,t}$-free graphs with a dominating target of cardinality at most $k$ in deterministic ${\cal O}((kt)^3m^{3/2})$ time.
\end{corollary}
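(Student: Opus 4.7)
The plan is to chain two results that are already in hand. By Lemma~\ref{lem:k1t}, every $K_{1,t}$-free graph $G$ that admits a dominating target of cardinality at most $k$ is a member of $Ext_{k(t-1)}$. Setting $\alpha := k(t-1)$, I would then invoke the diameter-computation part of Theorem~\ref{thm:main} (equivalently, Theorem~\ref{thm:exact}), which gives a deterministic algorithm computing $diam(G)$ in ${\cal O}(\alpha^{3} m^{3/2}) = {\cal O}((k(t-1))^{3} m^{3/2})$ time. Since $k(t-1) \leq kt$, this is bounded by ${\cal O}((kt)^{3} m^{3/2})$, which is the claimed runtime.

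The only point requiring a brief comment is that Theorem~\ref{thm:main} is parameterized by $\alpha$, so one might worry that the algorithm needs to be told the value $k(t-1)$, or even to first verify that $G$ is $K_{1,t}$-free and admits a small dominating target. However, the remark immediately after the statement of Theorem~\ref{thm:main} (with details in Appendix~\ref{app:compute-alpha}) already notes that we need not supply $\alpha$: the algorithm is correct on arbitrary inputs, and on inputs that happen to lie in $Ext_{\alpha}$ its runtime is automatically ${\cal O}(\alpha^{3}m^{3/2})$. Hence no preprocessing is required, and the combination of Lemma~\ref{lem:k1t} with Theorem~\ref{thm:main} immediately yields the corollary. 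There is no real obstacle to overcome; the statement is a direct quantitative composition of the two results.
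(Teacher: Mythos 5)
Your proof is correct and is exactly the argument the paper intends: the corollary follows immediately from Lemma~\ref{lem:k1t} (membership in $Ext_{k(t-1)}$) combined with Theorem~\ref{thm:exact}, with $k(t-1)\leq kt$ giving the stated bound. Your remark that $\alpha$ need not be supplied to the algorithm is a correct and welcome clarification, consistent with Appendix~\ref{app:compute-alpha}.
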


Let us now consider graphs with a dominating target of cardinality at most $k$ and bounded maximum degree $\Delta$. 
These graphs are $K_{1,\Delta+1}$-free and therefore, according to Corollary~\ref{cor-k1t}, we can compute their diameter in deterministic ${\cal O}((k\Delta)^3m^{3/2})$ time.
We improve this runtime to quasi linear, while also decreasing the dependency on $\Delta$, namely:

\begin{theorem}\label{thm:diam-dom}
For every $G = (V,E)$ with a dominating target of cardinality at most $k$ and maximum degree $\Delta$, we can compute its diameter in deterministic ${\cal O}(k^3\Delta m \log{n})$ time.
\end{theorem}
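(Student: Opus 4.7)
The plan is to adapt the three-phase framework developed for Theorem~\ref{thm:exact} to the present setting, where we only assume the existence of a dominating target of cardinality at most $k$ together with a bound $\Delta$ on the maximum degree. First, by Lemma~\ref{lem:red-mod-2}, I would reduce to the case where $G$ is prime. Two structural facts then follow at essentially no cost: the argument behind Lemma~\ref{lem:hyp-ext-alpha} only uses the existence of a dominating target of size at most $k$ (through a tree $(3k-1)$-spanner), so $G$ is $(3k-1)$-hyperbolic; and by Lemma~\ref{lem:extrem-close-to-target}, $G$ has at most $q \leq (\Delta+1)k$ extremities in total.

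With these ingredients in hand, I would replay the algorithm of Theorem~\ref{thm:exact} essentially verbatim. Phase~1 (a double-sweep LexBFS together with the selection of a midpoint $c$) runs in linear time. Phase~3 computes, for every shortest path $P_i$ and every $u$ in a window $U_i$ of the $O(\delta)=O(k)$ vertices of $P_i$ closest to $x_i$, the value $\ell(u)$ via Lemma~\ref{lem:search-diam}. The window-size bound is controlled purely by the hyperbolicity, so it survives the switch of hypothesis; and plugging the improved bound $q = O(k\Delta)$ available here into Lemma~\ref{lem:search-diam} lets each such call cost $O(qm) = O(k\Delta m)$, eliminating the $\sqrt{m}$ factor that appears in the original analysis.

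The main obstacle lies in Phase~2, namely bounding the number $t$ of shortest paths needed to produce a dominating connected subgraph $H = P_1 \cup \cdots \cup P_t$. In Theorem~\ref{thm:exact} the bound $t \leq \alpha$ came for free, because the computed extremities are forced to be pairwise nonadjacent. Under the mere assumption of a small dominating target, we only know via Lemma~\ref{lem:extrem-close-to-target} that the extremities lie in $N[D]$ for some $D$ with $|D| \leq k$, but a maximal independent set of extremities within $N[D]$ could a priori have size $\Theta(k\Delta)$. The strategy I would pursue is to modify Phase~2 so that at each iteration one selects an extremity $x_{i+1}$ \emph{at maximum distance} from the current $H$ via Lemma~\ref{lem:next-extrem} (which is possible because $N[H]$ is $c$-transitive by Lemma~\ref{lem:transitive}), and then uses Lemma~\ref{lem:ecc-dom-target} together with the $(3k-1)$-hyperbolicity of $G$ to argue that, after every batch of $O(k)$ such selections, the undominated region shrinks by at least a constant factor. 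Since this region has at most $n$ vertices, the process terminates after $t = O(k\log n)$ iterations.

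Putting the pieces together, Phase~2 contributes $O(tm) = O(km\log n)$, while Phase~3 contributes $t \cdot |U_i| \cdot O(qm) = O(k\log n) \cdot O(k) \cdot O(k\Delta m) = O(k^3\Delta m \log n)$, which dominates and matches the claimed runtime. The delicate step is unambiguously the bound on $t$: the extra $\log n$ factor compared to Theorem~\ref{thm:exact} is exactly the price paid for replacing the tight hypothesis $G \in Ext_{\alpha}$ by the weaker bounded-cardinality dominating target assumption, and a substantially sharper analysis here would be required to match the runtime of Theorem~\ref{thm:exact}.
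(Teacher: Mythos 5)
Your overall architecture (reduce to a prime graph, reuse the $(3k-1)$-hyperbolicity and the bound $q={\cal O}(k\Delta)$ on the number of extremities, replay the three phases of Theorem~\ref{thm:exact}) matches the paper, and you correctly locate the difficulty in Phase~2. But the step you propose to resolve it does not work. Your claim that selecting extremities at maximum distance from $c$ (note: Lemma~\ref{lem:next-extrem} maximizes $d(c,v)$, not the distance to $H$) makes the undominated region shrink by a constant factor after every batch of ${\cal O}(k)$ selections is unsubstantiated, and it is false in general: a single vertex $d$ of the dominating target may have $\Delta$ pairwise nonadjacent extremities as neighbours, each with a private escape route to $c$ avoiding $d$, so that each iteration of your loop dominates only one of them and the loop runs $\Theta(k\Delta)$ times. (The paper's own example $K_n^+$, a DP graph with $n$ pairwise nonadjacent extremities and a dominating pair, shows that nonadjacency of the selected extremities buys you nothing here.) Neither Lemma~\ref{lem:ecc-dom-target} nor hyperbolicity gives the geometric decay you need, so your bound $t={\cal O}(k\log n)$ on the number of loop iterations has no proof, and with the true bound $t={\cal O}(k\Delta)$ your Phase~3 cost becomes ${\cal O}(k^3\Delta^2 m)$, missing the claimed runtime.

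The paper closes this gap with two ideas you are missing. First, when an extremity $x_i$ is found, it adds the \emph{entire} closed neighbourhood $N[x_i]$ to a seed set $X$ and a shortest $yc$-path for \emph{every} $y\in N[x_i]$ to $H$; since every extremity lies in $N[D]$ (Lemma~\ref{lem:extrem-close-to-target}), each iteration then absorbs at least one fresh vertex of the hidden dominating target $D$ into $H$, forcing the loop to terminate after at most $k$ iterations at a cost of ${\cal O}(\Delta m)$ each. Second, this produces up to $k(\Delta+1)$ candidate paths, too many to process; the paper therefore runs a greedy set cover on the family $\{N[P_x]\mid x\in X\}\cup\{N[v]\mid v\in V\}$, whose optimum is at most $2k$ (witnessed by $D$), to extract only ${\cal O}(k\log n)$ sets covering $V$. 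The window-plus-Lemma~\ref{lem:search-diam} machinery is then applied only to the ${\cal O}(k\log n)$ selected paths, and brute-force BFS handles the ${\cal O}(k\log n)$ selected single-vertex neighbourhoods. This is where the $\log n$ factor actually comes from -- the set-cover approximation ratio -- not from a decay argument on the undominated region.
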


Since under SETH we cannot compute the diameter of graphs with a dominating edge in subquadratic time~\cite{Duc21c}, the dependency on $\Delta$ in Theorem~\ref{thm:diam-dom} is conditionally optimal.

Being given a vertex $c$ of small eccentricity, our main difficulty here is to compute efficiently (as we did for Theorem~\ref{thm:exact}) a small number of shortest-paths starting from $c$ whose union is a dominating set of $G$. This could be done by computing a dominating target of small cardinality. However, our framework in the prior Sec.~\ref{sec:framework} only allows us to compute extremities, and not directly a dominating target.
Our strategy consists in including in some candidate subset all the neighbours of the extremities that are computed by our algorithm.
By using Lemma~\ref{lem:extrem-close-to-target}, we can bound the size of this candidate subset by an ${\cal O}(k\Delta)$.
Then, by using a greedy set cover algorithm, we manage to compute from this candidate subset a set of ${\cal O}(k\log{n}$) shortest-paths that cover all but ${\cal O}(k\Delta \log{n})$ vertices.
We apply the prior techniques of Sec.~\ref{sec:alg} to all these shortest-paths (calling upon Lemma~\ref{lem:search-diam}), while for the ${\cal O}(k\Delta \log{n})$ vertices that they miss we compute their eccentricities directly.

\begin{proof}[Proof of Theorem~\ref{thm:diam-dom}]
By Lemma~\ref{lem:red-mod-2}, we may assume $G$ to be prime.
We proceed as follows:
\begin{enumerate}
    \item We compute a vertex $c$ of eccentricity at most $rad(G) + 15k - 5$.
    \item Then, we set $H=\{c\}, \ X = \emptyset$.
    While $H$ is not a dominating set of $G$, we compute an extremity $x_i \notin N[H]$, we add in $X$ {\em all} vertices of $N[x_i]$ and, for every $y \in N[x_i]$, we add to $H$ some arbitrary shortest $yc$-path $P_y$.
    \item Let ${\cal S} = \{ N[P_x] \mid x \in X \} \cup \{ N[v] \mid v \in V \}$. 
    We apply a greedy set cover algorithm in order to extract from ${\cal S}$ a sub-family ${\cal S}'$ so that $\bigcup {\cal S'} = V$.
    Specifically, we set ${\cal S}' := \emptyset, \ U := V$, where $U$ represents the uncovered vertices.
    While $U \neq \emptyset$, we add to ${\cal S}'$ any subset  $S \in {\cal S}$ such that $|S \cap U|$ is maximized, and then we set $U := U \setminus S$.
    \item Let $A = \{ x \in X \mid N[P_x] \in {\cal S}' \}$ and let $B = \{ v \in V \setminus A \mid N[v] \in {\cal S}' \}$.
    \begin{enumerate}
        \item For every $x \in A$, let $W_x$ contain the $\min\{d(x,c)+1,42k-11\}$ closest vertices to $x$ in $P_x$. For each $w \in W_x$, we compute $\ell(w) = \max\{e(w') \mid w' \in N[w]\}$.
        \item For every $v \in B$, we directly compute the eccentricities of {\em all} vertices in $N[v]$.
    \end{enumerate}
    \item We output the maximum eccentricity computed as the diameter value.
\end{enumerate}

\smallskip
\noindent
{\bf Correctness.}
In order to prove correctness of this above algorithm, let $u \in V$ be such that $e(u) = diam(G)$.
By construction, ${\cal S}'$ covers $V$, and therefore, either there exists a $x \in A$ such that $u \in N[P_x]$, or there exists a $v \in B$ such that $u \in N[v]$.
In the latter case, we computed the eccentricities of all vertices in $N[v]$, including $e(u) = diam(G)$.
Therefore, we only need to consider the former case.
Specifically, let $u^* \in V(P_x) \cap N[u]$.
To prove correctness of the algorithm, it suffices to prove that $u^* \in W_x$.
The proof that it is indeed the case is identifical to that of Theorem~\ref{thm:exact}.
Indeed, since $G$ has a dominating target of cardinality at most $k$, it contains an additive tree $(3k-1)$-spanner~\cite{KKM01} and so -- the same as graphs in $Ext_k$ --, it is $(3k-1)$-hyperbolic~\cite{CDEH+08,DrK14}.

\smallskip
\noindent
{\bf Complexity.}
By Lemma~\ref{lem:rad-hyp}, we can compute a vertex $c$ of eccentricity at most $rad(G)+15k-5$ in linear time.

Then, during the second phase of the algorithm, we claim that each step can be done in ${\cal O}(\Delta m)$ time.
Indeed, while $H$ is not a dominating set of $G$, a new extremity $x_i \notin N[H]$ can be computed by applying Lemma~\ref{lem:next-extrem}.
Furthermore, we can compute the shortest paths $P_y$, for $y \in N[x_i]$, by executing at most $\Delta+1$ BFS.
Overall, the runtime of this second phase is in ${\cal O}(t\Delta m)$, with $t$ the number of extremities computed.
We claim that $t \leq k$.
Indeed, let $D$ be a fixed (but unknown) dominating target of cardinality at most $k$.
Let $x_1,x_2,\ldots,x_t$ denote all the extremities computed during this second phase.
By Lemma~\ref{lem:extrem-close-to-target}, each extremity $x_i$ computed is in $N[D]$.
In order to prove that there are at most $k$ extremities computed, it suffices to prove that $N[x_i] \cap N[x_j] \cap D = \emptyset$ for every $i < j$.
Suppose by contradiction that there exists a vertex $v \in D$ such that $v \in N[x_i] \cap N[x_j]$.
At step $i$, we add $v$ in $X$, and therefore from this point on $x_j \in N[H]$.
It implies that we cannot select $x_j$ at step $j$, a contradiction.
Hence, the total runtime of this second phase is in ${\cal O}(k\Delta m)$.

In order to bound the runtime of the third phase of the algorithm, we first need to bound the minimum number of subsets of ${\cal S}$ needed in order to cover $V$.
We claim that it is no more than $2k$.
Indeed, as before let $D$ be a fixed (but unknown) dominating target of cardinality at most $k$.
For each $x \in D \cap X$, we select the set $P_x$. 
For each $v \in D \setminus X$, we select the sets $N[v]$ and $P_{x'}$, for some arbitrary $x' \in X$ such that $v \in N(P_{x'})$. 
The claim follows since we assume $D$ to be a dominating target.
It implies that the greedily computed sub-family ${\cal S}'$ has its cardinality in ${\cal O}(k\log{n})$. 
The cumulative size of all subsets in ${\cal S}$ is at most $|X|n + 2m \leq k(\Delta+1) n +2m$.
Furthermore, the greedy set cover algorithm runs in $|{\cal S}'| = {\cal O}(k\log{n})$ steps.
As a result, the total runtime for this third phase is in ${\cal O}(k\log{n} \cdot (k\Delta n + m)) = {\cal O}(k^2\Delta m \log{n})$.

Lastly, during the fourth and final phase, we need to apply Lemma~\ref{lem:search-diam} $\sum_{x \in A}|W_x| = {\cal O}(k|A|)$ times, and we need to execute $\sum_{v \in B}|N[v]| = {\cal O}(\Delta |B|)$ BFS.
Each call to Lemma~\ref{lem:search-diam} takes ${\cal O}(qm)$ time, with $q$ the number of extremities.
By Lemma~\ref{lem:extrem-close-to-target}, $q = {\cal O}(k\Delta)$.
Furthermore, we recall that $|A|+|B| \leq |{\cal S}'| = {\cal O}(k\log{n})$.
Therefore, the total runtime for this phase -- and also for the whole algorithm -- is in ${\cal O}(k^3\Delta m \log{n})$.
\end{proof}

\subsection{Chordal graphs}\label{sec:chordal}

By~\cite[Theorems $6$ \& $9$]{Duc21c}, we can decide in linear time the diameter of chordal graphs with a dominating pair (resp., with a dominating triple) if the former value is at least $4$ (resp., at least $10$). 
It was also asked in~\cite{Duc21c} whether for every $k \geq 4$, there exists a threshold $d_k$ such that the diameter of chordal graphs with a dominating target of cardinality at most $k$ can be decided in truly subquadratic time if the former value is at least $d_k$.
We first answer to this question in the affirmative:

\begin{theorem}\label{thm:chordal-domt}
If $G=(V,E)$ is chordal, with a dominating target of cardinality at most $k$, and such that $diam(G) \geq 4$, then we can compute its diameter in deterministic ${\cal O}(km)$ time.
\end{theorem}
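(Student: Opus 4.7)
The plan is to adapt the general framework of Theorem~\ref{thm:diam-dom} to the chordal setting, using chordality-specific properties in order to eliminate both the $\Delta$ and the $\log n$ factors in the runtime. First, by Lemma~\ref{lem:red-mod} and Lemma~\ref{lem:red-mod-2}, we can reduce to the prime case in linear time (chordality is preserved under taking the quotient graph, since the quotient is an induced subgraph). From now on we assume $G$ is prime and chordal.

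Next, mimicking the second phase of Theorem~\ref{thm:diam-dom}, I would compute a vertex $c$ of small eccentricity via a double-sweep BFS: since chordal graphs have hyperbolicity $\delta \leq 1$, Lemma~\ref{lem:rad-hyp} guarantees $e(c) \leq rad(G)+5$. Then I would iteratively grow a connected subgraph $H$ rooted at $c$ by, while $H$ is not dominating, calling Lemma~\ref{lem:next-extrem} on $N[H]$ to produce an extremity $x_i \notin N[H]$ and appending a shortest $cx_i$-path $P_i$ to $H$. The same disjointness argument used in the proof of Theorem~\ref{thm:diam-dom}---namely, that two distinct computed extremities cannot share a vertex of the (unknown) dominating target $D$ in their closed neighbourhoods, so by Lemma~\ref{lem:extrem-close-to-target} at most $|D|\le k$ iterations occur---yields $t \leq k$ paths, computed in ${\cal O}(km)$ total time.

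The crux of the proof, and the main obstacle, is to show that a single BFS from each of ${\cal O}(k)$ carefully chosen vertices already detects the diameter, rather than needing to search an ${\cal O}(\alpha)$-sized window around each $x_i$ (as in Theorem~\ref{thm:exact}) or the full neighbourhood of each path-vertex (as in Theorem~\ref{thm:diam-dom}). Concretely, given a diametral pair $(u,v)$ with $d(u,v) = diam(G) \geq 4$, and the vertex $u^* \in H$ closest to $u$, I would argue via chord-chasing that in a chordal graph the witness $u^*$ must in fact lie at distance $0$ or $1$ from the endpoint $x_i$ of the path $P_i$ containing it: otherwise, a shortest $u^*v$-path together with $P_i$ and a shortest $x_iv$-path would form an induced cycle of length $\geq 5$ (using $d(u,v) \geq 4$ to rule out all possible chords, which would either shortcut $P_i$ or certify $d(u,v) < diam(G)$), contradicting chordality. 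Consequently it suffices to compute, for each $1 \leq i \leq t$, the eccentricity of $x_i$ and of each of its neighbours in $V(P_i)$ (a constant number of vertices), and return the maximum. The BFS cost of these ${\cal O}(k)$ eccentricity computations is ${\cal O}(km)$, which combined with the previous phases yields the claimed bound.

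The main technical difficulty I anticipate is making the chord-chasing argument of the third paragraph fully rigorous---in particular, isolating the minimal chordal-plus-$diam\geq 4$ hypotheses under which the candidate window around each $x_i$ collapses to $O(1)$ vertices. It is plausible that this argument parallels the one already used in~\cite{Duc21c} for dominating pairs (the $k=2$ case), simply iterated over the $k$ extremities produced by the dominating-target-based construction of Phase~2; the hypothesis $diam(G) \geq 4$ then plays the same role as in that earlier result, namely preventing the diametral witness from being absorbed into a short cycle that chordality would forbid.
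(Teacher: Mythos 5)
Your setup (reduce to the prime case, grow a dominating union of $t\leq k$ shortest $cx_i$-paths ending at pairwise ``new'' extremities) matches the paper's, with one small slip: to get $t\leq k$ via the disjointness argument $N[x_i]\cap N[x_j]\cap D=\emptyset$, you must add $N(x_i)$ to $H$ at each step (as the paper does, setting $H:=H\cup P_i\cup N(x_i)$); appending only $P_i$ does not prevent a later $x_j$ from sharing a dominating-target vertex with $N[x_i]$, since $x_j$ could then be at distance $2$ from $H$ rather than in $N[H]$.

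The real gap is in your third paragraph, and it is twofold. First, even if your chord-chasing claim held and the witness $u^*\in H\cap N[u]$ always sat within distance $1$ of $x_i$ on $P_i$, computing the eccentricities of $x_i$ and its path-neighbour only certifies $e(u^*)\geq diam(G)-1$: the diametral endpoint $u$ lies in $N[u^*]$, not on $P_i$, so your algorithm returns a $+1$-approximation, not the exact diameter. Extracting a maximum-eccentricity vertex from $N[u^*]$ is exactly the ${\cal O}(qm)$-per-call operation (Lemma~\ref{lem:search-diam}) that must be avoided to reach ${\cal O}(km)$. Second, the claim itself is neither proved nor what the paper establishes; the union of a shortest $u^*v$-path, a portion of $P_i$, and a shortest $x_iv$-path need not induce a cycle, and ruling out all chords is not a routine matter. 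The paper instead proves the stronger statement that some computed extremity $x_{i_0}$ \emph{itself} satisfies $e(x_{i_0})=diam(G)$, so that $t\leq k$ plain BFS calls suffice. That argument is the actual content of the theorem: it re-opens the proof of Lemma~\ref{lem:next-extrem} to show that the vertices numbered after $x_{i_0}$ in the underlying LexBFS are simplicial (Lemma~\ref{lem:peo}), hence $G_{i_0}$ is an isometric subgraph still containing a diametral pair; it then applies Lemma~\ref{lem:lexbfs-chordal} (if the LexBFS end-vertex has eccentricity below the diameter, that eccentricity is even and equals $diam(G)-1$, attained on both diametral endpoints) to force $d(x_{i_0},c)=diam(G)-1$, hence $diam(G)=rad(G)+1$; finally the chordal inequality $diam(G)\geq 2rad(G)-2$ forces $rad(G)\leq 3$, and a parity argument on $e(x_1)$ eliminates $rad(G)=3$, leaving $rad(G)=2$ and $diam(G)=4=rad(G)+2$, a contradiction. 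This is also where the exact central vertex $c$ (not merely an $e(c)\leq rad(G)+5$ one) and the hypothesis $diam(G)\geq 4$ are used. None of this machinery appears in your proposal, so the proof is incomplete at its crucial step.
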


Roughy, we lower the runtime of Theorem~\ref{thm:chordal-domt} to linear by avoiding calling upon Lemma~\ref{lem:search-diam}.
More specifically, we prove that whenever $diam(G) \geq 4$ there is always one of the $\leq k$ extremities which we compute whose eccentricity equals the diameter.
The analysis of Theorem~\ref{thm:chordal-domt} is involved, as it is based on several nontrivial properties of chordal graphs~\cite{ChD94,CDDH+01,CDK03,LaS83,RTL76}.
We first introduce all the properties that we need.
In what follows, recall that a vertex is {\em simplicial} if its neighbourhood induces a clique.

\begin{lemma}[\cite{RTL76}]\label{lem:peo}
If a graph $G$ is chordal, then the vertex last numbered in a LexBFS is simplicial.
\end{lemma}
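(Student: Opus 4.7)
The plan is to derive the lemma as an almost immediate consequence of two classical ingredients that are both available from earlier in the paper: the Berry--Bordat moplex characterisation of LexBFS, and Dirac's characterisation of chordal graphs via minimal separators.

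Let $v = \sigma(1)$ be the vertex last numbered by the LexBFS. The first step is to invoke the result, cited in the discussion of moplex graphs earlier in the paper, that ``every vertex last visited during a LexBFS is in a moplex''~\cite{BeB98}. This gives a moplex $M$ of $G$ with $v \in M$. Recall that by definition, $M$ is a module inducing a clique whose open neighbourhood $N(M)$ is a minimal separator of $G$.

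The second step is a short structural observation: since $M$ is a module, every vertex of $V \setminus M$ is adjacent either to all of $M$ or to none of $M$, so the vertices of $V \setminus M$ adjacent to $v$ are exactly those of $N(M)$. Combined with $M$ being a clique that contains $v$, this gives $N[v] = M \cup N(M)$.

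The third step is to argue that $N[v] = M \cup N(M)$ is a clique, which immediately yields that $v$ is simplicial. The set $M$ is a clique by the definition of a moplex, and every vertex of $N(M)$ is adjacent to every vertex of $M$ by the module property. The only remaining point is that $N(M)$ is itself a clique, and this is exactly Dirac's classical characterisation of chordal graphs: every minimal separator of a chordal graph is a clique (see, e.g., \cite{Gol04}).

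I do not anticipate any real obstacle. All the work is done by the two external ingredients, namely the Berry--Bordat theorem (already mentioned in the paper) and Dirac's theorem on minimal separators of chordal graphs; the rest is a direct unfolding of the moplex definition. An alternative, self-contained proof via the $3$-point characterisation of LexBFS (applied to a putative pair of nonadjacent neighbours of $v$ together with chordality to rule out induced $4$-cycles) is also possible but somewhat longer, and is essentially the original argument of~\cite{RTL76}.
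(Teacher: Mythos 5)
Your argument is correct, but note that the paper does not prove this statement at all: it is quoted verbatim as the classical Rose--Tarjan--Lueker result with the citation~\cite{RTL76}, so there is no internal proof to compare against. Your derivation via the Berry--Bordat theorem (the last LexBFS vertex lies in a moplex) combined with Dirac's theorem (minimal separators of chordal graphs are cliques) is a legitimate and clean alternative: once $v$ lies in a moplex $M$, the module property gives $N[v] = M \cup N(M)$, and all three pieces ($M$ a clique, $N(M)$ a clique by Dirac, and full adjacency between $M$ and $N(M)$) make $N[v]$ a clique. What this buys is a proof that cleanly separates the LexBFS-specific content (which holds for arbitrary graphs) from the chordality-specific content (Dirac), whereas the original RTL argument intertwines the two; the cost is reliance on the Berry--Bordat theorem, which is itself a strengthening of the statement being proved for general graphs. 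One small point to patch: if $G$ is complete, it has no minimal separator and hence no moplex in the strict sense you quote, so the Berry--Bordat theorem does not literally apply; you should dispose of that case separately by observing that every vertex of a complete graph is simplicial.
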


\begin{lemma}[\cite{CDDH+01}]\label{lem:lexbfs-chordal}
Let $u$ be the vertex of a chordal graph $G$ last numbered in a LexBFS, and let $x,y$ be a pair of vertices such that $d(x,y) = diam(G)$. If $e(u) < diam(G)$ then $e(u)$ is even, $d(u,x) = d(u,y) = e(u)$ and $e(u) = diam(G)-1$.
\end{lemma}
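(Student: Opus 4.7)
The plan is to combine Lemma~\ref{lem:peo} (the LexBFS last vertex $u$ is simplicial in the chordal graph $G$) with a simplicial rerouting inequality and the classical eccentricity bound $e(u)\ge D-1$ for LexBFS last vertices in chordal graphs, where $D=\operatorname{diam}(G)$. The structure of the proof falls naturally into four steps, of which the last two are the delicate ones.

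First, I would establish the ``simplicial rerouting'' inequality $d(u,x)+d(u,y)\ge D+1$. Since $u$ is simplicial, $N(u)$ is a clique. Take any shortest paths $u,a_1,\ldots,x$ and $u,a_2,\ldots,y$. If $a_1=a_2$, the concatenation yields an $xy$-walk of length $d(u,x)+d(u,y)-2$; otherwise $a_1a_2\in E$ and we get an $xy$-walk $x,\ldots,a_1,a_2,\ldots,y$ of length $d(u,x)+d(u,y)-1$. Either walk has length at least $D=d(x,y)$, proving the inequality. Next, I would invoke the classical theorem (implicit in~\cite{CDDH+01} and relying on LexBFS monotonicity, Lemma~\ref{lem:monotone}, combined with the chordal structure) that the LexBFS last vertex of a chordal graph satisfies $e(u)\ge D-1$. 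Together with the hypothesis $e(u)<D$, this immediately yields $e(u)=D-1$, which establishes claim $e(u)=\operatorname{diam}(G)-1$.

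Third, to show $d(u,x)=d(u,y)=D-1$, I would argue by contradiction: assume without loss of generality $d(u,x)\le D-2$. The simplicial rerouting then forces $d(u,y)\ge 3$, but this alone is not yet contradictory. To finish, take $z\in F(u)$ with $d(u,z)=D-1$ and exploit a convexity-type property of chordal geodesics (which is also a key ingredient in the proof of the $e(u)\ge D-1$ bound): the geodesics between vertices of $N_G^{D-1}[u]$ stay close to the clique $N(u)$, and a careful layered analysis combining the simplicial rerouting applied to $(u,x,z)$ with the constraint $d(y,z)\le D$ yields an $xy$- or $xz$-walk of length strictly exceeding $D$, contradicting $\operatorname{diam}(G)=D$.

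Finally, once $d(u,x)=d(u,y)=D-1$ and $d(x,y)=D$ are established, the simplicial rerouting exhibits an $xy$-walk of length exactly $2(D-1)-1=2D-3$. I would then argue via a case analysis using the moplex structure around $u$ (every simplicial vertex lies in a moplex, and the minimal separator $N(u)$ has rich chordal structure) that the simultaneous existence of an $xy$-geodesic of length $D$ and such an $xy$-walk of length $2D-3$ forces $D$ to be odd; equivalently, $e(u)=D-1$ must be even. The main obstacle is precisely this parity step: chordal graphs admit odd closed walks, so the parity cannot be read off from a bipartite-like invariant and must instead be extracted from the fine simplicial–LexBFS interaction (via moplexes and the Berry--Bordat characterization) together with the equality case $d(u,x)=d(u,y)=D-1$.
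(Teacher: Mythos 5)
First, note that the paper does not prove this statement at all: it is imported verbatim from~\cite{CDDH+01} as a known black box, so there is no in-paper proof to compare yours against. Judged on its own merits, your proposal has a genuine gap in exactly the two steps you identify as delicate. Your Step~1 (the rerouting inequality $d(u,x)+d(u,y)\ge D+1$ for a simplicial $u\notin\{x,y\}$) is correct, and Step~2 is fine modulo the observation that ``$e(u)\ge D-1$'' is literally one of the three conclusions of the lemma (the implication $e(u)<D\Rightarrow e(u)=D-1$ is equivalent to it), so invoking it from the very source being proved is circular in a from-scratch proof. The real problem is Step~3: the simplicial rerouting only ever produces \emph{upper} bounds of the form $d(a,b)\le d(u,a)+d(u,b)-1$, and exhibiting ``an $xy$- or $xz$-walk of length strictly exceeding $D$'' is not a contradiction with $diam(G)=D$ --- arbitrarily long walks always exist; only a \emph{distance} exceeding $D$ would contradict the diameter, and nothing in your toolkit lower-bounds a distance. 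So the proposed contradiction mechanism cannot close Step~3.

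More fundamentally, simpliciality of $u$ together with $e(u)=D-1$ is provably insufficient for both the equidistance and the parity claims, so any argument (like yours in Steps~3--4) that only feeds Lemma~\ref{lem:peo} into metric inequalities must fail. Concretely, take the tree obtained from the path $x=p_0,p_1,\dots,p_6=y$ by attaching a pendant vertex $u$ to $p_2$: this graph is chordal, $u$ is simplicial, $e(u)=5=D-1$ is \emph{odd}, and $d(u,x)=3\neq e(u)$. Both conclusions fail, which shows the lemma genuinely rests on $u$ being the \emph{last} LexBFS vertex (via the monotonicity/label structure of Lemma~\ref{lem:monotone}, or the layering properties of LexBFS on chordal graphs), not merely on $u$ being simplicial. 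Your Step~4 acknowledges this by appealing to ``the fine simplicial--LexBFS interaction (via moplexes and Berry--Bordat)'', but no actual argument is given there, and the preceding sentence --- that a $D$-geodesic coexisting with a $(2D-3)$-walk ``forces $D$ to be odd'' --- is not a valid parity deduction, as your own closing caveat concedes. As written, the two steps that carry all the content of the lemma are assertions rather than proofs.
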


Being given a subset $S$ and a vertex $v \notin S$, let $d(v,S) = \min\{d(v,s) \mid s \in S\}$.
Let $P(v,S) = \{ s \in S \mid d(v,s) = d(v,S)\}$ be the projection of $v$ on $S$.

\begin{lemma}[\cite{ChD94}]\label{lem:chordal-inclusion-proj}
In a chordal graph $G$, for any clique $C$ and adjacent vertices $u,v \notin C$ the metric projections $P(u,C)$ and $P(v,C)$ are comparable, {\it i.e.}, either $P(u,C) \subseteq P(v,C)$ or $P(v,C) \subseteq P(u,C)$.
\end{lemma}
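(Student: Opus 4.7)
The plan is to proceed by contradiction: suppose $P(u,C)$ and $P(v,C)$ are incomparable, fix witnesses $c_u \in P(u,C) \setminus P(v,C)$ and $c_v \in P(v,C) \setminus P(u,C)$, and derive a chordless cycle of length at least $4$, contradicting chordality. Throughout, I write $a = d(u,C)$ and $b = d(v,C)$. The strict inequality $d(u,c_v) > d(u,c_u) = a$ forces $d(u,c_v) \geq a+1$; combining this with the triangle-inequality estimate $d(u,c_v) \leq d(u,v) + d(v,c_v) = 1+b$ yields $a \leq b$, and the symmetric argument on $c_u$ yields $b \leq a$. So $a = b$, and moreover $d(u,c_v) = d(v,c_u) = a+1$ exactly. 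Note also that $a \geq 1$, because $u,v \notin C$.

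Next I would pick shortest paths $P: u = p_0, p_1, \ldots, p_a = c_u$ and $Q: v = q_0, q_1, \ldots, q_a = c_v$ and consider the closed walk
\[
\mathcal{C}: u = p_0, p_1, \ldots, p_a = c_u, c_v = q_a, q_{a-1}, \ldots, q_0 = v, u
\]
of length $2a+2 \geq 4$, whose closing edges $uv$ and $c_uc_v$ belong to $E$ (the latter because $C$ is a clique). The crux of the argument will be a careful analysis of the possible chords of this cycle. Chords inside $P$ or inside $Q$ are excluded because shortest paths are induced. For a chord of the form $up_{\cdot}$ or $vq_{\cdot}$ skipping an endpoint of $P$ or $Q$, nothing needs to be done since these are not cross-chords. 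The main case is a cross-chord $p_i q_j$ with $(i,j) \notin \{(0,0),(a,a)\}$: then $d(u,c_v) \leq (i+1)+(a-j)$ and $d(v,c_u) \leq (j+1)+(a-i)$, and both must be at least $a+1$, which forces $j \leq i$ and $i \leq j$, i.e.\ $i = j$. This is the part I expect to be the main obstacle to get cleanly, because it relies on using the two tight distance equalities in tandem; all other potential chords collapse to this one case.

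To finish, I would take $i$ minimal with $1 \leq i \leq a-1$ and $p_iq_i \in E$, should such an $i$ exist. If no such $i$ exists, then the cycle $\mathcal{C}$ has no chord at all and we have an induced cycle of length $2a+2 \geq 4$, contradicting chordality. Otherwise, I restrict to the shorter cycle
\[
u = p_0, p_1, \ldots, p_i, q_i, q_{i-1}, \ldots, q_0 = v, u
\]
of length $2i+2 \geq 4$. Its only possible cross-chords are still of the form $p_jq_j$ by the previous paragraph, and for $1 \leq j \leq i-1$ these are excluded by the minimal choice of $i$; the ``diagonal'' edges $p_0q_0 = uv$ and $p_iq_i$ are cycle edges, not chords. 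Hence this smaller cycle is chordless of length $\geq 4$, again contradicting chordality. In both subcases the contradiction completes the proof, so $P(u,C)$ and $P(v,C)$ must be comparable.
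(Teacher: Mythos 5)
The paper gives no proof of this lemma at all: it is imported as a black box from~\cite{ChD94}, so there is no in-paper argument to compare yours against. Your self-contained derivation is essentially correct and is the natural elementary route. The preliminary reductions are right: incomparability yields witnesses $c_u,c_v$ with $d(u,C)=d(v,C)=a\geq 1$ and the two tight equalities $d(u,c_v)=d(v,c_u)=a+1$, and these two equalities used in tandem do force every cross-edge $p_iq_j$ between the two shortest paths to satisfy $i=j$ (the inequalities $a+1\leq (i+1)+(a-j)$ and $a+1\leq (j+1)+(a-i)$ give $j\leq i$ and $i\leq j$). Passing to the minimal $i$ with $1\leq i\leq a-1$ and $p_iq_i\in E$, or to the full cycle if no such $i$ exists, then produces an induced cycle of length at least $4$, contradicting chordality.

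The one step you assert without checking is that the closed walk $\mathcal{C}$ is a simple cycle, i.e., that the vertices $p_0,\ldots,p_a,q_a,\ldots,q_0$ are pairwise distinct; this is needed before one can speak of chords at all. It follows from the same pair of inequalities you already use for cross-chords, with the ``$+1$'' removed: if $p_i=q_j$, then $a+1\leq d(u,c_v)\leq i+(a-j)$ gives $i\geq j+1$, while $a+1\leq d(v,c_u)\leq j+(a-i)$ gives $j\geq i+1$, a contradiction. With that one line added (and noting $c_u\neq c_v$ since they lie in different set differences), the argument is complete.
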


Finally, we recall the following tight relation between diameter and radius:

\begin{lemma}[\cite{LaS83}]\label{lem:center-diam-chordal}
For every chordal graph $G$, $2rad(G) - 2 \leq diam(G) \leq 2rad(G)$.
\end{lemma}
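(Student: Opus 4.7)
The upper bound $diam(G)\leq 2\,rad(G)$ holds for every connected graph by a one-line triangle-inequality argument: for any center $c$ (so $e(c)=rad(G)$) and any vertices $x,y$, $d(x,y)\leq d(x,c)+d(c,y)\leq 2\,rad(G)$. The content of the lemma is therefore the lower bound $diam(G)\geq 2\,rad(G)-2$, equivalent to $rad(G)\leq \lfloor diam(G)/2\rfloor +1$. The plan is to fix a diametral pair $(x,y)$ with $d(x,y)=D=diam(G)$, together with a shortest $xy$-path $P=(x=p_0,p_1,\dots,p_D=y)$, and to show that the middle vertex $c:=p_{\lfloor D/2\rfloor}$ satisfies $e(c)\leq \lfloor D/2\rfloor +1$; then $rad(G)\leq e(c)\leq \lfloor D/2\rfloor +1$ rearranges to the desired inequality.

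To bound $e(c)$, I would pick an arbitrary $z\in V$ together with a shortest $xz$-path $Q_1$ and a shortest $yz$-path $Q_2$, chosen to minimise $|V(P)\cup V(Q_1)\cup V(Q_2)|$ among all valid choices. The key idea is that chordality forbids induced cycles of length $\geq 4$ inside $H:=P\cup Q_1\cup Q_2$, so any cycle in $H$ admits a chord. Iterated chord extraction, combined with minimality of $H$ (any chord that shortens $Q_1$, $Q_2$ or $P$ yields a contradiction with geodesity or minimality), forces the existence of a vertex $q\in V(Q_1)\cup V(Q_2)$ with $d(c,q)\leq 1$. By symmetry assume $q\in V(Q_1)$, and let $j:=d(x,q)$. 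The triangle inequalities $|d(c,x)-d(q,x)|\leq d(c,q)\leq 1$ and $|d(c,y)-d(q,y)|\leq 1$, combined with $d(q,y)\geq D-j$ (because $P$ is a geodesic from $x$ to $y$), pin $j$ to within $1$ of $\lfloor D/2\rfloor$. From $d(c,z)\leq 1+d(q,z)=1+|Q_1|-j$ and $|Q_1|\leq D$, a case analysis on the exact value of $j$ and on the parity of $D$ then yields the eccentricity bound.

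The main obstacle is making the chord-extraction step give the sharp bound. A crude invocation of the $1$-hyperbolicity of chordal graphs would only yield $d(c,z)\leq \lceil D/2\rceil +2$, i.e., $D\geq 2\,rad(G)-4$, which is too weak. The additional unit is purchased by noting that whenever the crude argument produces $j=\lfloor D/2\rfloor -1$ or (symmetrically) $d(q,y)=\lfloor D/2\rfloor -1$, the minimality of $H$ forces a second chord along the short side of the geodesic triangle, giving a further neighbour of $c$ on $Q_1$ one step closer to $z$; this is the delicate bookkeeping that makes the $-2$ bound, rather than $-4$, tight. A secondary subtlety is the parity of $D$: for even $D$ the above immediately gives $d(c,z)\leq D/2+1$, whereas for odd $D$ the same computation applied to $c=p_{\lfloor D/2\rfloor}$ yields $d(c,z)\leq (D+1)/2+1$, i.e., $e(c)\leq (D+3)/2$; combined with the universal bound $rad(G)\geq \lceil D/2\rceil =(D+1)/2$ this already forces $rad(G)\in\{(D+1)/2,(D+3)/2\}$, and applying the same argument to the symmetric middle vertex $p_{\lceil D/2\rceil}$ rules out the larger option, recovering $D\geq 2\,rad(G)-2$ in all cases.
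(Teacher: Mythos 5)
First, note that the paper does not prove this lemma at all: it is imported verbatim from~\cite{LaS83}, so there is no in-paper argument to compare yours against. Your proposal therefore has to stand on its own, and as written it does not: it is a strategy outline whose two decisive steps are asserted rather than proved, and one of them is logically insufficient even as a sketch.

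The first gap is the heart of the matter. Everything hinges on the claim that chordality plus minimality of $H=P\cup Q_1\cup Q_2$ forces a vertex $q\in V(Q_1)\cup V(Q_2)$ with $d(c,q)\leq 1$, and then on the ``second chord'' refinement that upgrades the resulting bound by one unit. You correctly observe that $1$-hyperbolicity alone only yields $d(c,z)\leq\lceil D/2\rceil+2$, i.e.\ $diam\geq 2rad-4$; but the mechanism that is supposed to buy the missing unit(s) --- ``iterated chord extraction'', ``delicate bookkeeping'', ``a second chord along the short side'' --- is never carried out, and it is precisely where all the difficulty of the theorem lives. Indeed, even granting $d(c,q)\leq 1$ with $q\in Q_1$ and $j=d(x,q)$, your own constraints only pin $j$ to $\{\lfloor D/2\rfloor-1,\lfloor D/2\rfloor,\lfloor D/2\rfloor+1\}$, and the resulting bound $d(c,z)\leq 1+d(x,z)-j$ meets the target $\lfloor D/2\rfloor+1$ only in the most favourable of these three cases when $D$ is odd (and in two of the three when $D$ is even). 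So the cases you defer to bookkeeping are not corner cases; they are the generic ones.

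The second gap is the odd-diameter endgame, which is invalid as stated. For $D=2k+1$ the lemma forces $rad(G)=k+1=\lceil D/2\rceil$ exactly, so you must exhibit a \emph{single} vertex of eccentricity $k+1$. Your argument bounds $e(p_k)\leq k+2$ and, ``by symmetry'', $e(p_{k+1})\leq k+2$; concluding that one of them has eccentricity $k+1$ does not follow. Even the stronger statement ``for every $z$, $\min\bigl(d(p_k,z),d(p_{k+1},z)\bigr)\leq k+1$'' would not suffice, since the vertex achieving the minimum may depend on $z$, whereas the radius requires one vertex that works against all targets simultaneously. To repair this you would need a genuinely new argument for the odd case, not a symmetric reapplication of the even-case computation. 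I would suggest either following the classical route of~\cite{LaS83} (or Chepoi's later treatment of centers of chordal graphs), or deriving the lemma from the LexBFS property already quoted as Lemma~\ref{lem:lexbfs-chordal}: the last vertex $u$ of a LexBFS satisfies $e(u)\geq diam(G)-1$ with equality only when $e(u)$ is even, and combining this with a second sweep from $u$ gives the parity-sensitive control that your midpoint argument is missing.
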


\begin{proof}[Proof of Theorem~\ref{thm:chordal-domt}]
By Lemma~\ref{lem:red-mod-2}, we may assume $G$ to be prime. 
We proceed as follows:
\begin{itemize}
    \item We pick a start vertex $c$ so that $e(c) = rad(G)$.
    \item We set $H=\{c\}$. While $H$ is not a dominating set of $G$, we compute an extremity $x_i \notin N[H]$ and we set $H := H \cup P_i \cup N(x_i)$, where $P_i$ denotes some arbitrary shortest $cx_i$-path. Let $x_1,x_2,\ldots,x_t$ denote all the extremities computed.
    \item We output $\max\{e(x_i) \mid 1 \leq i \leq t\}$ as the diameter value for $G$.
\end{itemize}

\smallskip
\noindent
{\bf Complexity.}
For chordal graphs, a central vertex can be computed in linear time~\cite{ChD94}.
The two last phases of the algorithm run in ${\cal O}(tm)$ time, with $t$ the number of extremities computed.
We claim that $t \leq k$.
Indeed, let $D$ be a fixed (but unknown) dominating target with at most $k$ vertices.
By Lemma~\ref{lem:extrem-close-to-target}, $x_1,x_2,\ldots,x_t \in N[D]$.
Furthermore, for every $1 \leq i < j \leq t$, since at step $i$ we fully added $N[x_i]$ to $H$, $N[x_i] \cap N[x_j] = \emptyset$ (otherwise, $x_j \in N[H]$ cannot be selected at ulterior step $j > i$).
In particular, $N[x_i] \cap N[x_j] \cap D = \emptyset$, and so we proved as claimed that $t \leq |D| \leq k$.
Overall, the total runtime is in ${\cal O}(km)$.

\smallskip
\noindent
{\bf Correctness.}
We prove in what follows that $diam(G) = \max\{e(x_i) \mid 1 \leq i \leq t\}$.
Suppose by contradiction $diam(G) > \max\{e(x_i) \mid 1 \leq i \leq t\}$.
Recall that $diam(G) \geq 4 > 2$.
It implies the existence of some vertex $u \notin N[c]$ such that $e(u) = diam(G)$.
In what follows, let $i_0$ be the least index $i$ such that there is a vertex of maximum eccentricity in $\left( N[P_i] \cup N^2[x_i]\right) \setminus N[c]$.
Let also $u \in \left(N[P_{i_0}] \cup N^2[x_{i_0}]\right) \setminus N[c]$ satisfy $e(u) = diam(G)$, and let $v \in F(u)$.

We need to revisit the proof of Lemma~\ref{lem:next-extrem} in order to derive a contradiction.
Specifically, let $S = \bigcup_{j < i_0} \left(N[P_j] \cup N^2[x_j]\right)$.
In order to compute $x_{i_0}$, we first execute a LexBFS($c$), thus outputting a LexBFS ordering $\sigma$.
Then, let $y_{i_0}$ be the vertex of $V \setminus S$ with the smallest LexBFS number, and let $S'$ denote the subset of all vertices in $S$ that are numbered after $y_{i_0}$.
We extract $x_{i_0}$ from some module $M$ of $G \setminus S'$, $c \notin M$, whose vertices have the smallest LexBFS numbers amongst the vertices of $V \setminus S'$ (in particular, $y_{i_0} \in M$).

By minimality of index $i_0$ we have $u \notin S$.
Therefore, either $\sigma^{-1}(u) > \sigma^{-1}(x_{i_0})$ or $u \in M$.
Similarly, if $v \in S$, then by minimality of $i_0$ we have $v \in N[c]$, and therefore $\sigma^{-1}(v) > \sigma^{-1}(x_{i_0})$ because we have $d(c,v) < d(c,x_{i_0})$.
As a result, either $\sigma^{-1}(v) > \sigma^{-1}(x_{i_0})$ or $v \in M$.
Overall, it implies that $u,v \in V \setminus S'$.

Furthermore, by Lemma~\ref{lem:peo} we obtain $G \setminus S'$ from $G$ by iteratively removing some simplicial vertices, and therefore $G \setminus S'$ is an isometric subgraph of $G$.
In particular, $diam(G \setminus S') = diam(G) = d(u,v)$.

\begin{myclaim}
$u,v \notin M$.
\end{myclaim}
Indeed,we cannot have $u,v \in M$ since otherwise, $d(u,v) \leq 2$. 
But if $M \cap \{u,v\} \neq \emptyset$, then since $M$ is a module of $G \setminus S'$, all vertices of $M$ have maximum eccentricity in $G \setminus S'$.
In particular, $e(x_{i_0}) = diam(G)$, a contradiction.
Therefore, we proved as claimed $u,v \notin M$. $\diamond$

\smallskip
\noindent
In this situation, we may further restrict ourselves to the isometric subgraph $G_{i_0}$ induced by $x_{i_0}$ and all vertices numbered before $x_{i_0}$.
Since we still have $u,v \in V(G_{i_0})$, $diam(G_{i_0}) = diam(G) = d(u,v)$.
By construction, $x_{i_0}$ is the last vertex numbered in a LexBFS in $G_{i_0}$.
Since we suppose $e(x_{i_0}) < diam(G)$ and $G_{i_0}$ is chordal, the following results follow from Lemma~\ref{lem:lexbfs-chordal}: $e(x_{i_0}) = diam(G) - 1 = d(u,x_{i_0}) = d(v,x_{i_0})$.

In this situation, $u,v \notin N^2[x_{i_0}]$ because $d(u,x_{i_0}) = d(v,x_{i_0}) = diam(G) - 1 \geq 3$.
In particular, we have $u \in N[P_{i_0}]$.
Since we suppose $u \notin N(c)$, we get $d(x_{i_0},u) \leq d(x_{i_0},c)$.
As a result, $d(x_{i_0},u) = d(x_{i_0},c) = diam(G)-1$. 

\smallskip
We make the following useful observation for what follows: 

\begin{myclaim}\label{claim:rad-1}
$diam(G) = rad(G)+1$.
\end{myclaim}
Indeed, $rad(G) = e(c) \geq d(x_{i_0},c) = diam(G)-1$.
If $rad(G) > diam(G) - 1$, then $rad(G) = diam(G)$ and all vertices ({\it e.g.}, $x_{i_0}$) have maximum eccentricity. 
A contradiction. $\diamond$

\smallskip
We end up deriving a contradiction as we can now prove that:
\begin{myclaim}\label{claim:rad-2}
$rad(G) = 2$.
\end{myclaim}
Indeed, by Lemma~\ref{lem:center-diam-chordal} $diam(G) \geq 2rad(G) -2$.
Therefore, $rad(G)+1 \geq 2rad(G)-2$, that implies $rad(G) \leq 3$.
Suppose $rad(G) = 3$.
Since $x_1 \in F(c)$, we obtain $e(x_1) \geq e(c) = rad(G) = diam(G)-1$.
We cannot have $e(x_1) > e(c)$ because otherwise we obtain $e(x_1) = diam(G)$.
Hence, $e(x_1) = e(c) = 3$, that is an odd number. 
But again by Lemma~\ref{lem:lexbfs-chordal}, if the last vertex output by LexBFS in a chordal graph has odd eccentricity, then its eccentricity equals the diameter.
It implies that in fact $e(x_1) = diam(G)$.
A contradiction. $\diamond$

\smallskip
This above Claim~\ref{claim:rad-2} implies $diam(G) = 2rad(G) = 4$, therefore $diam(G) = rad(G)+2$, that contradicts Claim~\ref{claim:rad-1}.
\end{proof}

We complete the above Theorem~\ref{thm:chordal-domt} with an improved algorithm for computing the diameter of chordal graphs with bounded asteroidal number.
For that, we combine the algorithmic scheme of Theorem~\ref{thm:chordal-domt} with a previous approach from~\cite{Duc21c}.
More specifically, for the special case $rad(G)=2$, we considerably revisit a prior technique from~\cite[Proposition 2]{Duc21c}, for split graphs, so that is also works on diameter-three chordal graphs.

\begin{theorem}\label{thm:chordal}
For every chordal graph $G = (V,E)$ of asteroidal number at most $k$, we can compute its diameter in deterministic ${\cal O}(km)$ time.
\end{theorem}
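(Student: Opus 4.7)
The plan is to reduce to the prime case by Lemma~\ref{lem:red-mod} and then combine Theorem~\ref{thm:chordal-domt} with a new procedure for the low-diameter regime. Specifically, by Lemma~\ref{lem:an-k} we have $G \in Ext_k$, and by Lemma~\ref{lem:dom-target} a prime $G \in Ext_k$ admits a dominating target of cardinality at most $k$. Hence whenever $diam(G) \ge 4$, Theorem~\ref{thm:chordal-domt} already yields the announced $\mathcal{O}(km)$ bound. All the new work is to detect $diam(G) \le 3$ and, in that case, to compute the diameter exactly.

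To locate the low-diameter regime, I would first compute $rad(G)$ and a central vertex $c$ in linear time via a LexBFS-based center algorithm (Chepoi--Dragan). By Lemma~\ref{lem:center-diam-chordal}, $rad(G) \ge 3$ already forces $diam(G) \ge 4$ so we can fall back to Theorem~\ref{thm:chordal-domt}. The case $rad(G) \le 1$ is trivial since $G$ is prime. The only remaining case is therefore $rad(G) = 2$, where $diam(G) \in \{2,3,4\}$. I would run the algorithm of Theorem~\ref{thm:chordal-domt} unchanged; its correctness proof ensures that the output value equals $diam(G)$ whenever $diam(G) \ge 4$. If the output exceeds $3$ we are done; otherwise we only have to decide whether the true diameter is $2$ or $3$.

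For this final dichotomy, the plan is to adapt the split-graph procedure of \cite[Prop.~2]{Duc21c} to arbitrary chordal graphs of radius $2$. The task reduces to finding a pair $u,v$ with $N[u]\cap N[v] = \emptyset$ among vertices of $L_2 := V\setminus N[c]$ (vertices at distance $2$ from $c$), possibly together with a neighbor in $N(c)$. The crucial structural lever is Lemma~\ref{lem:chordal-inclusion-proj}: along any connected region of $G$ minus a clique $C$, the projections $P(v,C)$ of the outside vertices form a chain under inclusion. Applying this to a suitably chosen maximal clique containing $c$, together with a LexBFS / perfect elimination ordering of $G$, I would sort the vertices of $L_2$ by the size of their projection $N(u)\cap N(c)$, and sweep to test whether the smallest such projection is disjoint from some other one; the chain property of Lemma~\ref{lem:chordal-inclusion-proj} prevents a naive quadratic blow-up and lets the sweep run in $\mathcal{O}(m)$ time. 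We also need to handle edges inside $L_2$ (which are absent in the split case but may appear here), using the simpliciality guarantees on extremities and a BFS from each of the at most $k$ extremities computed in the earlier phase.

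The main obstacle is precisely this last step: while in split graphs the clique/independent-set partition makes the disjoint-neighborhood search essentially a one-pass comparison, in general chordal graphs with $rad(G)=2$ the set $L_2$ can induce nontrivial subgraphs and vertices of $N(c)$ may have complicated adjacencies into $L_2$. The argument has to show that, despite this, the chain structure produced by Lemma~\ref{lem:chordal-inclusion-proj} together with the at most $k$ already-computed extremities (which by Lemma~\ref{lem:extrem-close-to-target} cover $V$ via their closed neighborhoods) still permits deciding $diam \in \{2,3\}$ in $\mathcal{O}(km)$ time; this is the nontrivial generalization that distinguishes Theorem~\ref{thm:chordal} from the dominating-target version Theorem~\ref{thm:chordal-domt}.
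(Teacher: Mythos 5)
Your outer structure matches the paper exactly: reduce to the prime case, observe that $rad(G)\geq 3$ forces $diam(G)\geq 4$ via Lemma~\ref{lem:center-diam-chordal} so that Theorem~\ref{thm:chordal-domt} applies (using Lemma~\ref{lem:dom-target} to get the dominating target), and isolate the radius-two regime where the diameter is $2$ or $3$. But the entire content of the theorem lies in that last dichotomy, and your sketch of it has a genuine gap that you yourself flag. First, a technical point on how you enter that regime: the paper does not decide $diam\in\{2,3\}$ from the output of the Theorem~\ref{thm:chordal-domt} algorithm; it takes the last vertex $x_1$ of a LexBFS($c$), handles $e(x_1)\in\{3,4\}$ directly via Lemma~\ref{lem:lexbfs-chordal}, and in the remaining case $e(x_1)=2$ uses that $x_1$ is simplicial (Lemma~\ref{lem:peo}) so that $C=N(x_1)$ is a \emph{dominating clique}. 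This choice matters: Lemma~\ref{lem:chordal-inclusion-proj} requires projections onto a clique, and $N(c)$ need not be one, so sorting $L_2$ by $|N(u)\cap N(c)|$ does not inherit the comparability property you want. Moreover you misstate that lemma: it only says that \emph{adjacent} vertices outside the clique have comparable projections; it does not make the projections of a whole connected region form a chain, so a single sorted sweep comparing the smallest projection against the others is not justified.

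The deeper gap is the claimed $\mathcal{O}(m)$ (or $\mathcal{O}(km)$) decision procedure itself. Deciding whether two vertices of $S=V\setminus N[x_1]$ have disjoint closed neighbourhoods is exactly the hard combinatorial core, and the paper resolves it with machinery you do not supply: a lexicographic $L$-ordering of $S$ built from the values $\ell(v)=|C\cap N(v)|$ and those of the $S$-neighbours, computable in linear time by partition refinement; a first pruning of $S$ to an independent set $S^*$ justified by a nontrivial claim that an $L$-smaller neighbour has no smaller eccentricity; a second pruning via a relation $\prec_N$ whose test is implemented in $\mathcal{O}(m)$ per processed vertex by a counting trick; and, crucially, a proof that the vertices actually processed are pairwise $\prec_N$-incomparable and hence form an \emph{asteroidal set}, which is where the bound of $k$ BFS calls comes from. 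Your alternative source of the factor $k$ --- running a BFS from each of the $\leq k$ extremities computed earlier --- does not work: a BFS from an extremity gives that extremity's eccentricity, not the eccentricities of the vertices in its closed neighbourhood, and a diametral vertex may sit in such a neighbourhood while the extremity itself has smaller eccentricity (this is precisely why the general algorithm needs Lemma~\ref{lem:search-diam} and pays $m^{3/2}$). So the proposal correctly identifies where the difficulty lies but does not close it, and the specific mechanisms it gestures at (global projection chain, one-pass sweep, BFS from extremities) would not do so as stated.
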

\begin{proof}
By Lemma~\ref{lem:red-mod}, we may assume $G$ to be prime. 
Let us further assume $|V| \geq 3$, so that there is no universal vertex.
We pick a start vertex $c$ so that $e(c) = rad(G)$.
It can be computed in linear time since $G$ is chordal~\cite{ChD94}.

Let us first assume that $rad(G) \geq 3$.
By Lemma~\ref{lem:center-diam-chordal}, $diam(G) \geq 2rad(G)-2 \geq 4$.
By Lemma~\ref{lem:dom-target}, $G$ has a dominating target of cardinality at most $k$.
Therefore, we are done applying Theorem~\ref{thm:chordal-domt}.

From now on, we assume that $rad(G) \leq 2$.
In fact, since there is no universal vertex, $rad(G) = 2$.
Let $x_1$ be the last vertex numbered in a LexBFS($c$).
\begin{itemize}
    \item If $e(x_1) = 4 = 2rad(G)$, then clearly $e(x_1) = diam(G)$.
    \item Similarly, if $e(x_1) = 3$, then since $e(x_1)$ is odd, by Lemma~\ref{lem:lexbfs-chordal} $e(x_1) = diam(G)$.
\end{itemize}
From now on, $e(x_1) = 2$.
By Lemma~\ref{lem:peo}, $x_1$ is a simplicial vertex.
Thus $C = N(x_1)$ is a dominating clique of $G$, which implies $diam(G) \in \{2,3\}$.
Let $S = V \setminus N[x_1]$.
Note that any vertex of eccentricity $3$ must be in $S$.

\smallskip
We next define a specific type of orderings over the vertices in $S$, that is the cornerstone of our algorithmic procedure for this case.
Specifically, for every $v \in S$, let $\ell(v) = |C \cap N(v)|$.
Let $N(v) \cap S = (u_1,u_2,\ldots,u_d)$ be ordered by non-increasing $\ell(\cdot)$ values, and let $L(v) = (\ell(v),\ell(u_1),\ell(u_2),\ldots,\ell(u_d))$.
An $L$-ordering is a total ordering of $S$ by non-decreasing $L$-values -- according to the lexicographic ordering.

\smallskip
Our algorithm goes as follows:
\begin{enumerate}
    \item We compute an $L$-ordering.
    \item We scan the vertices in an $L$-ordering of $S$.
    For each vertex $x$, we discard all its neighbours that appear later in the ordering.
    Doing so, we obtain an independent set $S^* \subseteq S$.
    \item We scan the vertices in an $L$-ordering of $S^*$.
    Let us introduce the following binary relation $\prec_N$ over $S^*$.
    Namely, $u \prec_N v$ if and only if $N(y) \cap C \subseteq N(v)$ for every $y \in N[u] \cap S, \ y \notin N(v)$.
    For each vertex $x$, if $x$ has not been discarded at an earlier step (when we processed some other vertex that appears before $x$ in the $L$-ordering), then we discard all $v \in S^*$ such that $x \prec_N v$ that appear later in the ordering.
    \item We output the maximum eccentricity amongst the remaining vertices of $S^*$ as the diameter of $G$.
\end{enumerate}

\smallskip
\noindent
{\bf Correctness.}
Besides the initial computation of an $L$-ordering, this above algorithm has two main phases.
One consists in reducing $S$ to an independent set $S^*$.
The other consists in further pruning out $S^*$ by using the binary relation $\prec_N$.

Correctness of the first phase directly follows from the following claim:
\begin{myclaim}\label{claim:lex-chordal}
If $u,v \in S$ are adjacent so that $L(u) \trianglelefteq L(v)$, then $e(u) \geq e(v)$.
\end{myclaim}
Suppose by contradiction that $e(u) < e(v)$.
In particular, $e(u) = 2$ and $e(v) = 3$.
Let $w \in F(v)$.
Note that $w \in S$.
In particular (since $u$ and $v$ are adjacent), $d(u,w) = 2$ and $u \in N(v) \cap I(v,w)$.
Observe that $N(v) \cap C \subseteq N(v) \cap I(v,w)$.
Since $G$ is chordal, $N(v) \cap I(v,w)$ is a clique -- {\it e.g.}, see~\cite{CDK03}.
It implies that $N(v) \cap C \subseteq N(u)$. 
As we further assume that $L(u) \trianglelefteq L(v)$ it follows that $N(u) \cap C = N(v) \cap C$.

Let $y \in N(u) \cap N(w)$.
Since $y \notin N(v)$ and $N(u) \cap C = N(v) \cap C$, $y \notin N[x_1]$.
Furthermore, $\{y\} \cup (N(w) \cap C) \subseteq N(w) \cap I(w,v)$, that implies $N(w) \cap C \subseteq N(y)$ because $G$ is chordal~\cite{CDK03}.
By Lemma~\ref{lem:chordal-inclusion-proj}, $N(u) \cap C$ and $N(y) \cap C$ are comparable.
Therefore, $N(u) \cap C \subseteq N(y) \cap C$.
Note in particular that $\ell(y) > \ell(u)$.
But, since we assume $L(u) \trianglelefteq L(v)$ and $y \notin N(v)$, there must exist a vertex $z \in N(v) \setminus N(u)$ such that $\ell(z) \geq \ell(y)$.

In this situation, we prove that $N(y) \cap C \subseteq N(z)$.
Suppose by contradiction $N(y) \cap C \not\subseteq N(z)$.
Then, $y$ and $z$ are nonadjacent (otherwise, by Lemma~\ref{lem:chordal-inclusion-proj} $N(y) \cap C$ and $N(z) \cap C$ are comparable, and therefore $N(y) \cap C \subseteq N(z)$ because we also have $\ell(z) \geq \ell(y)$).
Let $a \in (N(y) \cap C) \setminus N(z)$ and let $b \in (N(z) \cap C) \setminus N(y)$ (such a vertex $b$ must exist since we have $\ell(z) \geq \ell(y)$).
Again according to Lemma~\ref{lem:chordal-inclusion-proj} we have that $N(v) \cap C$ and $N(z) \cap C$ are comparable, and therefore $N(v) \cap C \subseteq N(z) \cap C$. But then, $N(v) \cap C = N(u) \cap C \subseteq N(y) \cap N(z)$, and therefore, $a,b \notin N(v) \cup N(u)$.
It implies that $(y,a,b,z,v,u)$ induces a cycle, thus contradicting that $G$ is chordal.

Finally, since we have $N(w) \cap C \subseteq N(y) \cap C \subseteq N(z) \cap C$, $z \in N(v) \cap I(v,w)$.
Since $G$ is chordal, $u$ and $z$ must be adjacent~\cite{CDK03}, a contradiction. $\diamond$

\smallskip
In the same way, correctness of the second phase directly follows from the following claim:
\begin{myclaim}
If $u,v \in S^*$ satisfy $u \prec_N v$, then $e(u) \geq e(v)$.
\end{myclaim}
Indeed, suppose by contradiction $e(v) = 3$ whereas $e(u) = 2$.
Let $w \in F(v)$ be arbitrary.
If $uw \in E$, then since we also have $u \prec_N v$ and $w \notin N(v)$, we get $N(w) \cap C \subseteq N(v)$.
In particular, $d(v,w) \leq 2$, that is a contradiction.
Therefore, $d(u,w) = 2$.
Let $y \in N(u) \cap N(w)$.
Since $u \prec_N v$, we have $N(u) \cap C \subseteq N(v)$.
In particular, $y \notin N[x_1]$.
In the same way, since we have $u \prec_N v$ and $y \notin N(v)$ (else, $d(v,w) \leq 2$), we get $N(y) \cap C \subseteq N(v)$.
But then, according to Lemma~\ref{lem:chordal-inclusion-proj}, $N(y) \cap C$ and $N(w) \cap C$ are comparable.
Therefore, $N(v) \cap N(w) \cap C \neq \emptyset$, that is a contradiction. $\diamond$

\smallskip
\noindent
{\bf Complexity.}
Since it is an essential aspect to our algorithm, we start proving that:

\begin{myclaim}\label{lem:l-order}
An $L$-ordering can be computed in linear time.
\end{myclaim}
We maintain an ordered partition ${\cal P}$ of $S$, that is initially reduced to ${\cal P} = (S)$.
Our procedure ensures that at any of its steps, we can order the vertices of each group in ${\cal P}$ so as to obtain an $L$-ordering.
In particular, at the end of the procedure we can output an $L$-ordering simply by listing all the groups of ${\cal P}$ in order (vertices inside a same group can be ordered arbitrarily).
For that, we may order in linear time the vertices of $S$ by non-decreasing $\ell(\cdot)$-values -- {\it e.g.}, by using counting sort.
In doing so, we partition $S$ into disjoint subsets $S_1, S_2, \ldots, S_p$ such that: for every $u,v \in S_i$ we have $\ell(u) = \ell(v) = \ell_i$; and $\ell_1 > \ell_2 > \ldots > \ell_p$.
We actualize the partition as ${\cal P}=(S_p,S_{p-1},\ldots,S_2,S_1)$.
Then, we consider the sets $S_i$ sequentially, from $S_1$ to $S_p$ by increasing index $i$.
Let $N_i$ be the set of all vertices of $S$ with at least one neighbour in $S_i$, that we further partition in $N_{i,1}, N_{i,2}, \ldots, N_{i,|S_i|}$ so that, for any $j \geq 1$, the vertices of $N_{i,j}$ exactly have $j$ neighbours in $S_i$.
Being given the current partition ${\cal P}$, we refine each of its groups $X$ as the ordered sub-sequence $X \setminus N_i, X \cap N_{i,1}, X \cap N_{i,2}, \ldots, X \cap N_{i,|S_i|}$ -- eventually removing any empty group from this sub-sequence.
This can be done in total ${\cal O}(|N_i|)$ time by using classical partition refinement techniques~\cite{HMPV00,PaT87}.
Overall, since all the subsets $S_i$ are pairwise disjoint, the total runtime is linear. $\diamond$

\smallskip
In particular, being already given an $L$-ordering, Phase 1 of the algorithm (reduction from $S$ to $S^*$) clearly runs in linear time.
The difficulty consists in analyzing the runtime of the second phase.
Let $A$ contains all vertices $u \in S^*$ that are processed during this second phase. 
Let $T(m)$ be the runtime of a procedure outputting, for a given vertex $u \in S^*$, all vertices $v \in S^*$ such that $u \prec_N v$. 
The runtime of the second phase is in ${\cal O}(|A|T(m))$ time.

We prove next that $T(m) = {\cal O}(m)$.
In what follows, let $u \in S^*$ be arbitrary.
\begin{itemize}
    \item We assign some counter $\gamma(v)$ to each $v \in V$ -- initially, $\gamma(v) := 0$.
          We stress here that both vertices in $C$ and in $S$ are assigned such a counter.
    \item We consider each $y \in N[u] \cap S$ sequentially. For every $z \in N(y) \cap C$ we increment the counter $\gamma(z)$ of one unit. If furthermore $y \neq u$, then we also consider every $v \in N(y) \cap S^*, \ v \neq u$, sequentially. If $\ell(v) < \ell(y)$, then we increment $\gamma(v)$ of $\ell(y) - \ell(v)$ units.
    \item We discard all vertices $v \in S^* \setminus \{u\}$ such that $\gamma(v) + \sum_{z \in N(v) \cap C} \gamma(z) = \sum_{y \in N[u] \cap S} \ell(y)$.
\end{itemize}
Each step runs in linear time.
Correctness of the procedure follows from the next Claim:

\begin{myclaim}\label{claim:prec-N}
For every $v \in S^*$, $u \prec_N v$ if and only if $\gamma(v) + \sum_{z \in N(v) \cap C} \gamma(z) = \sum_{y \in N[u] \cap S} \ell(y)$.
\end{myclaim}
For short, let us write $\Gamma(v) = \gamma(v) + \sum_{z \in N(v) \cap C} \gamma(z)$.
We first consider a vertex $y \in N(u) \cap N(v) \cap S$ -- if any. 
By Lemma~\ref{lem:chordal-inclusion-proj}, $N(v) \cap C$ and $N(y) \cap C$ are comparable for inclusion.
If $\ell(v) \geq \ell(y)$, then $N(y) \cap C \subseteq N(v)$. 
At the time when we processed vertex $y$, we incremented each counter $\gamma(z), \ z \in N(y) \cap C$ of one unit, which all contribute to $\Gamma(v)$.
Else, $\ell(v) < \ell(y)$, and so $N(v) \cap C \subseteq N(y)$.
In this situation only the vertices of $N(v) \cap C$ contribute to $\Gamma(v)$, however this is compensated by the direct increment of $\gamma(v)$ of exactly $\ell(y) - \ell(v)$ units.
Summarizing, the processing of any vertex $y \in N(u) \cap N(v) \cap S$ always results in increasing $\Gamma(v)$ of exactly $\ell(y)$ units.
Finally, let us consider a vertex $y \in N[u] \cap S$ such that $y \notin N(v)$ -- there is at least one such vertex, namely $u$.  
The processing of $y$ can only result in increasing $\Gamma(v)$ by at most $\ell(y)$ units.
Moreover, this increase is of exactly $\ell(y)$ units if and only if all vertices of $N(y) \cap C$ contribute to $\Gamma(v)$, {\it i.e.}, $N(y) \cap C \subseteq N(v)$. $\diamond$

\smallskip
By Claim~\ref{claim:prec-N}, the runtime of the second phase (and so, of the whole algorithm) is in ${\cal O}(|A|m)$. We prove next that $|A| \leq k$.
For that, it suffices to prove that $A$ is an asteroidal set.
This is done in two steps.
First, we prove that all vertices of $A$ are pairwise incomparable with respect to the binary relation $\prec_N$.
\begin{myclaim}
Let $u,v \in S^*$ be such that $L(u) \trianglelefteq L(v)$.
If $v \prec_N u$, then $u \prec_N v$.
\end{myclaim}
Indeed, we have $N(v) \cap C \subseteq N(u)$. 
Since we assume $L(u) \trianglelefteq L(v)$, it follows that $\ell(u) = \ell(v)$ and therefore $N(v) \cap C = N(u) \cap C$.
Let us now assume that $N(u) \setminus N(v) \neq \emptyset$ (else we are done).
Let $L'(u)$ (resp., $L'(v)$) be the restriction of $L(u)$ (resp., of $L(v)$) to the ordered sub-sequence of labels for $N(u) \setminus N(v)$ (resp., for $N(v) \setminus N(u)$).
Observe that $L(u) \trianglelefteq L(v)$ implies $L'(u) \trianglelefteq L'(v)$.
In particular, $N(v) \setminus N(u) \neq \emptyset$. 
Let $x \in N(u) \setminus N(v)$ be arbitrary and let $y \in N(v) \setminus N(u)$ be maximizing $\ell(y)$.
Since we have $L'(u) \trianglelefteq L'(v)$, $\ell(x) \leq \ell(y)$.
Furthermore, $\ell(y) \leq \ell(u)$ because $N(y) \cap C \subseteq N(u)$.
By Lemma~\ref{lem:chordal-inclusion-proj}, $N(x) \cap C$ and $N(u) \cap C$ are comparable.
As a result, $N(x) \cap C \subseteq N(u) \cap C = N(v) \cap C$. $\diamond$

\smallskip
We end up proving that pairwise incomparable vertices of $S^*$ with respect to $\prec_N$ always form an asteroidal set of $G$.
\begin{myclaim}
Let $A \subseteq S^*$ satisfy that $u \not\prec_N v$ for every $u,v \in A$ distinct.
Then, $A$ is an asteroidal set of $G$.
\end{myclaim}
In order to prove the claim, let $v \in A$ be arbitrary.
Let $C' = C \setminus N(v)$. We stress that $C' \neq \emptyset$ (otherwise, $u \prec_N v$ for any $u \in S^* \setminus \{v\}$).
Hence, for every $u \in A \setminus \{v\}$, it suffices to prove the existence of a path between $u$ and $C'$ in $G \setminus N[v]$.
-- Indeed, doing so, we shall obtain that $A \setminus \{v\}$ is contained in some connected component of $G \setminus N[v]$. --
For that, let $u' \in N[u] \setminus N(v)$ such that $N(u') \cap C \not\subseteq N(v)$ (such a vertex must exist because we assume that $u \not\prec_N v$).
By the very choice of $u'$, this vertex has some neighbour in $C'$, that implies the existence of a path of length at most two between $u$ and $C'$. $\diamond$
\end{proof}

\section{Conclusion}\label{sec:ccl}
{We generalized most known algorithmic results for diameter computation within AT-free graphs to the graphs within the much larger class $Ext_{\alpha}$, for any $\alpha \geq 2$.
The AT-free graphs, but also every DP graph with diameter at least six, belong to $Ext_2$. 
Furthermore, for every $\alpha \geq 3$, every graph of asteroidal number $\alpha$ (and so, every $\alpha$-polygon graph, every $\alpha$-moplex graph and every chordal graph of leafage equal to $\alpha$) belong to $Ext_{\alpha}$.
Evidently, every graph belongs to $Ext_{\alpha}$, for some $\alpha \geq 2$.

We left open whether our results could be extended to the problem of computing exactly all eccentricities. 
The algorithm in~\cite{Duc21c} for the AT-free graphs can also be applied to this more general setting.
Specifically, in every AT-free graph, there exist three vertices $u,v,w$ such that every vertex $x$ is at distance $e(x)$ from a vertex in $N[u] \cup N[v] \cup N[w]$.
An algorithm is proposed in~\cite{Duc21c} in order to prune out each of the three neighbourhoods to a clique while keeping the latter property.
Now, being given a graph from $Ext_{\alpha}$, using our techniques we could also compute a union of ${\cal O}(\alpha^2)$ neighbourhoods such that every vertex $x$ is at maximum distance from some vertex in it.
However, insofar our approach only allows us to extract from each neighbourhood a vertex of maximum eccentricity, that is less powerful than the pruning method in~\cite{Duc21c}.

It could be also interesting to give bounds on the number of extremities (resp., of pairwise nonadjacent extremities) in other graph classes.
Finally, we left as an open problem what the complexity of computing the diameter is within the graphs which can be made AT-free by removing at most $k$ vertices.
For stars and their subdivisions it suffices to remove one vertex, and therefore this class of graphs has unbounded asteroidal number even for $k=1$.
}

\bibliography{biblio}

\appendix

\section{Certification of our algorithms}\label{app:compute-alpha}

Our algorithms in Sec.~\ref{sec:alg} require, being given a prime graph $G$, to be also given some explicit constant $\alpha$ such that $G \in Ext_{\alpha}$.
Next, we explain how this assumption can be removed.

Recall the following procedure common to both Theorems~\ref{thm:approx} and~\ref{thm:exact}.
Starting from some almost central vertex $c$, we repeatedly compute pairwise nonadjacent extremities $x_1,x_2,\dots,x_t$ and shortest $cx_i$-paths $P_i$, for every $1 \leq i \leq t$, until their union $H = \bigcup_{i=1}^t P_i$ becomes a dominating set of $G$. The runtime of this procedure is in ${\cal O}(tm)$, that is in ${\cal O}(\alpha m)$ provided we have $G \in Ext_{\alpha}$.
We then need to consider ${\cal O}(\alpha)$ vertices on each shortest path $P_i$ (specifically, we need to consider $66\alpha-19$ and $42\alpha-11$ such vertices for Theorems~\ref{thm:approx} and~\ref{thm:exact}, respectively). For each vertex considered, either we execute a BFS (Theorem~\ref{thm:approx}) or we apply Lemma~\ref{lem:search-diam} (Theorem~\ref{thm:exact}). Neither case requires to know $\alpha$. Therefore, the value of $\alpha$ is only needed in order to compute the number of vertices to be considered on each shortest path $P_i$.
It turns out this number of vertices actually depends on the hyperbolicity $\delta$ of the graph.
Specifically, for each shortest path $P_i$, we either need to consider $22\delta+3$ (Theorem~\ref{thm:approx}) or $14\delta+3$ vertices (Theorem~\ref{thm:exact}).
In particular, if we can compute some upper bound $\delta^*$ for the hyperbolicity so that $\delta^* = {\cal O}(\alpha)$, then we needn't be given $\alpha$ in the input.

\medskip
Let $s$ be an arbitrary vertex.
For every $u,v \in V$, we write $u \sim^s v$ if and only if the following conditions hold:
\begin{enumerate}
    \item $d(u,s) = d(v,s) = j$, for some $0 \leq j \leq e(u)$;
    \item and there exists a $uv$-path in $G \setminus N^{j-1}[s]$.
\end{enumerate}
Let $\Delta_s(G) = \max\{ d_G(u,v) \mid u \sim^s v \}$.

\begin{lemma}[\cite{AbD16}]\label{lem:cluster-diam-1}
For every vertex $s$ in a graph $G$, the hyperbolicity of $G$ is at most $\Delta_s(G)$.
\end{lemma}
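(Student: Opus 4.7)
The plan is to work with the standard equivalent \emph{basepoint} formulation of Gromov hyperbolicity: $G$ is $\delta$-hyperbolic if and only if, for every basepoint $s$ and all triples $x,y,z$, the Gromov product $(x|z)_s := \tfrac{1}{2}(d(x,s)+d(z,s)-d(x,z))$ satisfies $(x|z)_s \geq \min\{(x|y)_s,(y|z)_s\} - \delta$. I would therefore fix the basepoint $s$ from the statement and aim to prove this inequality with $\delta = \Delta_s(G)$ for arbitrary vertices $x,y,z$.

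The core construction picks the integer ``level'' $K := \lceil \min\{(x|y)_s,(y|z)_s\} \rceil$, chooses arbitrary shortest paths $P_{sx},P_{sy},P_{sz}$ from $s$ to $x,y,z$, and lets $x_K,y_K,z_K$ be the vertices on these paths at distance exactly $K$ from $s$. The elementary bound $(u|v)_s \leq \min\{d(u,s),d(v,s)\}$ (itself immediate from $d(u,v) \geq |d(u,s)-d(v,s)|$) guarantees $K \leq \min\{d(s,x),d(s,y),d(s,z)\}$, so these three vertices are well defined.

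I would then establish the two cluster-closeness relations $x_K \sim^s y_K$ and $y_K \sim^s z_K$. For the first, concatenate the $x_Kx$-subpath of $P_{sx}$, an arbitrary shortest $xy$-path $Q$, and the $yy_K$-subpath of $P_{sy}$. Every vertex on the first and third pieces lies at distance $\geq K$ from $s$ by construction. For any $v \in Q$, the triangle inequalities $d(v,s)+d(v,x) \geq d(x,s)$ and $d(v,s)+d(v,y) \geq d(y,s)$ summed with $d(v,x)+d(v,y)=d(x,y)$ give $2d(v,s) \geq d(x,s)+d(y,s)-d(x,y) = 2(x|y)_s$, hence $d(v,s) \geq (x|y)_s \geq \min\{(x|y)_s,(y|z)_s\}$; since $d(v,s)$ is an integer, $d(v,s) \geq K$. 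The concatenated walk therefore avoids $N^{K-1}[s]$, witnessing $x_K \sim^s y_K$ and yielding $d(x_K,y_K) \leq \Delta_s(G)$; symmetrically $d(y_K,z_K) \leq \Delta_s(G)$.

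Combining these via the ordinary triangle inequality, together with the ``up and down'' estimate along $P_{sx}$ and $P_{sz}$, gives
\[
d(x,z) \leq \bigl(d(x,s)-K\bigr) + d(x_K,z_K) + \bigl(d(z,s)-K\bigr) \leq d(x,s)+d(z,s) - 2K + 2\Delta_s(G),
\]
which rearranges to $(x|z)_s \geq K - \Delta_s(G) \geq \min\{(x|y)_s,(y|z)_s\} - \Delta_s(G)$, as needed. The main delicate point will be the mismatch between half-integer Gromov products and integer graph distances; taking the ceiling in the definition of $K$ absorbs this cleanly, and the classical equivalence between the Gromov-product and four-point formulations of hyperbolicity (agreeing up to the standard normalization already reflected in the $2\delta$ of the paper's definition) completes the translation back to the statement.
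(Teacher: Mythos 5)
The paper offers no proof of this lemma (it is imported verbatim from~\cite{AbD16}), so I can only judge your argument on its own terms. Your core construction is sound: the choice $K=\lceil\min\{(x|y)_s,(y|z)_s\}\rceil$ is legitimate since $K\leq\min\{d(s,x),d(s,y),d(s,z)\}$, the concatenated walk $x_K\to x\to y\to y_K$ does avoid $N^{K-1}[s]$ (every vertex of a shortest $xy$-path is at distance at least $(x|y)_s$ from $s$, hence at least $K$ by integrality), and this correctly yields $d(x_K,y_K)\leq\Delta_s(G)$ and $d(y_K,z_K)\leq\Delta_s(G)$.

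The gap is in the last step. You establish the Gromov-product inequality $(x|z)_s\geq\min\{(x|y)_s,(y|z)_s\}-\Delta_s(G)$ only for the \emph{fixed} basepoint $s$. The lossless equivalence with the paper's four-point definition (gap $2\delta$ between the two largest sums, over \emph{all} quadruples) requires the Gromov-product inequality at \emph{every} basepoint; your version only certifies the four-point condition for quadruples containing $s$. Passing from one basepoint to all basepoints via the standard basepoint-change lemma costs a factor of $2$, so as written your argument proves $\delta(G)\leq 2\Delta_s(G)$, not the claimed $\delta(G)\leq\Delta_s(G)$. Fortunately the factor can be recovered inside your own construction: the relation $\sim^s$ restricted to the sphere of radius $K$ is connectivity in $G\setminus N^{K-1}[s]$, hence transitive, so concatenating your two witnessing walks at $y_K$ shows $x_K\sim^s z_K$ directly and $d(x_K,z_K)\leq\Delta_s(G)$ rather than the $2\Delta_s(G)$ you get from the triangle inequality. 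This halves your basepoint-$s$ constant to $\Delta_s(G)/2$, and the factor-$2$ loss from the basepoint change then lands exactly on the stated bound $\delta(G)\leq\Delta_s(G)$. You should make both the transitivity step and the basepoint-change step explicit, since the constant in the lemma is exactly what the paper's Appendix~\ref{app:compute-alpha} relies on.
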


\begin{lemma}[Corollary 2 \& Lemma 4 in~\cite{CDNR+12}]\label{lem:cluster-diam-2}
For every vertex $s$ in a graph $G$, if $G$ can be embedded into a tree with additive stretch $\lambda$, then we have $\frac{\Delta_s(G)}{3}-1 \leq \lambda \leq \Delta_s(G) + 2$.
\end{lemma}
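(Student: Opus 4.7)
The lemma bounds the minimum additive stretch $\lambda$ of any tree embedding of $G$ in terms of the BFS-layering invariant $\Delta_s(G)$: it has an upper bound (existence of a good tree) and a lower bound (no tree can do much better). I would handle them separately.

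\textbf{Upper bound $\lambda \leq \Delta_s(G) + 2$.} The plan is to exhibit a specific tree embedding, namely a carefully chosen BFS tree $T_s$ rooted at $s$. Running BFS from $s$ produces the layering $L_0 = \{s\}, L_1, L_2, \ldots$; within each layer $L_j$, I would partition the vertices into ``clusters'' using the equivalence $u \sim^s v$. By the very definition of $\Delta_s(G)$, each cluster has $G$-diameter at most $\Delta_s(G)$. I would then build $T_s$ so that all members of a given layer-$j$ cluster pick their BFS-parents in a common layer-$(j{-}1)$ cluster; the existence of such a ``common parent cluster'' follows because a single backward BFS step preserves connectivity up to the inclusion of boundary vertices. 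For an arbitrary pair $u,v$, let $a$ be their $T_s$-LCA and let $q$ be a lowest-layer vertex on some shortest $uv$-path in $G$. A short accounting gives
\[
d_{T_s}(u, v) - d_G(u, v) \;\leq\; 2\bigl(d_G(s, q) - d_G(s, a)\bigr),
\]
and the cluster construction forces $a$ and $q$ to sit at nearby layers, so the right-hand side is bounded by $\Delta_s(G) + 2$, with the extra $+2$ absorbing the transition across the boundary between consecutive clusters.

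\textbf{Lower bound $\lambda \geq \Delta_s(G)/3 - 1$.} Take any tree $T$ with $|d_T(x,y) - d_G(x,y)| \leq \lambda$ for all $x,y$, and let $u, v$ realize $\Delta_s(G) = D$, so that $d_G(s,u) = d_G(s,v) = j$ and some $uv$-path $P$ in $G$ avoids $N^{j-1}[s]$. The key step is choosing a ``midway'' vertex $w \in P$: since consecutive vertices of $P$ differ in $d_G(u,\cdot)$ by at most one, and this quantity varies from $0$ to $D$ along $P$, I take the first $w \in P$ with $d_G(u,w) = \lceil D/2 \rceil$. The triangle inequality then gives $d_G(v,w) \geq \lfloor D/2 \rfloor$, while $w \in P$ supplies $d_G(s,w) \geq j$. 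I then apply the four-point condition for the tree metric $T$ to $\{s,u,v,w\}$: among the three sums
\[
S_1 = d_T(u,v) + d_T(s,w),\ \ S_2 = d_T(u,s) + d_T(v,w),\ \ S_3 = d_T(u,w) + d_T(v,s),
\]
two must be equal and at least the third. Substituting the additive-stretch bounds, $S_1 \geq D + j - 2\lambda$ (the path-avoidance assumption is precisely what supplies the $j$), whereas $S_2, S_3$ admit upper bounds of the form $j + \tfrac{D}{2} + O(\lambda)$. Matching $S_1$ against the larger of $S_2, S_3$ via the 4-point condition forces $D \leq 3\lambda + O(1)$, which rearranges to $\lambda \geq \Delta_s(G)/3 - 1$.

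\textbf{Main obstacle.} The harder direction is the upper bound: a naive BFS tree can route two cluster-mates at layer $j$ through parents in two distinct clusters at layer $j-1$, which inflates $d_{T_s}$ uncontrollably. The fix --- choosing BFS-parents consistently within a cluster via a representative --- requires checking that adjacent clusters across layers ``interact nicely'' enough for the diameter to grow only by the $+2$ boundary term. The lower bound is more mechanical once the midpoint $w \in P$ has been picked correctly: the factor $3$ in $\Delta_s(G)/3$ then comes out as the count of additive-stretch substitutions appearing in the dominant inequality produced by the 4-point condition.
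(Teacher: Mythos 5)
You should first note that the paper itself does not prove this lemma: it is quoted verbatim from~\cite{CDNR+12} (its Corollary~2 and Lemma~4), so there is no in-paper argument to compare with and your proposal must stand on its own. Both halves of it have genuine gaps. For the upper bound, a BFS \emph{spanning} tree cannot realize the bound, even with your rule that all members of a layer-$j$ cluster take parents in the common parent cluster at layer $j-1$: that rule constrains which \emph{cluster} the parents lie in, not which \emph{vertex}, so the root-paths of two cluster-mates may stay disjoint all the way down to $s$. Take the ladder on $s,w_1,\dots,w_k,w_1',\dots,w_k'$ with edges $sw_1$, $sw_1'$, $w_iw_{i+1}$, $w_i'w_{i+1}'$ and rungs $w_iw_i'$: every layer is a single cluster of diameter $1$, so $\Delta_s(G)=1$, yet the (forced) BFS tree gives $d_{T_s}(w_k,w_k')=2k$ against $d_G(w_k,w_k')=1$; here your LCA $a$ equals $s$ (layer $0$) while $q$ is at layer $k$, so the step ``$a$ and $q$ sit at nearby layers'' fails. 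The construction behind Corollary~2 of~\cite{CDNR+12} is not a spanning tree: one contracts each cluster to a node of the layering tree $\Gamma$ and proves $d_\Gamma(C_u,C_v)\le d_G(u,v)\le d_\Gamma(C_u,C_v)+\Delta_s(G)$, using that the lowest vertex of any $uv$-path lies in a common ancestor cluster, and that two layer-by-layer descents into the least common ancestor cluster land at vertices within $G$-distance $\Delta_s(G)$ of each other.

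For the lower bound, your midpoint $w$ (first vertex of $P$ with $d_G(u,w)=\lceil D/2\rceil$) controls $d_G(u,w)$ and guarantees $d_G(s,w)\ge j$, but it gives no useful \emph{upper} bound on $d_G(v,w)$: the path $P$ is an arbitrary path avoiding $N^{j-1}[s]$, not a shortest one, so it may first head away from $v$, and one only gets $d_G(v,w)\le D+\lceil D/2\rceil$. Hence the claimed bound $S_2\le j+\tfrac{D}{2}+O(\lambda)$ is unsupported and the four-point inequality does not close (and even granting it, matching $D+j-2\lambda\le j+\tfrac{D}{2}+O(\lambda)$ yields a constant strictly worse than $3$ under a two-sided stretch). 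The standard repair selects $w$ by the \emph{tree} metric rather than the graph metric: let $c$ be the median of $u,v,s$ in $T$; since $c$ separates $u$ from $v$ in $T$ and consecutive vertices of $P$ are at $T$-distance at most $1+\lambda$, some $w\in P$ has $d_T(w,c)\le\frac{1+\lambda}{2}$; combining $d_T(s,w)\ge j-O(\lambda)$ (because $w\in P$ avoids $N^{j-1}[s]$) with $d_T(s,c)\le j+O(\lambda)-\tfrac{D}{2}$ then forces $D\le 3\lambda+O(1)$ under the stretch convention of~\cite{CDNR+12}. So the intended skeleton (use the path-avoidance to lower-bound $d(s,w)$, then a tree four-point/median computation) is right, but the crucial choice of $w$ must be made in $T$, not in $G$.
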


We recall that for a prime graph in $Ext_{\alpha}$, $\lambda \leq 3\alpha-1$~\cite{KKM01}.
Therefore, by Lemmas~\ref{lem:cluster-diam-1} and~\ref{lem:cluster-diam-2}, we are left computing a constant-factor approximation of $\Delta_s(G)$ for some suitable choice of start vertex $s$.
We do so in what follows for $s=c$.

\smallskip
\noindent
Our algorithm goes as follows:
\begin{itemize}
    \item We compute the equivalence classes $C_1,C_2,\ldots,C_p$ of $\sim^c$. 
    \item For every $1 \leq j \leq p$, let $Lab(C_j) = \{ i \mid 1 \leq i \leq t, \ N[P_i] \cap C_j \neq \emptyset \}$.
    \item We output $\Delta^*:= \max_{1 \leq j \leq t}\max\{ 2d(v,P_i)+2 \mid v \in C_j, \ i \in Lab(C_j) \}$.
\end{itemize}

\smallskip
\noindent
{\bf Correctness.}
We prove in what follows that we have $\Delta^*/2 - 2 \leq \Delta_c(G) \leq \Delta^*$.

First, we pick some class $C_j$, for $1 \leq j \leq p$, such that $\max\{ d(v,P_i) \mid v \in C_j, \ i \in Lab(C_j) \}$ is maximized. 
By the maximality of $C_j$, we have $\Delta^* = 2 \times \max\{ d(v,P_i) \mid v \in C_j, \ i \in Lab(C_j) \} + 2$.
Let $v \in C_j$ and $i \in Lab(C_j)$ be maximizing $d(v,P_i)$.
By construction, there exists a vertex $u \in C_j \cap N[P_i]$.
Then, $\Delta_c(G) \geq d(u,v) \geq d(v,P_i) - 1 = \Delta^*/2 -2$.

Second, we pick some class $C_{j'}$, for $1 \leq j' \leq p$, such that $\max\{d(u,v) \mid u,v \in C_{j'}\}$ is maximized.
By the maximality of $C_{j'}$, we have $\Delta_c(G) = \max\{d(u,v) \mid u,v \in C_{j'}\}$.
Let $u,v \in C_{j'}$ be maximizing $d(u,v)$.
We pick an arbitrary label $i$, for some $1 \leq i \leq t$, such that $u \in N[P_i]$ (such a label always exists because $H = \bigcup_{i=1}^t P_i$ is a dominating set).
In particular, we have $i \in Lab(C_{j'})$.
Now, let $x,y \in P_i$ satisfy $d(v,x) = d(v,P_i)$ and $y \in N[u]$.
Note that $|d(c,x) - d(c,v)| \leq d(v,P_i)$ and that $|d(c,u)-d(c,y)| \leq 1$.
As a result,
\begin{flalign*}
\Delta_c(G) &= d(u,v) \leq d(v,P_i) + d(x,y) + 1 \\
&= d(v,P_i)+|d(c,x)-d(c,y)|+1 \\
&\leq d(v,P_i) + |d(c,x)-d(c,v)| + |d(c,v)-d(c,y)| + 1\\
&\leq d(v,P_i) + |d(c,x)-d(c,v)| + |d(c,u)-d(c,y)| + 1\\
&\leq 2d(v,P_i) + 2 \leq \Delta^*.
\end{flalign*}

\smallskip
\noindent
{\bf Complexity.}
The equivalence classes of $\sim^c$ can be computed in linear time~\cite{BCD99}.
In order to compute the label-sets $Lab(C_1),Lab(C_2),\ldots,Lab(C_p)$, we proceed as follows:
\begin{itemize}
    \item We set $Lab(v) := \emptyset$ for every vertex $v$.
    \item We scan each shortest path $P_i$, for $1 \leq i \leq t$, sequentially. For each vertex $v \in V(P_i)$, for each vertex $u \in N[v]$, we add label $i$ to $Lab(u)$.
    \item For every $1 \leq j \leq p$, we set $Lab(C_j) := \bigcup_{v \in C_j}Lab(v)$.
\end{itemize}
The runtime of this above procedure is in ${\cal O}(t m + t n) = {\cal O}(\alpha m)$.

Then, for every fixed $P_i$, $1 \leq i \leq t$, a standard modification of BFS allows us to compute all distances $d(v,P_i), \ v \in V$ in linear time.
In particular, the total runtime in order to compute all distances $d(v,P_i)$, for $v \in V$ and $1 \leq i \leq t$, is in ${\cal O}(tm) = {\cal O}(\alpha m)$.
Now, for every $1 \leq j \leq p$, we can compute $\max\{ d(v,P_i) \mid v \in C_j, \ i \in Lab(C_j) \}$ in ${\cal O}(t|C_j|) = {\cal O}(\alpha |C_j|)$ time.
Recall that $\sum_{j=1}^p |C_j| = n$.
Overall, the total runtime of the algorithm is in ${\cal O}(\alpha m)$.

\end{document}